\newif\if@qeded
\def\qed{\hfill$\Box$\global\@qededtrue}
\def\qedneeded{\global\@qededfalse}
\def\qedifneeded{\if@qeded\else\qed\fi}
\newtheorem{defi}{Definition}
\newtheorem{theo}{Theorem}
\newtheorem{prop}{Proposition}
\newtheorem{lemm}{Lemma}
\newtheorem{coro}{Corollary}
\newtheorem{exam}{Example}
\newtheorem{obse}{Observation}
\newtheorem{post}{Postulate}
\newenvironment{definition}[1]{\begin{defi} \rm \label{df:#1} }{\end{defi}}
\newenvironment{definitionc}[2]{\begin{defi}[#2] \rm \label{df:#1} }{\end{defi}}
\newenvironment{theorem}[1]{\begin{theo} \rm \label{thm:#1} }{\end{theo}}
\newenvironment{proposition}[1]{\begin{prop} \rm \label{pr:#1} }{\end{prop}}
\newenvironment{propositionc}[2]{\begin{prop}[#2] \rm \label{pr:#1} }{\end{prop}}
\newenvironment{lemma}[1]{\begin{lemm} \rm \label{lem:#1} }{\end{lemm}}
\newenvironment{corollary}[1]{\begin{coro} \rm \label{cor:#1} }{\end{coro}}
\newenvironment{example}[1]{\begin{exam} \rm \label{ex:#1} }{\end{exam}}
\newenvironment{observation}[1]{\begin{obse} \rm \label{obs:#1} }{\end{obse}}
\newenvironment{postulate}[1]{\begin{post} \rm \label{post:#1} }{\end{post}}
\newenvironment{postulatec}[2]{\begin{post}[#2] \rm \label{post:#1} }{\end{post}}
\newenvironment{proof}{\qedneeded\begin{trivlist} \item[\hspace{\labelsep}\bf Proof:]}
               {\qedifneeded\end{trivlist}}
\newcommand{\df}[1]{Definition~\ref{df:#1}}
\newcommand{\thm}[1]{Theorem~\ref{thm:#1}}
\newcommand{\pr}[1]{Proposition~\ref{pr:#1}}
\newcommand{\lem}[1]{Lemma~\ref{lem:#1}}
\newcommand{\cor}[1]{Corollary~\ref{cor:#1}}
\newcommand{\ex}[1]{Example~\ref{ex:#1}}
\newcommand{\obs}[1]{Observation~\ref{obs:#1}}
\newcommand{\sect}[1]{Section~\ref{sec:#1}}
\newcommand{\pos}[1]{Postulate~\ref{post:#1}}
\def\@listi{\leftmargin\leftmargini
            \parsep 0\p@ \@plus1.5\p@ \@minus0\p@
            \topsep 2\p@   \@plus2\p@ \@minus2\p@
            \itemsep0\p@ \@plus1.5\p@ \@minus0\p@}
\let\@listI\@listi
\DeclareSymbolFont{cmi}   {OT1}{cmr}{m}{it}
\DeclareMathSymbol\val    {\mathord}{cmi}{118} 
\DeclareMathSymbol\wal    {\mathord}{cmi}{119} 
\newfont{\bbb}{bbm10 scaled 1100}        
\newfont{\bbbs}{bbm10 scaled 900}        
\newcommand{\denote}[1]{\mbox{\bbb [}#1\mbox{\bbb ]}} 
\newcommand{\IO}{\mbox{\bbb O}}          
\newcommand{\IOs}{\mbox{\bbbs O}}        
\newcommand{\IN}{\mbox{\bbb N}}          
\newcommand{\INs}{\mbox{\bbbs N}}        
\newcommand{\IT}{\mbox{\bbb T}}          
\newcommand{\T}{{\rm T}}                 
\newcommand{\N} {{\cal N}}               
\newcommand{\D}{{\cal D}}                
\newcommand{\fL}{{\cal L}}               
\newcommand{\fT}{{\cal T}}               
\newcommand{\bR}{\mathrel{\bf R}}        
\newcommand{\bT}{{\bf T}}                
\newcommand{\bU}{{\bf U}}                
\newcommand{\bV}{{\bf V}}                
\newcommand{\bW}{{\bf W}}                
\newcommand{\bZ}{{\bf Z}}                
\newcommand{\E}{E}                       
\newcommand{\p}{P}                       
\newcommand{\V}{\mathcal{X}}             
\newcommand{\n}{{\sf n}}                 
\newcommand{\bn}{{\sf bn}}               
\newcommand{\fv}{{\it fv}}               
\newcommand{\plat}[1]{\raisebox{0pt}[0pt][0pt]{#1}}     
\newcommand{\dom}{{\it dom}}                            
\newcommand{\range}{{\it range}}                        
\newcommand{\eqa}{\mathrel{\plat{$\stackrel{\alpha}=$}}} 
\newcommand{\asim}{\mathrel{\mbox{${\scriptscriptstyle\bullet}\hspace{-1.15ex}\sim$}}}
\newcommand{\subs}[2]{\{\mathord{\raisebox{2pt}[0pt]{$#1$}\!/\!#2}\}} 
\newcommand{\sbarb}[2]{#1{\downarrow_{#2}}}
\newcommand{\wbarb}[2]{#1{\Downarrow_{#2}}}
\newcommand{\nwbarb}[2]{#1{\not\Downarrow_{#2}}}
\newcommand{\wbb}{\stackrel{\raisebox{-1pt}[0pt][0pt]{$\scriptscriptstyle \bullet$}}{\approx}}
\newcommand{\sbb}{\stackrel{\raisebox{-1pt}[0pt][0pt]{$\scriptscriptstyle \bullet$}}{\sim}}
\newcommand{\bis}[1]{\mathrel{\,		      
	\raisebox{.3ex}{$\underline{\makebox[.7em]{$\leftrightarrow$}}$}
                  \,_{#1}}}
\newcommand{\bbbis}{\stackrel{\makebox[0pt]{\raisebox{-1.5pt}[0pt][0pt]{$\scriptscriptstyle \bullet\hspace{.45pt}$}}}{\bis{}}^\Delta_b}
\def\titlerunning{A Theory of Encodings and Expressiveness}
\title{\titlerunning}
\author{Rob van Glabbeek
\institute{Data61, CSIRO, Sydney, Australia}
\institute{School of Computer Science and Engineering,
University of New South Wales, Sydney, Australia}
\email{rvg@cs.stanford.edu}
}
\begin{document}
\maketitle

\begin{abstract}
This paper proposes a definition of what it means for one system
description language to encode another one, thereby enabling an ordering
of system description languages with respect to expressive power.
I compare the proposed definition with other definitions of encoding and
expressiveness found in the literature, and illustrate it on a well-known
case study: the encoding of the synchronous in the asynchronous $\pi$-calculus.
\end{abstract}

\section{Introduction}

This paper, like \cite{Gorla10a,vG12}, aims at answering the question what it means for one language to
encode another one, and making the resulting definition applicable to order system description
languages like CCS, CSP and the $\pi$-calculus with respect to their expressive power.

To this end it proposes a unifying concept of valid translation between two languages
\emph{up to} a semantic equivalence or preorder.
It applies to languages whose semantics interprets the operators and recursion
constructs as operations on a set of values, called a \emph{domain}.
Languages can be partially ordered by their expressiveness up to the chosen equivalence or preorder
according to the existence of valid translations between them.

The concept of a [valid] translation between system description languages (or \emph{process
calculi}) was first formally defined by Boudol \cite{Bo85}. There, and in most other
related work in this area, the domain in which a system description language is
interpreted consists of the closed expressions from the language itself. In \cite{vG94a} I
have reformulated Boudol's definition, while dropping the requirement that the domain of
interpretation is the set of closed terms. This allows (but does not
enforce) a clear separation of syntax and semantics, in the tradition of universal
algebra.  Nevertheless, the definition employed in \cite{vG94a} only deals with the case
that all (relevant) elements in the domain are denotable as the interpretations of closed
terms. In \cite{vG12} situations are described where such a restriction is undesirable.
In addition, both \cite{Bo85} and \cite{vG94a} require the semantic equivalence $\sim$ under
which two languages are compared to be a congruence for both of them.
This is too severe a restriction to capture many recent encodings
\cite{
Boreale98,        
Nestmann00,       
NestmannP00,      
CarboneM03,       
BPV04,            
BPV05,            
PalamidessiSVV06, 
PN12}.            

In \cite{vG12} I alleviated these two restrictions by proposing two notions of encoding:
\emph{correct} and \emph{valid} translations up to $\sim$. Each of them generalises the
proposals of \cite{Bo85} and \cite{vG94a}. The former drops the
restriction on denotability as well as $\sim$ being a congruence for the whole target language,
but it requires $\sim$ to be a congruence for the source language, as well as for the source's
image within the target. The latter drops both congruence requirements (and allows $\sim$ to be a
preorder rather than an equivalence), but at the expense of requiring denotability by closed terms.
In situations where $\sim$ is a congruence for the source language's image within the target
language \emph{and} all semantic values are denotable, the two notions agree.

The current paper further generalises the work of \cite{vG12} by proposing a new notion of a valid
translation that incorporates the correct and valid translations of \cite{vG12} as special cases.
It drops the congruence requirements as well as the restriction on denotability.

As in \cite{vG12}, my aim is to generalise the concept of a valid translation as much as possible,
so that it is uniformly applicable in many situations, and not just in the world of process
calculi. Also, it needs to be equally applicable to encodability and separation results,
the latter saying that an encoding of one language in another does not exists.
At the same time, I try to derive this concept from a unifying principle,
rather than collecting a set of criteria that justify a number of known
encodability and separation results that are intuitively justified.

\paragraph{Overview of the paper}
\sect{validity} defines my new concept of a valid translation up to a semantic equivalence or
preorder $\asim$.
Roughly, a valid translation up to $\asim$ of one language into another is
a mapping from the expressions in the first language to those in the second
that preserves their meaning, i.e.\ such that the meaning of the translation of
an expression is semantically equivalent to the meaning of the expression being translated. 

\sect{valid-correct} shows that this concept generalises the notion of a correct translation from \cite{vG12}:
a translation is correct up to a semantic equivalence $\sim$ iff it is valid up to $\sim$ and $\sim$
is a congruence for the source language as well as for the image of the source language within the
target language.

Likewise, \sect{respects} shows that my new concept of validity generalises the one of \cite{vG12},
and \sect{fvr} establishes the coincidence of my new validity-based notion of expressiveness with
the one from \cite{vG12} when applying both to languages for which all semantic values are denotable
by closed terms.

One language is said to be at least as expressive as another up to $\asim$
iff there exists a valid translation up to $\asim$ of the latter language into the former.
\sect{hierarchy} shows that the relation ``being at least as expressive as'' is a
preorder on languages.  This expressiveness preorder depends on the choice of $\asim$, and
a coarser choice (making less semantic distinctions) yields a richer preorder of expressiveness
inclusions.

\sect{closed-term} presents the widely used class of \emph{closed-term} languages,
in which the distinction between syntax and semantic is effectively dropped by taking the domain of
values where the language is interpreted in to consist of the closed terms of the language.
\sect{asynpi} illustrates my approach on a well-known case study:
the encoding of the synchronous in the asynchronous $\pi$-calculus.

\sect{congruence closure} discusses the \emph{congruence closure} of a semantic
equivalence for a given language, and remarks that in the presence of operators with infinite arity
it is not always a congruence.
\sect{ccproperty} states a useful congruence closure property for valid translations:
if a translation between two languages exists that is valid up a semantic equivalence $\sim$,
then it is even valid up to an equivalence that
{\leftmargini 12pt
\begin{itemize}
\item on the source language coincides with the congruence closure of $\sim$
\item on the image of the source within the target language also coincides with the congruence closure of $\sim$%
\item melts each equivalence class of the source with exactly one of the target, and vice versa.
\end{itemize}}

\sect{integrating} concludes that the framework established thus far is very suitable for comparing the
expressiveness of languages, but falls short for the purpose of combining language features.
This requires a congruence reflection theorem, provided in \sect{reflect}, for closed-term languages
 and preorders $\asim$ that satisfy some mild sanity requirements:
the postulates formulated in Sections~\ref{sec:standard heads} and~\ref{sec:invariance}.

\sect{compositionality} defines when a translation is \emph{compositional}, and shows that any
valid translation up to $\asim$ can be modified into a compositional translation valid up to
$\asim$, provided the languages and preorders $\asim$ satisfy the sanity requirements of
Sections~\ref{sec:standard heads} and~\ref{sec:invariance}.
Hence, for the purpose of comparing the expressive power of languages,
valid translations between them may be presumed compositional.

\sect{preservation} contemplates a more general, and arguably also simpler, concept of a valid translation,
than the one of \sect{validity}. However, it lacks appealing properties of the latter.

\sect{barbed} speculates on suitable choices for $\asim$ when comparing the expressiveness of
process calculi.
Sections~\ref{sec:full abstraction}--\ref{sec:related} compare my approach with
\emph{full abstraction}, and with the approach of Gorla~\cite{Gorla10a}.

\section{Languages, correct and valid translations, and expressiveness}\label{sec:validity}

A language consists of \emph{syntax} and \emph{semantics}.
The syntax determines the valid expressions in the language.
The semantics is given by a mapping $\denote{\ \ }$ that associates
with each valid expression its meaning, which can for instance be an
object, concept or statement.

Following \cite{vG12}, I represent a language $\fL$ as a pair
$(\IT_{\fL},\denote{\ \ }_\fL)$ of a set $\IT_{\fL}$ of valid expressions in
$\fL$ and a mapping $\denote{\ \ }_\fL:\IT_\fL\rightarrow \D_\fL$
from $\IT_\fL$ in some set of meanings $\D_\fL$.

\begin{definitionc}{translation}{\cite{vG12}}
A \emph{translation} from a language $\fL$ into a language $\fL'$ is a
mapping $\fT: \IT_\fL \rightarrow \IT_{\fL'}$. 
\end{definitionc}
In this paper, I consider single-sorted languages $\fL$ in which
\emph{expressions} or \emph{terms} are built from variables (taken
from a set $\V$) by means of operators (including constants) and
possibly recursion constructs. For such languages the meaning $\denote{\E}_\fL$
of an $\fL$-expression $\E$ is a function of type
$(\V\!\!\rightarrow\bV)\rightarrow\bV$ for a given sets of \emph{values} $\bV$.
It associates a value $\denote{E}_\fL(\rho) \mathbin\in\bV$ to $E$ that depends on
the choice of a \emph{valuation} $\rho\!:\V\!\!\!\rightarrow\!\bV$. The
valuation associates a value from $\bV$ with each variable.

Since normally the names of variables are irrelevant and the
cardinality of the set of variables satisfies only the requirement
that it is ``sufficiently large'', no generality is lost by insisting
that two (system description) languages whose expressiveness is being
compared employ the same set of (process) variables.\linebreak[2]
On the other hand, two languages $\fL$ and $\fL'$ may be interpreted in
different domains of values $\bV$ and $\bV'\!$.

Let $\fL$ and $\fL'$ be two languages of the type considered above,
with semantic mappings\\[1ex]
\mbox{}\hfill
$\denote{\ \ }_{\fL}:\IT_{\fL} \rightarrow ((\V\rightarrow\bV)\rightarrow\bV)$
\hfill
and
\hfill
$\denote{\ \ }_{\fL'}:\IT_{\fL'} \rightarrow ((\V\rightarrow\bV')\rightarrow\bV')$.
\hfill\mbox{}\\[1ex]
In order to compare these languages
w.r.t.\ their expressive power I need a semantic equivalence or
preorder $\asim$ that is defined on a unifying domain of interpretation $\bZ$,
with $\bV,\bV' \subseteq\bZ$.\footnote{I will be chiefly interested in the case that $\asim$ is an
  equivalence---hence the choice of a symbol that looks like $\sim$. However, to establish
  \obs{identity} and \thm{composition} below, it suffices to know that $\asim$ is reflexive and
  transitive. My convention is that the dotted end of \mbox{$\asim$} points to a translation
  and the other end to an original---without offering an intuition for the possible asymmetry.}
Intuitively, $\val'\asim \val$ with $\val\in\bV$ and $\val'\in\bV'$ means that values $\val$ and
$\val'$ are sufficiently alike for our purposes, so that one can accept a translation of an
expression with meaning $\val$ into an expression with meaning $\val'$.

\emph{Correct} and a \emph{valid} translations up to a semantic equivalence or preorder
$\asim$ were introduced in \cite{vG12}. Here I redefine these concepts in terms of a new concept of
\emph{correctness w.r.t.\ a semantic translation}.

\begin{definition}{semantic translation}
Let $\bV$ and $\bV'$ be domains of values in which two languages $\fL$ and $\fL'$ are interpreted.\\
A \emph{semantic translation} from $\bV$ into $\bV'$ is a relation $\mathord{\bR}\subseteq\bV'\times\bV$
such that $\forall \val\in\bV.\,\exists \val'\in\bV'.\,\val'\bR \val$.
\end{definition}
Thus every semantic value in $\bV$ needs to have a counterpart in $\bV'$---possibly multiple ones.\\
For valuations $\eta:\V\rightarrow\bV'$, $\rho:\V\rightarrow\bV$ I write $\eta\bR\rho$ iff
$\eta(X)\bR\rho(X)$ for each $X\in \V$.
\begin{definition}{correct R}
A translation $\fT:\IT_\fL\rightarrow\IT_{\fL'}$ is \emph{correct} w.r.t.\ a semantic translation $\bR$
if $\denote{\fT(\E)}_{\fL'}(\eta) \bR \denote{\E}_\fL(\rho)$ for all expressions $\E\in \IT_\fL$
and all valuations $\eta:\V\rightarrow\bV'$ and $\rho:\V\rightarrow\bV$ with $\eta\bR\rho$.
\end{definition}
Thus $\fT$ is correct iff the meaning of the translation of an expression $E$ is a counterpart
of the meaning of $E$, no matter what values are filled in for the variables,
provided that the value filled in for a given variable $X$ occurring in the translation $\fT(E)$ is
a counterpart of the value filled in for $X$ in $E$.
\begin{definition}{correct}
A translation $\fT:\IT_\fL\rightarrow\IT_{\fL'}$ is \emph{correct} up to $\asim$ iff
$\asim$ is an equivalence, the restriction $\bR$ of $\asim$ to $\bV'\times\bV$ is a semantic
translation, and $\fT$ is correct w.r.t.\ $\bR$.
\end{definition}

\begin{definition}{valid}
A translation $\fT$ is \emph{valid} up to $\asim$ iff
it is correct w.r.t.\ some semantic translation $\mathord{\bR} \subseteq \mathord{\asim}$.
\\
Language $\fL'$ is at least as \emph{expressive} as $\fL$ up to $\asim$ if a
translation valid up to $\asim$ from $\fL$ into $\fL'$ exists.
\end{definition}

\section{Correct = valid + congruence}\label{sec:valid-correct}

In \cite{vG12} the concept of a correct translation up to $\sim$ was defined, for 
$\sim$ a semantic equivalence on $\bZ$.\linebreak[3]
Here two valuations $\eta,\rho:\V\rightarrow\bZ$ are called \emph{$\sim$-equivalent},
$\eta\sim\rho$, if $\eta(X)\sim\rho(X)$ for each $X\in \V$.
In case there exists a $\val\in \bV$ for which there is no
$\sim$-equivalent $\val'\in \bV'$, there is no correct translation from
$\fL$ into $\fL'$ up to $\sim$.  Namely, the semantics of $\fL$
describes, among others, how any $\fL$-operator evaluates the argument
value $\val$, and this aspect of the language has no counterpart in $\fL'$.
Therefore, \cite{vG12} requires\vspace{-1.5ex}
\begin{equation}\label{related}
\forall \val\in \bV.~ \exists \val'\in \bV'.~ \val'\sim \val.
\end{equation}
This implies that for any valuation $\rho:\V\rightarrow\bV$ there is a
valuation $\eta:\V\rightarrow\bV'$ with $\eta\sim\rho$.

\begin{definitionc}{correct translation}{\cite{vG12}}
A translation $\fT$ from $\fL$ into $\fL'$ is \emph{correct up to $\sim$} iff
(\ref{related}) holds and\\ $\denote{\fT(\E)}_{\fL'}(\eta) \sim \denote{\E}_\fL(\rho)$ for all
$\E\in \IT_\fL$ and all valuations $\eta:\V\rightarrow\bV'$ and $\rho:\V\rightarrow\bV$ with
$\eta\sim\rho$.
\end{definitionc}
Note that this definition agrees completely with \df{correct}.
Requirement (\ref{related}) above corresponds to $\bR$ being a semantic translation in \df{correct}.

If a correct translation up to $\sim$ from $\fL$ into $\fL'$ exists, then $\sim$ must be a
congruence for $\fL$.

\begin{definition}{congruence}
An equivalence relation $\sim$ is a \emph{congruence} for a language
$\fL$ interpreted in  a semantic domain $\bV$ if
$\denote{E}_\fL(\nu)\sim\denote{E}_\fL(\rho)$
for any $\fL$-expression $E$ and any valuations
$\nu,\rho\!:\V\rightarrow \bV$ with $\nu \sim \rho$.\footnote{This is called a \emph{lean}
  congruence in \cite{vG17b}; in the presence of recursion, stricter congruence requirements are common.
  Those are not needed in this paper.}
\end{definition}

\begin{propositionc}{congruence}{\cite{vG12}}
$\!$If $\fT\!\!$ is a correct translation up to $\sim$ from $\fL$ into $\fL'\!\!$,
then $\sim$ is a congruence for $\fL\!\!$.
\end{propositionc}
The existence of a correct translation up to $\sim$ from $\fL$ into $\fL'$ does not imply
that $\sim$ is a congruence for $\fL'$. However, $\sim$ has the properties of a congruence for
those expressions of $\fL'$ that arise as translations of expressions of $\fL$, when
restricting attention to valuations into $\bU:=\{\val\in\bV'\mid \exists \val\in \bV.~\val'\sim \val\}$.
In \cite{vG12} this called a \emph{congruence for} $\fT(\fL)$.%
\begin{definition}{weak congruence}
Let $\fT: \IT_\fL \rightarrow \IT_{\fL'}$ be a translation from $\fL$ into $\fL'$.
An equivalence $\sim$ on $\bV'$ is a \emph{congruence for} $\fT(\fL)$
if $\denote{\fT(E)}_{\fL'}(\theta)\mathbin\sim\denote{\fT(E)}_{\fL'}(\eta)$
for any $E\mathbin\in\IT_\fL$ and $\theta,\eta\!:\!\V\!\!\!\rightarrow \!\bU$ with $\theta \mathbin\sim \eta$.%
\end{definition}

\begin{propositionc}{weak congruence}{\cite{vG12}}
If $\fT$ is a correct translation up to $\sim$ from $\fL$ into $\fL'$, then $\sim$ is a
congruence for $\fT(\fL)$.
\end{propositionc}
The following theorem tells that the notion of validity proposed in \sect{validity} can be seen as a
generalisation of the notion of correctness from \cite{vG12} that applies to
equivalences (and preorders) $\asim$ that are not necessarily congruences for $\fL$ or $\fT(\fL)$.

\begin{theorem}{valid correct}
A translation $\fT$ from $\fL$ into $\fL'$ is correct up to a semantic equivalence $\sim$ iff
it is valid up to $\sim$ and $\sim$ is a congruence for $\fT(\fL)$.
\end{theorem}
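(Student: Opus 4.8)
The plan is to prove the two directions of the biconditional separately; the forward implication is essentially a matter of unfolding definitions, while the backward implication carries the content. For the forward direction, suppose $\fT$ is correct up to $\sim$. By \df{correct} the restriction of $\mathord{\sim}$ to $\bV'\times\bV$ is then a semantic translation and $\fT$ is correct w.r.t.\ it; since this restriction is contained in $\mathord{\sim}$, \df{valid} immediately yields that $\fT$ is valid up to $\sim$. That $\sim$ is a congruence for $\fT(\fL)$ is exactly \pr{weak congruence}, so nothing further is needed in this direction.

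For the backward direction, assume $\fT$ is valid up to the semantic equivalence $\sim$ and that $\sim$ is a congruence for $\fT(\fL)$. Validity supplies a semantic translation $\mathord{\bR}\subseteq\mathord{\sim}$ with $\fT$ correct w.r.t.\ $\bR$. Since $\sim$ is an equivalence, by \df{correct} it remains to check that the restriction of $\mathord{\sim}$ to $\bV'\times\bV$ is a semantic translation and that $\fT$ is correct w.r.t.\ it. The first point is immediate: each $\val\in\bV$ has an $\bR$-counterpart $\val'\in\bV'$ by \df{semantic translation}, and $\bR\subseteq\mathord{\sim}$ makes this $\val'$ a $\sim$-counterpart of $\val$. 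For the second, I would fix an expression $E$ and valuations $\eta:\V\rightarrow\bV'$ and $\rho:\V\rightarrow\bV$ with $\eta\sim\rho$, and must establish $\denote{\fT(E)}_{\fL'}(\eta)\sim\denote{E}_\fL(\rho)$.

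The difficulty is that $\eta$ is merely $\sim$-related to $\rho$, whereas correctness w.r.t.\ $\bR$ applies only to $\bR$-related valuations. I therefore build a surrogate $\theta:\V\rightarrow\bV'$ by picking, for each variable $X$, a value $\theta(X)\in\bV'$ with $\theta(X)\bR\rho(X)$, which exists because $\bR$ is a semantic translation. Then $\theta\bR\rho$, so correctness w.r.t.\ $\bR$ gives $\denote{\fT(E)}_{\fL'}(\theta)\bR\denote{E}_\fL(\rho)$, whence $\denote{\fT(E)}_{\fL'}(\theta)\sim\denote{E}_\fL(\rho)$. To bridge $\theta$ to the given $\eta$ I note that, for each $X$, both $\eta(X)\sim\rho(X)$ and $\theta(X)\sim\rho(X)$ hold, so by symmetry and transitivity $\eta(X)\sim\theta(X)$; moreover each $\eta(X)$ and $\theta(X)$ is $\sim$-related to $\rho(X)\in\bV$, so both valuations take values in $\bU$. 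This is exactly the hypothesis of \df{weak congruence}, so the assumed congruence for $\fT(\fL)$ yields $\denote{\fT(E)}_{\fL'}(\eta)\sim\denote{\fT(E)}_{\fL'}(\theta)$. Transitivity then delivers $\denote{\fT(E)}_{\fL'}(\eta)\sim\denote{E}_\fL(\rho)$, as required.

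I expect this surrogate-valuation step to be the only real obstacle. Validity by itself guarantees the translation's correctness only along valuations related by the witnessing $\bR$, and the congruence for $\fT(\fL)$ is precisely what lets one transport that correctness to arbitrary $\sim$-related valuations. The subtlety is to insist that $\theta$ be $\bR$-related (not just $\sim$-related) to $\rho$ while simultaneously verifying that both $\eta$ and $\theta$ land in $\bU$, so that \df{weak congruence} becomes applicable; it is here that all of the definitions must interlock exactly.
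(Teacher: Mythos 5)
Your proof is correct and follows essentially the same route as the paper's: the forward direction unfolds Definitions~\ref{df:correct} and~\ref{df:valid} and invokes \pr{weak congruence}, while the backward direction constructs the surrogate valuation $\theta$ with $\theta\bR\rho$, checks $\theta,\eta:\V\rightarrow\bU$ with $\theta\sim\eta$, and chains the congruence for $\fT(\fL)$ with correctness w.r.t.\ $\bR$. If anything, you are slightly more explicit than the paper in verifying that the restriction of $\sim$ to $\bV'\times\bV$ is a semantic translation.
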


\begin{proof}
By Definitions~\ref{df:correct} and ~\ref{df:valid} any translation that is correct up to $\sim$
is surely valid up to $\sim$.

Suppose $\fT$ is valid up to $\sim$ and $\sim$ is a congruence for $\fT(\fL)$.
Then there is a semantic translation $\mathord{\bR}\subseteq \bV' \times \bV$ such that
$\mathord{\bR} \subseteq \mathord{\sim}$ and $\fT$ is correct w.r.t.\ $\bR$.
To establish that $\fT$ is correct up to $\sim$,
let $\E\in \IT_\fL$ and let $\eta:\V\rightarrow\bV'$ and $\rho:\V\rightarrow \bV$ be
valuations with $\eta\sim\rho$.
Let $\theta:\V\rightarrow\bV'$ be a valuation with $\theta \bR \rho$---it exists since $\bR$ is a
semantic translation.
Now $\theta \sim \rho \sim \eta$, using that $\mathord{\bR} \subseteq \mathord{\sim}$, so
$\theta,\eta:\V\rightarrow\bU$ and $\theta\sim\eta$. Hence
$\denote{\fT(\E)}_{\fL'}(\eta) \sim \denote{\fT(\E)}_{\fL'}(\theta) \sim \denote{\E}_\fL(\rho)$,
using that $\sim$ is a congruence for $\fT(\fL)$ and that $\fT$ is correct w.r.t.\ $\bR$.
\end{proof}

\section{A hierarchy of expressiveness preorders}\label{sec:hierarchy}

An equivalence or preorder $\asim$ on a class $\bZ$ is said to be \emph{finer},
\emph{stronger}, or \emph{more discriminating} than another equivalence or
preorder \mbox{${\scriptscriptstyle\bullet}\hspace{-1.15ex}\approx$} on $\bZ$ if $\val \asim \wal
\Rightarrow \val \mathrel{\mbox{${\scriptscriptstyle\bullet}\hspace{-1.15ex}\approx$}} \wal$ for all $\val,\wal\in\bZ$. 
\begin{observation}{hierarchy}
Let $\fT: \IT_\fL \rightarrow \IT_{\fL'}$ be a translation from $\fL$ into $\fL'$, and let
$\asim$ be finer than \mbox{${\scriptscriptstyle\bullet}\hspace{-1.15ex}\approx$}.
If $\fT$ is valid up to $\asim$, then it is also valid up to
\mbox{${\scriptscriptstyle\bullet}\hspace{-1.15ex}\approx$}.
\end{observation}
The quality of a translation depends on the choice of the equivalence or
preorder up to which it is valid. Any two languages are equally expressive up to
the universal equivalence, relating any two processes.
Hence, the equivalence or preorder needs to be chosen carefully to match the
intended applications of the languages under comparison. In general,
as shown by \obs{hierarchy}, using a finer equivalence or preorder yields a
stronger claim that one language can be encoded in another.
On the other hand, when separating two languages $\fL$ and $\fL'$ by
showing that $\fL$ \emph{cannot} be encoded in $\fL'$, a coarser
equivalence or preorder yields a stronger claim.

\begin{observation}{identity}
The identity is a valid translation up to any preorder from any
language into itself.
\end{observation}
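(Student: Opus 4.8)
The plan is to exhibit an explicit semantic translation $\bR$ contained in the given preorder $\asim$ and verify that the identity translation is correct with respect to it. First I would take $\fL' = \fL$, so that $\bV' = \bV$ and both semantic mappings coincide with $\denote{\ \ }_\fL$. The natural candidate for $\bR$ is the restriction of $\asim$ to $\bV \times \bV$; I would check that this is a semantic translation in the sense of \df{semantic translation}, i.e.\ that $\forall \val\in\bV.\,\exists \val'\in\bV.\,\val'\bR\val$. Since $\asim$ is a preorder it is reflexive, so $\val\bR\val$ holds for every $\val\in\bV$, which immediately supplies the required witness.

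Having established that $\bR := \mathord{\asim}\cap(\bV\times\bV)$ is a semantic translation with $\mathord{\bR}\subseteq\mathord{\asim}$, the remaining task is to show that the identity translation $\fT = \mathrm{id}$ is correct with respect to $\bR$ in the sense of \df{correct R}. Unfolding the definition, I need $\denote{\fT(\E)}_\fL(\eta)\bR\denote{\E}_\fL(\rho)$ for all $\E\in\IT_\fL$ and all valuations $\eta,\rho:\V\rightarrow\bV$ with $\eta\bR\rho$. Because $\fT$ is the identity, $\fT(\E)=\E$, so the left-hand meaning is just $\denote{\E}_\fL(\eta)$. The key observation is that the hypothesis $\eta\bR\rho$ means $\eta(X)\asim\rho(X)$ for every variable $X$; I want to conclude that the resulting values $\denote{\E}_\fL(\eta)$ and $\denote{\E}_\fL(\rho)$ stand in the relation $\asim$.

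The main obstacle I anticipate is precisely this last inference: in general, $\eta\asim\rho$ does not force $\denote{\E}_\fL(\eta)\asim\denote{\E}_\fL(\rho)$, because that is exactly the congruence property (\df{congruence}), which $\asim$ is \emph{not} assumed to satisfy. So the naive argument of filling in pointwise-related valuations does not work. The resolution is to avoid requiring congruence by choosing $\eta$ and $\rho$ to be \emph{equal}: I would restrict attention to the case $\eta = \rho$, which is legitimate because correctness w.r.t.\ $\bR$ only constrains pairs $(\eta,\rho)$ with $\eta\bR\rho$, and I am free to verify the condition on all such pairs. For an arbitrary pair with $\eta\bR\rho$ I instead argue directly: I need $\denote{\E}_\fL(\eta)\asim\denote{\E}_\fL(\rho)$, and here I must reconsider—since no congruence is available, the cleanest route is to note that the statement claims validity up to $\asim$, which by \df{valid} requires only \emph{some} semantic translation inside $\asim$. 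Hence I would take the finer relation $\bR := \{(\val,\val)\mid \val\in\bV\}$, the identity relation on $\bV$, which is contained in $\asim$ by reflexivity and is a semantic translation for the same reason.

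With this diagonal choice $\bR = \{(\val,\val)\mid \val\in\bV\}$, the hypothesis $\eta\bR\rho$ forces $\eta = \rho$, and then $\denote{\fT(\E)}_\fL(\eta) = \denote{\E}_\fL(\eta) = \denote{\E}_\fL(\rho)$, so $\denote{\fT(\E)}_\fL(\eta)\bR\denote{\E}_\fL(\rho)$ holds trivially since $\bR$ contains all diagonal pairs. Thus $\fT = \mathrm{id}$ is correct w.r.t.\ $\bR$, and since $\mathord{\bR}\subseteq\mathord{\asim}$, by \df{valid} it is valid up to $\asim$. This sidesteps the congruence obstacle entirely by keeping the two valuations identical, which is the real insight: the identity translation needs no congruence assumption precisely because it can be matched against the identity semantic translation rather than the full restriction of $\asim$.
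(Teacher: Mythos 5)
Your final argument is correct and is exactly the intended one: the paper gives no explicit proof of this observation, but its footnote remarks that reflexivity of $\asim$ suffices, which is precisely what your choice of the diagonal relation $\bR=\{(\val,\val)\mid\val\in\bV\}$ exploits. The initial detour through $\bR=\mathord{\asim}\cap(\bV\times\bV)$ (correctly abandoned because it would demand a congruence property) is harmless but unnecessary; the diagonal choice settles the matter in two lines.
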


\begin{theorem}{composition}
If valid translations up to $\asim$ exists from $\fL_1$ into $\fL_2$ and from $\fL_2$
into $\fL_3$, then there is a valid translation up to $\asim$ from $\fL_1$ into $\fL_3$.
\end{theorem}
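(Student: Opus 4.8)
The plan is to compose the two semantic translations witnessing validity. Let $\fT_1:\IT_{\fL_1}\rightarrow\IT_{\fL_2}$ and $\fT_2:\IT_{\fL_2}\rightarrow\IT_{\fL_3}$ be translations valid up to $\asim$, interpreted in domains $\bV_1$, $\bV_2$, $\bV_3$. By \df{valid} there are semantic translations $\mathord{\bR_1}\subseteq\bV_2\times\bV_1$ and $\mathord{\bR_2}\subseteq\bV_3\times\bV_2$ with $\mathord{\bR_1}\subseteq\mathord{\asim}$ and $\mathord{\bR_2}\subseteq\mathord{\asim}$, such that $\fT_1$ is correct w.r.t.\ $\bR_1$ and $\fT_2$ is correct w.r.t.\ $\bR_2$. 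I would take as candidate translation the composition $\fT_2\circ\fT_1:\IT_{\fL_1}\rightarrow\IT_{\fL_3}$, and as candidate semantic translation the relational composition $\mathord{\bR}:=\mathord{\bR_2}\circ\mathord{\bR_1}\subseteq\bV_3\times\bV_1$, defined by $v_3\bR v_1$ iff $v_3\bR_2 v_2$ and $v_2\bR_1 v_1$ for some $v_2\in\bV_2$. It then remains to verify the three ingredients of \df{valid}.

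First, $\bR$ is a semantic translation: given $v_1\in\bV_1$, since $\bR_1$ is a semantic translation there is a $v_2\in\bV_2$ with $v_2\bR_1 v_1$, and since $\bR_2$ is a semantic translation there is a $v_3\in\bV_3$ with $v_3\bR_2 v_2$; hence $v_3\bR v_1$. Second, $\mathord{\bR}\subseteq\mathord{\asim}$: if $v_3\bR v_1$ via an intermediate $v_2$, then $v_3\asim v_2$ and $v_2\asim v_1$ because $\mathord{\bR_1},\mathord{\bR_2}\subseteq\mathord{\asim}$, so $v_3\asim v_1$ by transitivity of $\asim$. These two steps are routine, using only the defining property of semantic translations and the transitivity of $\asim$ (which is all the footnote guarantees).

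The substantive step is showing $\fT_2\circ\fT_1$ correct w.r.t.\ $\bR$. Fix $\E\in\IT_{\fL_1}$ and valuations $\eta:\V\rightarrow\bV_3$, $\rho:\V\rightarrow\bV_1$ with $\eta\bR\rho$. For each variable $X$ the relation $\eta(X)\bR\rho(X)$ supplies a witness in $\bV_2$; selecting one for every $X$ yields a valuation $\theta:\V\rightarrow\bV_2$ with $\eta\bR_2\theta$ and $\theta\bR_1\rho$. Correctness of $\fT_1$ w.r.t.\ $\bR_1$ gives $\denote{\fT_1(\E)}_{\fL_2}(\theta)\bR_1\denote{\E}_{\fL_1}(\rho)$, and correctness of $\fT_2$ w.r.t.\ $\bR_2$, applied to the expression $\fT_1(\E)\in\IT_{\fL_2}$ with the valuations $\eta\bR_2\theta$, gives $\denote{\fT_2(\fT_1(\E))}_{\fL_3}(\eta)\bR_2\denote{\fT_1(\E)}_{\fL_2}(\theta)$. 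Chaining these through $\theta$ yields $\denote{(\fT_2\circ\fT_1)(\E)}_{\fL_3}(\eta)\bR\denote{\E}_{\fL_1}(\rho)$ by the definition of $\bR$, as required.

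The only delicate point is the construction of the intermediate valuation $\theta$: from the pointwise relation $\eta\bR\rho$ I must extract a single $\bV_2$-valued valuation mediating between $\eta$ and $\rho$ \emph{simultaneously} for all variables. This is precisely where the relational (rather than functional) character of semantic translations matters, and it is carried by a per-variable choice of a witness $v_2$, i.e.\ a routine appeal to choice over the variable set $\V$. Everything else is bookkeeping, so I expect no further obstacle.
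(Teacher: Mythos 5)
Your proof is correct and follows essentially the same route as the paper's: compose the syntactic translations, take the relational composition $\bR_2\circ\bR_1$ as the witnessing semantic translation, and build the intermediate $\bV_2$-valued valuation pointwise to chain the two correctness statements. If anything, your explicit use of $\bR_1$ and $\bR_2$ (rather than $\asim$) in the final chain states the correctness claim slightly more precisely than the paper's displayed formula.
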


\begin{proof}
For $i=1,2,3$ let $\denote{\ \ }_{\fL_i}:\IT_{\fL_i} \rightarrow ((\V\rightarrow\bV_i)\rightarrow\bV_i)$,
and for $k=1,2$ let $\fT_k:\IT_{\fL_k}\rightarrow\IT_{\fL_{k+1}}$ be valid translations up to
$\asim$ from $\fL_k$ to $\fL_{k+1}$, based on semantic translations
$\mathbin{\bR_k}\subseteq \bV_{\fL_{k+1}}\times\bV_{\fL_{k}}$. I will show that the translation
$\fT_2\circ\fT_1:\IT_{\fL_1}\rightarrow\IT_{\fL_3}$ from $\fL_1$ into $\fL_3$,
given by $\fT_2\circ\fT_1(E)=\fT_2(\fT_1(E))$, is valid up to $\asim$.
To this end, let $\mathord{\bR_2\circ\bR_1}\subseteq \bV_3\times\bV_1$ be the relation
given by $\val_3 \bR_2\circ\bR_1 \val_1$ iff \mbox{$\exists \val_2: \val_3 \bR_2 \val_2 \wedge \val_2
\bR_1 \val_1$}---\linebreak[1]it is again a semantic translation, and satisfies
$\mathord{\bR_2\circ\bR_1}\subseteq\mathord{\asim}$, using the transitivity of $\asim$.
Now let $\E\in \IT_{\fL_1}$, $\rho:\V\rightarrow\bV_1$ and $\theta:\V\rightarrow\bV_3$, with $\theta\bR_2\circ\bR_1\rho$.
Then there is a valuation $\eta:\V\rightarrow\bV_2$ with $\theta\bR_2\eta\bR_1\rho$.
Hence $\denote{\fT_2(\fT_1(\E))}_{\fL_3}(\theta) \asim
\denote{\fT_1(\E)}_{\fL_2}(\eta) \asim \denote{\E}_{\fL_1}(\rho)$,
so $\fT_2\circ\fT_1$ is correct w.r.t.\ $\bR_2\circ\bR_1$.
\end{proof}
\thm{composition} and \obs{identity} show that the relation
``being at least as expressive as up to $\asim$'' is a preorder on languages.

\section{Closed-term languages}\label{sec:closed-term}

The languages considered in this paper feature  \emph{variables}, \emph{operators} of \emph{arity}
$n\in\IN$, and/or other constructs. The set $\IT_\fL$ of $\fL$-expressions is inductively defined by:
\begin{itemize}
\item $X\in\IT_\fL$ for each variable $X\in\V$,
\item $f(E_1,\ldots,E_n)\in\IT_\fL$ for each $n$-ary operator $f$ and expressions $E_i\in\IT_\fL$ for $i=1,\ldots,n$,
\item and clauses for the other constructs, if any.
\end{itemize}
Examples of other constructs are the infinite summation operator $\sum_{i\in I}E_i$ of CCS, which
takes arbitrary many arguments, or the recursion construct $\mu X.E$, that has one argument, but
\emph{binds} all occurrences of $X$ in that argument.

In general a construct has a number (possibly infinite) of argument expressions and it may bind
certain variables within some of its arguments---the \emph{scope} of the binding.
An occurrence of a variable $X$ in an expression is \emph{bound} if it occurs within the scope of a
construct that binds $X$, and \emph{free} otherwise.

The semantics of such a language is given, in part, by a domain of
values $\bV$, and an interpretation of each $n$-ary operator $f$ of
$\fL$ as an $n$-ary operation $f^\bV: \bV^n\rightarrow \bV$ on $\bV$.
Using the equations $$\denote{X}_\fL(\rho) = \rho(X) \qquad \mbox{and} \qquad
\denote{f(E_1,\ldots,E_n)}_\fL(\rho) = f^\bV(\denote{E_1}_\fL(\rho),\ldots,\denote{E_n}_\fL(\rho))$$
this allows an inductive definition of the meaning $\denote{\E}_\fL$
of an $\fL$-expression $\E$.
Moreover, $\denote{E}_\fL(\rho)$ only depends on the restriction of $\rho$ to
the set $\fv(E)$ of variables occurring free in $\E$.

The set $\T_\fL \subseteq \IT_\fL$ of \emph{closed terms} of $\fL$ consists of those $\fL$-expressions $E\in\IT_\fL$
with $\fv(E)=\emptyset$. If $\p\in\T_\fL$ and $\bV\neq\emptyset$ then
$\denote{\p}_\fL(\rho)$ is independent of the choice of $\rho\!:\!\V\!\rightarrow\bV$, and hence
written $\denote{\p}_\fL$.

\begin{definition}{substitution}
A \emph{substitution} in $\fL$ is a partial function $\sigma:\V\rightharpoonup\IT_\fL$ from the
variables to the $\fL$-expressions. For a given $\fL$-expression $E\in\IT_\fL$,
$E[\sigma]\in\IT_\fL$ denotes the $\fL$-expression $E$ in which each free occurrence of a
variable $X\in\dom(\sigma)$ is replaced by $\sigma(X)$, while renaming bound variables in $E$
so as to avoid a free variable $Y$ occurring in an expression $\sigma(X)$ ending up being
bound in $E[\sigma]$.\footnote{In languages with multiple kinds of binding, such as for process
  variables $X\in \V$ and for names $x\in\N$ in the $\pi$-calculus below, all types of bound
  variables may be renamed.}\linebreak[3]
 A substitution is \emph{closed} if it has the form $\sigma:\V\rightarrow\T_\fL$.
\end{definition}
An important class of languages used in concurrency theory are the ones
where the distinction between syntax and semantic is effectively dropped by taking $\bV=\T_\fL$,
i.e.\ where the domain of values where the language is interpreted in consists of the closed terms
of the language. Here a valuation is the same as a closed substitution, and $\denote{E}_\fL(\rho)$
for $E\in\IT_\fL$ and $\rho:\V\rightarrow \T_\fL$ is defined to be $E[\rho]\in\T_\fL$.
I will call such languages \emph{closed-term} languages.

\section[Example: translating a synchronous into an asynchronous pi-calculus]
        {Example: translating a synchronous into an asynchronous $\pi$-calculus}\label{sec:asynpi}

As an illustration of the concepts introduced above, consider the
$\pi$-calculus as presented by Milner in \cite{Mi92}, i.e., the one of
Sangiorgi and Walker \cite{SW01book} without matching, $\tau$-prefixing, and choice.

Given a set of \emph{names } $\N$, the set $\IT_\pi$ of \emph{process expressions} or \emph{terms} $E$
of the calculus is given by  $$E ::= X ~~\mid~~ \textbf{0}  ~~\mid~~ \bar xy.E ~~\mid~~ x(z).E ~~\mid~~ E|E' ~~\mid~~ (\nu z)E ~~\mid~~ !E$$
with $x,y,z$ ranging over $\N$, and $X$ over $\V$, the set of \emph{process variables}.
Process variables are not considered in \cite{SW01book}, although they are common in languages like
CCS \cite{Mi90ccs} that feature a recursion construct. Since process variables form a central part
of my notion of a valid or correct translation, here they have simply been added. This works generally.
In \sect{compositionality} I show that for the purpose of accessing whether one language is as
expressive as another, translations between them can be assumed to be compositional.
This important result would be lost if process variables were dropped from the language.
In that case compositionality would need to be stated as a separate requirement for valid translations.

Closed process expressions are called \emph{processes}.
The $\pi$-calculus is usually presented as a closed-term language, in that the semantic value
associated with a closed term is simply itself. Yet, the real semantics is given by a reduction
relation between processes, defined below.

\begin{definition}{structural congruence}
An occurrence of a name $z$ in $\pi$-calculus process $P\in\T_\pi$ is \emph{bound} if it occurs
within a subexpression $x(z).P'$ or $(\nu z)P'$ of $P$; otherwise it is \emph{free}.
Let $\n(P)$ (resp.\ $\bn(P)$) be the set of names occurring (bound) in $P\in \T_\pi$.

\emph{Structural congruence}, $\equiv$, is the smallest congruence relation on processes satisfying
\[
\begin{array}[b]{r@{~\equiv~}l@{\qquad}r@{~\equiv~}l@{\qquad}r@{~\equiv~}l@{~~\mbox{if}~ w\notin\n(P)}}
P_1 | (P_2 | P_3) & (P_1 | P_2) | P_3 &
!P & P | !P &
(\nu w) (P | Q) & P | (\nu w)Q \\
P_1 | P_2 & P_2 | P_1 &
(\nu z) \textbf{0} & \textbf{0} &
x(z).P & x(w).P\subs{w}{z} \\
P | \textbf{0} & P &
(\nu z)(\nu w)P & (\nu w)(\nu z) P &
(\nu z)P & (\nu w)P\subs{w}{z}\\
\end{array}.
\]
Here $P\subs{w}{z}$ denotes the process obtained by replacing each free occurrence of $z$ in $P$ by $w$.

\end{definition}

\begin{definition}{reduction}
The \emph{reduction relation}, ${\rightarrow}\subseteq \T_\pi \times \T_\pi$, is generated by the
following rules.
\[
\frac{z\notin\bn(Q)}{\bar xz.P | x(y).Q \rightarrow P|Q\subs{z}{y}} \qquad
\frac{P \rightarrow P'}{P|Q \rightarrow P'|Q} \qquad
\frac{P \rightarrow P'}{(\nu z)P \rightarrow (\nu z)P'} \qquad
\frac{Q \equiv P \quad P \rightarrow P' \quad P' \equiv Q'}{Q \rightarrow Q'}
\]
\end{definition}
Let $\Longrightarrow$ be the reflexive and transitive closure of $\rightarrow$.
The observable behaviour of $\pi$-calculus processes is often stated in terms of the outputs
they can produce (abstracting from the value communicated on an output channel).

\begin{definition}{barbs}
Let $x\in\N$. A process $P$ has a \emph{strong output barb} on $x$, notation $\sbarb{P}{\bar x}$, if
$P$ can perform an output action $\bar xz$. This is defined inductively:
\[
\sbarb{(\bar xz.(P))}{\bar x} \qquad
\frac{\sbarb{P}{\bar x}}{\sbarb{(P|Q)}{\bar x}} \qquad
\frac{\sbarb{Q}{\bar x}}{\sbarb{(P|Q)}{\bar x}} \qquad
\frac{\sbarb{P}{\bar x} \quad x\neq z}{\sbarb{((\nu z)P)}{\bar x}} \qquad
\frac{\sbarb{P}{\bar x}}{\sbarb{(!P)}{\bar x}}
\]
A process $P$ has a \emph{weak output barb} on $x$, notation $\wbarb{P}{\bar x}$, if
there is a $P'$ with $P \Longrightarrow \wbarb{P'}{\bar x}$.
\end{definition}
A common semantic equivalence applied in the $\pi$-calculus is \emph{weak barbed congruence} \cite{MilS92,SW01,SW01book}.
\begin{definition}{barbed congruence}
\emph{Weak (output) barbed bisimilarity} is the largest symmetric relation ${\wbb}\subseteq \T_\pi \times \T_\pi$
such that
\begin{itemize}
\item $P \wbb Q$ and $\sbarb{P}{\bar x}$ implies $\wbarb{Q}{\bar x}$, and
\item $P \wbb Q$ and $P \Longrightarrow P'$ implies $Q \Longrightarrow Q'$ for some $Q'$ with $P' \wbb Q'$.
\end{itemize}
\emph{Weak barbed congruence}, $\cong^c$, is the largest congruence included in $\wbb$.
\end{definition}
Often \emph{input barbs}, defined similarly, are included in the definition of weak barbed bisimilarity \cite{SW01book}.
This is known to induce the same notion of weak barbed congruence \cite{SW01book}.
Another technique for defining weak barbed congruence is to use a barb, or set of barbs, external to
the language under investigation, that are added to the language as constants, similar to the theory
of testing of Hennessy and De Nicola~\cite{DH84}---see \sect{barbed} for details. This method is
useful for  languages with a reduction semantics that do not feature a clear notion of barb, or
where there is ambiguity in which barbs should be counted and which not, or for comparing languages
with different kinds of barb.

\begin{example}{asynchronous congruence}
$\bar x z.\textbf{0} \not\cong^c (\nu u)(\bar x u .\textbf{0}| u(v). \bar v z. \textbf{0})$.
For let $E := X | x(u).\bar u v.\textbf{0}$ with $\zeta(X)=(\nu u)(\bar x u.\textbf{0} | u(v). \bar v z. \textbf{0})$
and $\rho(X)=\bar x z.\textbf{0}$.
Then $E[\zeta] \rightarrow (\nu u)\big(u(v). \bar v z. \textbf{0} | \bar u v.\textbf{0}\big) \rightarrow \sbarb{(\bar v z. \textbf{0})}{\bar v}$
but $\nwbarb{(E[\rho])}{\bar v}$.
\end{example}

\noindent
The asynchronous $\pi$-calculus, as introduced by Honda \& Tokoro in \cite{HT91} and by Boudol in
\cite{Bo92}, is the sublanguage $\rm a\pi$ of the fragment $\pi$ of the $\pi$-calculus presented above where all
subexpressions $\bar x y.E$ have the form $\bar x y.\textbf{0}$. \emph{Asynchronous barbed congruence},
$\cong^c_{\rm a}$, is the largest congruence \emph{for the asynchronous $\pi$-calculus} included in $\wbb$.
Since $a\pi$ is a sublanguage of $\pi$, $\cong^c_{\rm a}$ is at least as coarse an equivalence as $\cong^c$,
i.e.\ ${\cong^c} \subseteq {\cong^c_{\rm a}}$. The inclusion is strict, since 
$!x(z).\bar x z.\textbf{0} \cong^c_{\rm a} \textbf{0}$, yet $!x(z).\bar x z.\textbf{0} \not\cong^c \textbf{0}$ \cite{SW01book}.
Since all expressions used in \ex{asynchronous congruence} belong to $\rm a\pi$, one even has 
$\bar x z.\textbf{0} \not\cong^c_{\rm a} (\nu u)(\bar x u .\textbf{0}| u(v). \bar v z. \textbf{0})$.

Boudol \cite{Bo92} defined a translation $\fT$ from $\pi$ to $\rm a\pi$ inductively as follows:
\[
\begin{array}{rcll}
\fT(X)            &=& X & \mbox{for}~X\in\V\\
\fT(\textbf{0})   &=& \textbf{0} \\
\fT(\bar{x}z.P)   &=& (u)(\bar{x}u | u(v).(\bar{v}z | \fT(P))) & \mbox{choosing}~u,v \notin\n(P),~ u\neq v\\
\fT(x(y).P)       &=& x(u).(v)(\bar{u}v | v(y).\fT(P)) & \mbox{choosing}~u,v \notin\n(P),~ u\neq v \\
\fT(P | Q)        &=& (\fT(P) | \fT(Q)) \\
\fT(!P)           &=& \ ! \fT(P) \\
\fT((\nu x) P)    &=& (\nu x) \fT(P)
\end{array}
\]
\ex{asynchronous congruence} shows that $\fT$ is not valid up to $\cong^c$.
In fact, it is not even valid up to $\cong^c_{\rm a}$.
However, as shown in \cite{LMGG18}, it is valid up to $\wbb$.
Since $\wbb$ is not a congruence (for $\pi$ or $\rm a\pi$) it is not correct up to $\wbb$.

\section{Congruence closure}\label{sec:congruence closure}

\begin{definition}{1-hole congruence}
An equivalence relation $\sim$ is a \emph{1-hole congruence} for a language
$\fL$ interpreted in  a semantic domain $\bV$ if
$\denote{E}_\fL(\nu)\sim\denote{E}_\fL(\rho)$
for any $\fL$-expression $E$ and any valuations
$\nu,\rho:\V\rightarrow \bV$ with $\nu \sim^1 \rho$.
Here $\nu,\rho$ are \emph{$\sim^1$-equivalent}, $\nu\sim^1\rho$,
if $\nu(X)\sim\rho(X)$ for some $X\in \V$ and $\nu(Y)=\rho(Y)$ for all variables $Y\neq X$.
\end{definition}
An \emph{$n$-hole congruence} for any finite $n\in\IN$ can be defined in the same vain, and it is
well known and easy to check that a 1-hole congruence $\sim$ is also an $n$-hole congruence, for any $n\in\IN$.
However, in the presence of operators with infinitely many arguments, a 1-hole congruence need not
be a congruence.
\begin{example}{1-hole congruence}
Let $\bV$ be $(\IN\times\IN) \cup\{\infty\}$, with the well-order $\leq$ on $\bV$ inherited lexicographically from the default
order on $\IN$ and $\infty$ the largest element. So $(n,m) \leq (n',m')$ iff $n\leq n' \vee (n=m \wedge m\leq m')$. Consider the
language $\fL$ with constants $0$, $1$ and $(1)$, interpreted in $\bV$ as $(0,0)$, $(1,0)$ and
$(0,1)$, respectively, the binary operator $+$, interpreted by $(n_1,m_1) +^{\!\bV} (n_2,m_2) =
(n_1\mathord+n_2,m_1\mathord+m_2)$ and $\infty+E=E+\infty=\infty$,
and the construct $\sup (E_i)_{i\in I}$ that takes any number of arguments (dependent on the set of
the index sets $I$). The interpretation of $\sup$ in $\bV$ is to take the supremum of its
arguments w.r.t.\ the well-order $\leq$. In case $\sup$ is given finitely many arguments, it
simply returns the largest. However $\sup((n,i))_{i\in\INs}=(n\mathord+1,0)$.

Now let the equivalence relation $\sim$ on $\bV$ be defined by $(n,m)\sim(n',m')$ iff $n=n'$,
leaving $\infty$ in an equivalence class of its own.
This relation is a 1-hole congruence on $\fL$. Hence, it is also a 2-hole congruence, so one has\vspace{-3pt}
$$\big((n_1,m_1)\sim (n_1',m_1') \wedge (n_2,m_2)\sim (n_2',m_2')\big) \Rightarrow
(n_1,m_1)+ (n_2,m_2) \sim (n'_1,m'_1)+ (n'_2,m'_2).$$
Yet it fails to be a congruence: $(n,i) \sim (n,0)$ for all $i\in\IN$, but
$$(n\mathord+1,0)=\sup((n,i))_{i\in\INs}\not\sim \sup((n,0))_{i\in\INs}=(n,0).$$
\end{example}
It is well known and easy to check that the collection of equivalence relations on any domain $\bV$, ordered by
inclusion, forms a complete lattice---namely the intersection of arbitrary many equivalence
relations is again an equivalence relation. Likewise, the collection of 1-hole congruences for $\fL$
is also a complete lattice, and moreover a complete sublattice of the complete lattice of
equivalence relations on $\bV$.  The latter implies that for any collection $C$ of 1-hole congruence
relations, the least equivalence relation that contains all elements of $C$ (exists and) happens to
be a 1-hole congruence relation. Again, this is a property that is well known \cite{Gr10} and easy
to prove. It follows that for any equivalence relation $\sim$ there exists a largest 1-hole
congruence for $\fL$ contained in $\sim$. I will denote this 1-hole congruence by $\sim^{1c}_\fL $,
and call it the \emph{congruence closure} of $\sim$ w.r.t.\ $\fL$.
One has $\val_1\sim^{1c}_\fL\val_2$ for $\val_1,\val_2\in\bV$ iff
$\denote{E}_\fL(\nu)\sim\denote{E}_\fL(\rho)$
for any $\fL$-expression $E$ and any valuations
$\nu,\rho:\V\rightarrow \bV$ with
$\nu(X)=\val_1$ and $\rho(X)=\val_2$ for some $X\in \V$ and $\nu(Y)=\rho(Y)$ for all variables $Y\neq X$.
Such results do not generally hold for congruences.
\begin{example}{no largest congruence}
Continue \ex{1-hole congruence}, but skipping the operator $+$.
Let $\sim_k$ be the equivalence on $\bV$ defined by $(n,m)\sim_k(n',m')$ iff $n=n' \wedge (m=m' \vee m,m'\leq k)$.
It is easy to check that all $\sim_k$ for $k\in\IN$ are congruences on the reduced $\fL$, and
contained in $\sim$. Yet their least upper bound (in the lattice of equivalence relations on $\bV$)
is $\sim$, which is not a congruence itself. In particular, there is no largest congruence contained in $\sim$.
\end{example}
When dealing with languages $\fL$ in which all operators and other constructs have a finite arity, so that
each $E\in\IT_\fL$ contains only finitely many variables, there is no difference between a
congruence and a 1-hole congruence, and thus $\sim^{1c}_\fL$ is a congruence relation for any equivalence $\sim$.
I will apply the theory of expressiveness presented in this paper also to languages like CCS that
have operators (such as $\sum_{i\in I}E_i$) of infinite arity. However, in all such cases I'm
currently aware of, the relevant choices of $\fL$ and $\sim$ have the property that $\sim^{1c}_\fL$ is
in fact a congruence relation. As an example, consider weak bisimilarity \cite{Mi90ccs}. This equivalence relation
fails to be a congruence for $\sum$. However, the coarsest 1-hole congruence contained in this
relation, often called \emph{rooted} weak bisimilarity, happens to be a congruence. In fact, when
congruence-closing weak bisimilarity w.r.t.\ the binary sum, the result \cite{vG05e} is also a
congruence for the infinitary sum, as well as for all other operators of CCS \cite{Mi90ccs}.

\begin{definition}{1-hole congruence for image}
Let $\fT$ be a translation from $\fL$ into $\fL'$.
A subset $\bW$ of $\bV'$ is \emph{closed} under $\fT(\fL)$ if $\denote{\fT(E)}(\eta) \in \bW$
for any expression $E\in\IT_\fL$ and valuation $\eta:\V \rightarrow \bW$.
An equivalence $\sim$ on $\bW$ is a \emph{congruence} (respectively \emph{1-hole  congruence})
for $\fT(\fL)$ on $\bW$ if for any $E\in\IT_\fL$ and $\theta,\eta:\V\rightarrow \bW$ with $\theta
\sim \eta$ (respectively $\theta\sim^1\eta$) one has
$\denote{\fT(E)}_{\fL'}(\theta)\sim\denote{\fT(E)}_{\fL'}(\eta)$.
\end{definition}

\begin{proposition}{closed}
Let $\fT$ be a translation from $\fL$ into $\fL'$ that is correct w.r.t.\ a semantic translation
$\mathord{\bR}\subseteq\bV'\times\bV$. Let $\mathord{\bR}(\bV):=\{\val'\in\bV' \mid \exists \val\in\bV.~\val'\bR \val\}$.
Then $\mathord{\bR}(\bV)$ is closed under $\fT(\fL)$.
\end{proposition}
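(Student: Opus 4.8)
The plan is to unfold the definition of ``closed under $\fT(\fL)$'' (\df{1-hole congruence for image}) and reduce the claim to a single application of the correctness of $\fT$ w.r.t.\ $\bR$. Concretely, I must show that for every expression $E\in\IT_\fL$ and every valuation $\eta:\V\rightarrow\mathord{\bR}(\bV)$, the value $\denote{\fT(E)}_{\fL'}(\eta)$ again lies in $\mathord{\bR}(\bV)$. The whole argument hinges on manufacturing, from such an $\eta$, a companion valuation into the source domain that is $\bR$-related to it.

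First I would build a valuation $\rho:\V\rightarrow\bV$ with $\eta\bR\rho$. Since $\eta(X)\in\mathord{\bR}(\bV)$ for every variable $X$, the very definition of $\mathord{\bR}(\bV)$ guarantees that for each $X$ there exists some $\val\in\bV$ with $\eta(X)\bR\val$; selecting one such witness for each $X$ yields a total function $\rho:\V\rightarrow\bV$ with $\eta(X)\bR\rho(X)$ for all $X$, i.e.\ $\eta\bR\rho$. Next I would invoke the hypothesis that $\fT$ is correct w.r.t.\ $\bR$ (\df{correct R}): applied to $E$ and the pair $\eta\bR\rho$ just constructed, it gives $\denote{\fT(E)}_{\fL'}(\eta)\bR\denote{E}_\fL(\rho)$. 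Since $\denote{E}_\fL(\rho)\in\bV$, the target value $\denote{\fT(E)}_{\fL'}(\eta)$ is thus $\bR$-related to an element of $\bV$, which is precisely the defining condition for membership in $\mathord{\bR}(\bV)$. This closes the argument.

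I do not expect a genuine obstacle here: the statement is in essence a repackaging of the correctness condition, since $\mathord{\bR}(\bV)$ has been defined to be exactly the set of target values admitting a $\bR$-counterpart in $\bV$. The only point meriting a word of care is the construction of $\rho$, which chooses one $\bR$-witness per variable; this is an entirely harmless use of choice, and it is worth noting that it does not rely on the structure of $E$, on $\fT$ being compositional, or even on the totality clause in \df{semantic translation}---only on $\eta$ mapping into $\mathord{\bR}(\bV)$ and on the correctness of $\fT$.
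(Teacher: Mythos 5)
Your proof is correct and follows exactly the same route as the paper's: construct $\rho:\V\rightarrow\bV$ with $\eta\bR\rho$ by picking, for each variable, a $\bR$-witness guaranteed by the definition of $\mathord{\bR}(\bV)$, then apply correctness of $\fT$ w.r.t.\ $\bR$ once to conclude $\denote{\fT(E)}_{\fL'}(\eta)\in\mathord{\bR}(\bV)$. Your closing observation that only the definition of $\mathord{\bR}(\bV)$ is needed (not the totality clause of a semantic translation) is accurate, and your orientation $\eta\bR\rho$ is the right one given $\mathord{\bR}\subseteq\bV'\times\bV$.
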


\begin{proof}
Let $E\in\IT_\fL$ and $\eta:\V \rightarrow \mathord{\bR}(\bV)$. Take $\rho:\V \rightarrow\bV$ with
$\rho \bR \eta$. Then $\denote{\fT(\E)}_{\fL'}(\eta) \bR \denote{\E}_\fL(\rho)$.
Since $\denote{\E}_\fL(\rho)\in\bV$ one has $\denote{\fT(\E)}_{\fL'}(\eta)\in\mathord{\bR}(\bV)$.
\end{proof}

\begin{proposition}{1-hole congruence for image}
Let the translation $\fT$ from $\fL$ into $\fL'$ be correct w.r.t.\ the semantic translation
$\mathord{\bR}\subseteq\mathord{\sim}$.
Then $\sim$ is a (1-hole) congruence for $\fL$ iff it is a (1-hole) congruence for $\fT(\fL)$ on $\mathord{\bR}(\bV)$.
\end{proposition}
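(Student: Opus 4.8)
The plan is to prove the two implications by a single symmetric template: transport the congruence property across $\bR$, using that correctness w.r.t.\ $\bR$ identifies $\denote{\fT(E)}_{\fL'}$ with $\denote{E}_\fL$ up to $\sim$ whenever the two valuations are $\bR$-related. Throughout I would use that $\bR$ is a semantic translation, that $\mathord{\bR}\subseteq\mathord{\sim}$, and that $\sim$ is an equivalence (hence symmetric and transitive). Note that by \pr{closed} the set $\mathord{\bR}(\bV)$ is closed under $\fT(\fL)$, so that the notion ``congruence for $\fT(\fL)$ on $\mathord{\bR}(\bV)$'' from \df{1-hole congruence for image} is applicable.

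For the ``only if'' direction, assuming $\sim$ is a congruence for $\fL$, I would take $E\in\IT_\fL$ and valuations $\theta,\eta:\V\rightarrow\mathord{\bR}(\bV)$ with $\theta\sim\eta$, and lift them into $\bV$: by definition of $\mathord{\bR}(\bV)$ there are $\rho_\theta,\rho_\eta:\V\rightarrow\bV$ with $\theta\bR\rho_\theta$ and $\eta\bR\rho_\eta$. From $\mathord{\bR}\subseteq\mathord{\sim}$ together with symmetry and transitivity one obtains $\rho_\theta\sim\rho_\eta$, so the congruence hypothesis for $\fL$ yields $\denote{E}_\fL(\rho_\theta)\sim\denote{E}_\fL(\rho_\eta)$. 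Correctness w.r.t.\ $\bR$ gives $\denote{\fT(E)}_{\fL'}(\theta)\sim\denote{E}_\fL(\rho_\theta)$ and $\denote{\fT(E)}_{\fL'}(\eta)\sim\denote{E}_\fL(\rho_\eta)$, and chaining the three $\sim$-steps produces $\denote{\fT(E)}_{\fL'}(\theta)\sim\denote{\fT(E)}_{\fL'}(\eta)$, as required.

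The ``if'' direction runs symmetrically. Assuming $\sim$ is a congruence for $\fT(\fL)$ on $\mathord{\bR}(\bV)$, given $\nu,\rho:\V\rightarrow\bV$ with $\nu\sim\rho$ I would use that $\bR$ is a semantic translation to choose $\theta,\eta:\V\rightarrow\mathord{\bR}(\bV)$ with $\theta\bR\nu$ and $\eta\bR\rho$; then $\mathord{\bR}\subseteq\mathord{\sim}$ gives $\theta\sim\eta$, the congruence hypothesis for $\fT(\fL)$ gives $\denote{\fT(E)}_{\fL'}(\theta)\sim\denote{\fT(E)}_{\fL'}(\eta)$, and folding in the two correctness equivalences yields $\denote{E}_\fL(\nu)\sim\denote{E}_\fL(\rho)$.

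The only delicate point is the 1-hole variant, where the two valuations agree on all but one variable and I must ensure their lifts still agree on all but one variable. The remedy is to make the lifting a single fixed function rather than an arbitrary per-variable choice: since $\bR$ is a semantic translation there is a section, i.e.\ a function $s$ with $w\bR s(w)$ for every $w\in\mathord{\bR}(\bV)$ (respectively a $t$ with $t(\val)\bR\val$ for every $\val\in\bV$), and lifting by composition with $s$ (respectively $t$) automatically preserves equality on the variables where the two valuations coincide. Thus $\theta\sim^1\eta$ is carried to $\rho_\theta\sim^1\rho_\eta$ in the first direction, and $\nu\sim^1\rho$ to $\theta\sim^1\eta$ in the second. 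With this choice both arguments above go through verbatim, with the plain congruence hypotheses replaced by their 1-hole counterparts and each use of $\sim$ between valuations replaced by $\sim^1$.
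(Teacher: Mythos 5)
Your proof is correct and takes essentially the same route as the paper's: lift or project the valuations across $\bR$, use $\mathord{\bR}\subseteq\mathord{\sim}$ together with symmetry and transitivity to transfer the equivalence of valuations, and chain the two correctness relations with the congruence hypothesis. The paper handles the 1-hole variant exactly as you do, by choosing the $\bR$-counterparts to coincide on the variables where the two valuations already agree; your section function merely makes that choice explicit.
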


\begin{proof}
First suppose $\sim$ is a congruence for $\fL$.
Let $E\mathbin\in\IT_\fL$ and $\theta,\eta:\V\rightarrow \mathord{\bR}(\bV)$ with $\theta \sim \eta$.
By the definition of $\mathord{\bR}(\bV)$ there are valuations $\nu,\rho:\V\rightarrow\bV$ with 
$\theta \bR \nu$ and $\eta \bR \rho$. Now $\nu\sim\theta\sim\eta\sim\rho$, so
$$\denote{\fT(E)}_{\fL'}(\theta) \bR \denote{\E}_\fL(\nu) \sim
\denote{\E}_\fL(\rho) \bR^{-1} \denote{\fT(E)}_{\fL'}(\eta)$$
and hence $\denote{\fT(E)}_{\fL'}(\theta) \sim \denote{\fT(E)}_{\fL'}(\eta)$.
The other direction proceeds in the same way.

Now suppose $\sim$ is a 1-hole congruence for $\fL$.
Let $E\mathbin\in\IT_\fL$ and $\theta,\eta:\V\rightarrow \mathord{\bR}(\bV)$ with $\theta \sim^1 \eta$.
Then $\theta(X)\sim\eta(X)$ for some $X\in\V$ and $\theta(Y)=\eta(Y)$ for all $Y\neq X$.
So there must be $\nu,\rho:\V\rightarrow\bV$ with $\theta \bR \nu$, $\eta \bR \rho$ and
$\nu(Y)=\rho(Y)$ for all $Y\neq X$. Since $\nu(X)\sim\theta(X)\sim\eta(X)\sim\rho(X)$ it follows
that $\nu\sim^1\rho$. The conclusion proceeds as above, and the other direction goes likewise.
\end{proof}
The requirement of being a congruence for $\fT(\fL)$ on $\mathord{\bR}(\bV)$ is slightly weaker than that of
being a congruence for $\fT(\fL)$---cf.\ \df{weak congruence}---for it proceeds by restricting
attention to valuations into $\mathord{\bR}(\bV)\subseteq \bU$.

\begin{example}{congruence on T(L)}
Let $\fL$ be a language with two constants \textbf{Yes} and \textbf{No} and a unary operator $\neg$.
Its semantics is given by $\bV=\{0,1\}$, $\textbf{Yes}^\bV=1$, $\textbf{No}^\bV=0$ and $\neg^\bV(b)=1{-}b$.
Let $\fL'$ be the extension of $\fL$ with the constant $\top$ and the semantic value $\top$---so
$\bV'=\{0,1,\top\}$---with $\top^{\bV'}=\top$ and $\neg^{\bV'}(\top)=\top$.
Let $\sim$ be the semantic equivalence on $\bV' \supseteq \bV$ given by $0\not\sim 1 \sim \top$.
Then the identity function $\cal I$ is a translation from $\fL$ into $\fL'$ that is valid up to $\sim$.
This is witnessed by the semantic translation ${\bR} :=\{(0,0),(1,1)\} \subseteq {\sim}$.
The relation $\sim$ is a congruence for $\fL$. In line with \pr{1-hole congruence for image} it also
is a congruence for ${\cal I}(\fL)$ on ${\bR}(\bV)=\{0,1\}$. However it fails to be a congruence
for ${\cal I}(\fL)$ (on $\bU=\{0,1,\top\}$). So by \pr{weak congruence} the transition $\cal I$ is not correct up
to $\sim$. Indeed, for $\rho:\V\rightarrow\bV$ a valuation that sends $X\in \V$ to $1$ and
$\eta:\V\rightarrow\bV'$ a valuation that sends $X$ to $\top$, assuming $\rho(Y)=\eta(Y)$ for other
variables $Y$, one has $\rho \sim \eta$, yet
$\denote{{\cal I}(\neg X)}_{\fL'}(\eta) = \denote{\neg X}_{\fL'}(\eta) = \neg^{\bV'}(\eta(X)) = \top
\not\sim 0 = \neg^{\bV}(\rho(X)) =\denote{\neg X}_{\fL}(\rho)$.
\end{example}

\section{A congruence closure property for valid translations}\label{sec:ccproperty}

In many applications, semantic values in the domain of interpretation of a language $\fL$ are only
meaningful up to a semantic equivalence $\sim^c$, and the intended semantic domain could just as
well be seen as the set of $\sim^c$-equivalence classes of values. For this purpose it is essential
that $\sim^c$ is a congruence for $\fL$. Often $\sim^c$ is the congruence closure of a coarser semantic
equivalence $\sim$, so that two values end up being identified iff they are $\sim$-equivalent in
every context. An example of this occurred in \sect{asynpi}, with $\wbb$ in the r\^ole of $\sim$ and
$\cong^c$ in the r\^ole of $\sim^c$. Now \thm{congruence closure of translation}, contributed in
this section, says that if a translation from $\fL$ into $\fL'$ is valid up to $\sim$, then it is
even valid up to an equivalence \plat{$\sim^{1c}_{\fL\!,\bR}$} that extends $\sim^c$ from $\bV$ to a
subdomain $\bW$ of $\bV'$ that suffices for the interpretation of translated expressions from $\fL$.
This equivalence \plat{$\sim^{1c}_{\fL\!,\bR}$}
coincides with the congruence closure of $\sim$ on $\fL$, as well as on $\fT(\fL)$,
and melts each equivalence class of $\bV$ with exactly one of $\bW$, and vice versa.
\\[1ex]
Let $\fL$ and $\fL'$ be languages with
$\denote{\ \ }_{\fL}:\IT_{\fL} \rightarrow ((\V\rightarrow\bV)\rightarrow\bV)$
and
$\denote{\ \ }_{\fL'}:\IT_{\fL'} \rightarrow ((\V\rightarrow\bV')\rightarrow\bV')$.
In this section I assume that $\bV\cap\bV'=\emptyset$. To apply the results to the general case,
just adapt $\fL'$ by using a copy of $\bV'$---any preorder $\asim$ on $\bV\cup\bV'$ extends to this
copy by considering each copied element $\asim$-equivalent to the original.

\begin{definition}{canonical equivalence}
Given any semantic translation $\bR$, let $\mathord{\equiv_{\bR}} \subseteq (\bV\cup\bV')^2$
be the smallest equivalence relation on $\bV\cup\bV'$ containing $\bR$.
\end{definition}

\begin{theorem}{canonical equivalence}
If a translation $\fT$ is correct w.r.t.\ the semantic translation $\bR$, then $\equiv_{\bR}$
is a 1-hole congruence for $\fL$.
\end{theorem}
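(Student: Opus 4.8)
The plan is to exploit the correctness condition of \df{correct R} by choosing, for two valuations differing in a single variable, one target valuation that is \emph{simultaneously} $\bR$-related to both. First I would analyse the shape of $\equiv_{\bR}$ restricted to $\bV$. Since $\bV\cap\bV'=\emptyset$ and $\mathord{\bR}\subseteq\bV'\times\bV$, the generating relation $\bR\cup\bR^{-1}$ is bipartite between $\bV$ and $\bV'$, so every $\equiv_{\bR}$-path joining two elements of $\bV$ alternates between the two domains. Consequently, for $\val_1,\val_2\in\bV$ one has $\val_1\equiv_{\bR}\val_2$ iff there is a finite sequence $\val_1=a_0,a_1,\dots,a_n=\val_2$ in $\bV$ together with witnesses $b_1,\dots,b_n\in\bV'$ such that $b_i\bR a_{i-1}$ and $b_i\bR a_i$ for each $i$; thus each $b_i$ is an $\bR$-predecessor shared by two consecutive $\bV$-elements.

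Next I would reduce the congruence claim to a single such step. Take $\nu,\rho:\V\to\bV$ with $\nu\equiv_{\bR}^{1}\rho$, say $\nu(X)\equiv_{\bR}\rho(X)$ and $\nu(Y)=\rho(Y)$ for all $Y\neq X$, and fix the chain $\nu(X)=a_0,\dots,a_n=\rho(X)$ with witnesses $b_i$. Define interpolating valuations $\rho_i:\V\to\bV$ by $\rho_i(X)=a_i$ and $\rho_i(Y)=\nu(Y)$ for $Y\neq X$, so that $\rho_0=\nu$ and $\rho_n=\rho$. By transitivity of $\equiv_{\bR}$ it then suffices to prove $\denote{E}_\fL(\rho_{i-1})\equiv_{\bR}\denote{E}_\fL(\rho_i)$ for each $i$.

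The crux is this single step. Because $\rho_{i-1}$ and $\rho_i$ agree off $X$, I can build a \emph{single} target valuation $\eta:\V\to\bV'$ with $\eta(X)=b_i$ and, for $Y\neq X$, $\eta(Y)$ chosen to be any $\bR$-predecessor of $\rho_{i-1}(Y)=\rho_i(Y)$ (one exists since $\bR$ is a semantic translation, \df{semantic translation}). Then $b_i\bR a_{i-1}$ and $b_i\bR a_i$ yield simultaneously $\eta\bR\rho_{i-1}$ and $\eta\bR\rho_i$. Applying correctness (\df{correct R}) to the \emph{same} $\eta$ gives $\denote{\fT(E)}_{\fL'}(\eta)\bR\denote{E}_\fL(\rho_{i-1})$ and $\denote{\fT(E)}_{\fL'}(\eta)\bR\denote{E}_\fL(\rho_i)$, so both values lie $\bR$-below the common element $\denote{\fT(E)}_{\fL'}(\eta)\in\bV'$ and are therefore $\equiv_{\bR}$-equivalent. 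Chaining over $i$ delivers $\denote{E}_\fL(\nu)\equiv_{\bR}\denote{E}_\fL(\rho)$, which is exactly the 1-hole congruence requirement of \df{1-hole congruence}. The main obstacle—indeed the only delicate point—is securing that single shared $\eta$, and this succeeds precisely because only one variable is perturbed, so $\eta$ may be held fixed on all other variables. With two or more varying variables the alternating chains at different variables would have to be synchronised, which generally fails; this matches the fact, witnessed by \ex{1-hole congruence}, that a 1-hole congruence need not be a full congruence in the presence of infinitary operators.
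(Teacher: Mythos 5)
Your proposal is correct and follows essentially the same route as the paper's own proof: both exploit the bipartite alternating chain $\val_0 \mathrel{\bR^{-1}} \wal_1 \mathrel{\bR} \val_1 \mathrel{\bR^{-1}} \cdots \mathrel{\bR} \val_n$ witnessing $\nu(X)\equiv_{\bR}\rho(X)$, lift it to interpolating valuations $\rho_i$ and target valuations with $\eta_i \bR \rho_{i-1}$ and $\eta_i \bR \rho_i$, and apply correctness w.r.t.\ $\bR$ at each link to chain $\denote{E}_\fL(\nu)\equiv_{\bR}\denote{E}_\fL(\rho)$. The only difference is presentational: you isolate the single-step case and choose fresh $\bR$-predecessors for the variables $Y\neq X$ at each step, whereas the paper fixes one $\eta$ with $\eta\bR\rho$ and writes out the whole chain at once.
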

\begin{proof}
Let $E\in\IT_\fL$ and $\nu,\rho:\V\rightarrow\bV$ with $\nu\equiv_{\bR}^1\rho$.
I have to show that $\denote{E}_\fL(\nu)\equiv_{\bR}\denote{E}_\fL(\rho)$.

Let $X\in\V$ be such that $\nu(X)\equiv_{\bR}\rho(X)$ and $\nu(Y)=\rho(Y)$ for all $Y\neq X$.
Then, for some $n\geq 0$ there are $\val_0,\ldots,\val_n\in\bV$ and $\wal_1,\ldots,\wal_n\in\bV'$ with
$\nu(X)=\val_0 \bR^{-1} \wal_1 \bR \val_1 \bR^{-1} \wal_2 \bR \val_2
\bR^{-1} \ldots \bR \val_n = \rho(X)$.
For $i=0,\ldots,n$ let $\rho_i:\V\rightarrow\bV$ be given by $\rho_i(X)=\val_i$ and $\rho_i(Y)=\rho(Y)$ for $Y\neq X$,
and for $i=1,\ldots,n$ let $\eta_i:\V\rightarrow\bV'$ be given by $\eta_i(X)=\wal_i$ and
$\eta_i(Y)=\eta(Y)$ for $Y\neq X$, for some $\eta:\V\rightarrow\bV'$ with $\eta\bR\rho$.
Then $\nu=\rho_0 \bR^{-1} \eta_1 \bR \rho_1 \bR^{-1}
\eta_2 \bR \rho_2 \bR^{-1} \ldots \bR \rho_n = \rho$.
Hence $\denote{E}_\fL(\rho_0) \bR^{-1} \denote{\fT(\E)}_{\fL'}(\eta_1)
\bR \denote{E}_\fL(\rho_1) \bR^{-1} \denote{\fT(\E)}_{\fL'}(\eta_2) \bR
\denote{E}_\fL(\rho_2) \bR^{-1} \cdots \bR \denote{E}_\fL(\rho_n)$.
Thus $\denote{E}_\fL(\nu)\equiv_{\bR}\denote{E}_\fL(\rho)$.
\end{proof}
By \pr{1-hole congruence for image} $\equiv_{\bR}$
also is a 1-hole congruence for $\fT(\fL)$ on $\mathord{\bR}(\bV)$.
Only the subset $\mathord{\bR}(\bV)$ of $\bV'$ matters for the purpose
of translating $\fL$ into $\fL'$. On $\bV'\setminus\mathord{\bR}(\bV)$ the equivalence
$\equiv_{\bR}$ is the identity.

\begin{theorem}{congruence closure of translation}
Let $\fT$ be a translation from a language $\fL$, with semantic domain $\bV$,
into a language $\fL'$, with domain $\bV'$, that is valid up to a semantic equivalence $\sim$.
Then $\fT$ is even valid up to a semantic equivalence \plat{$\sim^{1c}_{\fL\!,\bR}$}, contained in $\sim$, such that
(1) the restriction of \plat{$\sim^{1c}_{\fL\!,\bR}$} to $\bV$ is the largest 1-hole congruence for $\fL$ contained in $\sim$,
(2) the set $\bW := \{\val\in\bV'\mid \exists \val\in\bV.~\val'\sim^{1c}_{\fL\!,\bR} \val\}$ is closed under $\fT(\fL)$,
and (3) the restriction of \plat{$\sim^{1c}_{\fL\!,\bR}$} to $\bW$
is the largest 1-hole congruence for $\fT(\fL)$ on $\bW$ that is contained in $\sim$.
\end{theorem}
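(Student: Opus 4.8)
The plan is to make the validity of $\fT$ up to $\sim$ explicit as a semantic translation $\bR\subseteq\sim$ with $\fT$ correct w.r.t.\ $\bR$ (\df{valid}), and then to build the witness \plat{$\sim^{1c}_{\fL\!,\bR}$} on $\bV\cup\bV'$ by \emph{melting} $\sim^{1c}_\fL$ on $\bV$ together with the matching congruence closure on $\bW$. Write $\bW_0:=\mathord{\bR}(\bV)$; it is closed under $\fT(\fL)$ by \pr{closed}. Let $\sigma_1:=\sim^{1c}_\fL$ be the largest $1$-hole congruence for $\fL$ contained in $\sim$, and let $\sigma_2$ be the largest $1$-hole congruence for $\fT(\fL)$ on $\bW_0$ contained in $\sim$; $\sigma_2$ exists, and admits the same contextual description as $\sim^{1c}_\fL$, by rerunning the complete-sublattice argument of \sect{congruence closure} for the $1$-hole congruences for $\fT(\fL)$ on the closed set $\bW_0$. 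I then define \plat{$\sim^{1c}_{\fL\!,\bR}$} to coincide with $\sigma_1$ on $\bV$, with $\sigma_2$ on $\bW_0$, and with the identity on $\bV'\setminus\bW_0$, and to make $\val'\in\bV'$ and $\val\in\bV$ related exactly when $\val'\bR u$ for some $u\in\bV$ with $u\mathbin{\sigma_1}\val$.

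The heart of the proof is a matching lemma: if $\val'_1\bR\val_1$ and $\val'_2\bR\val_2$ then $\val_1\mathbin{\sigma_1}\val_2$ iff $\val'_1\mathbin{\sigma_2}\val'_2$. I would prove the two directions symmetrically from the contextual characterisations of $\sigma_1$ and $\sigma_2$ and from correctness w.r.t.\ $\bR$. For ``$\Rightarrow$'', take any $\fL$-expression $E$, variable $X$, and valuations $\theta,\eta:\V\to\bW_0$ that agree off $X$ with $\theta(X)=\val'_1$ and $\eta(X)=\val'_2$; choose $\nu,\rho:\V\to\bV$ agreeing off $X$ with $\theta\bR\nu$, $\eta\bR\rho$, $\nu(X)=\val_1$ and $\rho(X)=\val_2$ (at $X$ using $\val'_i\bR\val_i$, elsewhere using $\bW_0=\mathord{\bR}(\bV)$ and choosing one common $\bV$-preimage). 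Since $\sigma_1$ is a $1$-hole congruence, $\denote{E}_\fL(\nu)\sim\denote{E}_\fL(\rho)$, and correctness w.r.t.\ $\bR\subseteq\sim$ then sandwiches $\denote{\fT(E)}_{\fL'}(\theta)\sim\denote{E}_\fL(\nu)\sim\denote{E}_\fL(\rho)\sim\denote{\fT(E)}_{\fL'}(\eta)$; as $E,X,\theta,\eta$ were arbitrary, $\val'_1\mathbin{\sigma_2}\val'_2$ by maximality of $\sigma_2$. The direction ``$\Leftarrow$'' is the same with $\sigma_1,\sigma_2$ and $\bV,\bW_0$ interchanged.

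With the lemma the remainder is assembly. Reflexivity and symmetry of \plat{$\sim^{1c}_{\fL\!,\bR}$} are immediate, and the only non-routine case of transitivity is a chain from $\val\in\bV$ through $\val'\in\bW_0$ to $\val_2\in\bV$: here $\val'\bR u_1$, $\val'\bR u_2$ with $u_1\mathbin{\sigma_1}\val$ and $u_2\mathbin{\sigma_1}\val_2$, and applying the lemma to $\val'\bR u_1$, $\val'\bR u_2$ (with $\val'\mathbin{\sigma_2}\val'$) yields $u_1\mathbin{\sigma_1}u_2$, whence $\val\mathbin{\sigma_1}\val_2$; the mirror-image chain through $\bV'$ is handled dually. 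That \plat{$\sim^{1c}_{\fL\!,\bR}$} is contained in $\sim$ follows from $\bR\subseteq\sim$, $\sigma_1\subseteq\sim$ and $\sigma_2\subseteq\sim$. Moreover $\val'\bR\val$ makes $\val'$ and $\val$ related (take $u=\val$), so $\bR$ is included in \plat{$\sim^{1c}_{\fL\!,\bR}$}; since $\fT$ is correct w.r.t.\ the semantic translation $\bR$, \df{valid} gives at once that $\fT$ is valid up to \plat{$\sim^{1c}_{\fL\!,\bR}$}.

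Property~(1) holds by construction. For~(2) and~(3) I first check that the set $\bW$ of the statement equals $\bW_0$: the cross part of \plat{$\sim^{1c}_{\fL\!,\bR}$} links $\bV$ only to elements of $\mathord{\bR}(\bV)=\bW_0$, and every $\val'\in\bW_0$ has some $\val$ with $\val'\bR\val$, hence is \plat{$\sim^{1c}_{\fL\!,\bR}$}-related to a value of $\bV$. Then~(2) is exactly \pr{closed}, and~(3) holds because \plat{$\sim^{1c}_{\fL\!,\bR}$} restricts to $\sigma_2$ on $\bW=\bW_0$, which was chosen as the largest $1$-hole congruence for $\fT(\fL)$ on $\bW_0$ contained in $\sim$. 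The main obstacle is the matching lemma: it is what makes the melting a genuine bijection between the $\sigma_1$-classes of $\bV$ and the $\sigma_2$-classes of $\bW$, and it is simultaneously responsible for transitivity and for the two maximality claims; once it is in place, \pr{closed}, \pr{1-hole congruence for image} and the lattice description of congruence closures make the rest bookkeeping.
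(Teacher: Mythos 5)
Your overall architecture is genuinely different from the paper's: you define the target-side relation $\sigma_2$ directly as the largest 1-hole congruence for $\fT(\fL)$ on $\bW_0=\mathord{\bR}(\bV)$ contained in $\sim$ and glue it to $\sigma_1=\mathord{\sim^{1c}_\fL}$ via a matching lemma, whereas the paper transports $\sim^{1c}_\fL$ along the equivalence $\equiv_{\bR}$ generated by $\bR$ and only afterwards proves maximality on the target side. Your ``$\Leftarrow$'' direction, the transitivity bookkeeping, the containment in $\sim$, and the derivation of properties (1)--(3) are all fine modulo the lemma.

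The gap is in the ``$\Rightarrow$'' direction, at the step ``as $E,X,\theta,\eta$ were arbitrary, $\val'_1\mathbin{\sigma_2}\val'_2$ by maximality of $\sigma_2$''. What you have established at that point is that $\val'_1,\val'_2$ are indistinguishable up to $\sim$ under all one-hole contexts of the form $\denote{\fT(E)}_{\fL'}$; call this relation $R$. Maximality of $\sigma_2$ only yields $\val'_1\mathbin{\sigma_2}\val'_2$ if you exhibit \emph{some} 1-hole congruence for $\fT(\fL)$ on $\bW_0$ contained in $\sim$ that contains the pair, and $R$ is not known to be one. For an ordinary language the contextual relation is a 1-hole congruence because contexts compose ($E'[E/X]$ is again an expression); for $\fT(\fL)$ the available contexts are only the $\fT$-images, and $\fT(E')[\fT(E)/X]$ need not be of the form $\fT(E'')$, since $\fT$ is not assumed compositional here. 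So the ``contextual description'' of $\sigma_2$ that you invoke does not follow from rerunning the complete-sublattice argument --- that argument gives only the existence of a largest 1-hole congruence, not its contextual characterisation --- and this is exactly where the real work is hidden. The repair is the paper's detour: show that the equivalence generated by $\bR$ (together with $\sigma_1$) is a 1-hole congruence for $\fL$, which is \thm{canonical equivalence}, and then transfer 1-hole-congruence-hood to $\fT(\fL)$ on $\bR(\bV)$ by \pr{1-hole congruence for image}; this produces an explicit 1-hole congruence containing $(\val'_1,\val'_2)$, to which maximality of $\sigma_2$ can legitimately be applied. Once that is in place your matching lemma is true and the remainder of your assembly goes through.
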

\begin{proof}
By assumption the translation $\fT$ from $\fL$ into $\fL'$
is correct w.r.t.\ a semantic translation $\mathord{\bR} \subseteq \mathord{\sim}$.
Let $\sim^{1c}_{\fL\!,\bR}$, the \emph{congruence closure} of $\sim$ w.r.t.\ $\fL$ and $\bR$, be
the binary relation on $\bV \cup \bV'$ defined by \plat{$\wal_1 \sim^{1c}_{\fL\!,\bR} \wal_2$} iff
\plat{$\wal_1 \equiv_{\bR} \val_1 \sim^{1c}_\fL \val_2 \equiv_{\bR} \wal_2$} for some $\val_1,\val_2\in\bV$.
Here \plat{$\sim^{1c}_\fL$} is the largest 1-hole congruence for $\fL$, defined on $\bV$, that is contained in $\sim$.

Since $\mathord{\equiv_{\bR}}$ is a 1-hole congruence for $\fL$
contained in $\sim$ (by \thm{canonical equivalence}), and $\sim^{1c}_\fL$ is the
largest 1-hole congruence for $\fL$ contained in $\sim$,
one has $\val \equiv_{\bR} \wal\Rightarrow \val \sim^{1c}_\fL \wal$ for all $\val,\wal\in\bV$.
From this it follows that $\sim^{1c}_{\fL\!,\bR}$ is transitive. As $\equiv_{\bR}$ and
\plat{$\sim^{1c}_\fL$} are reflexive and symmetric, so is \plat{$\sim^{1c}_{\fL\!,\bR}$}. Thus, \plat{$\sim^{1c}_{\fL\!,\bR}$}
is an equivalence relation.
Since \plat{$\sim^{1c}_\fL$} and $\bR$, and hence also $\equiv_{\bR}$, are contained in $\sim$, so is $\sim^{1c}_{\fL\!,\bR}$.
Moreover, \plat{$\mathord{\bR} \subseteq \mathord{\equiv_{\bR}} \subseteq \mathord{\sim^{1c}_{\fL\!,\bR}}$},
so $\fT$ is valid up to \plat{$\sim^{1c}_{\fL\!,\bR}$}.
It remains to check properties (1)--(3).

Let $E\in \IT_\fL$ be an $\fL$-expression and $\nu,\rho:\V\rightarrow \bV$ be valuations with
$\nu\sim^1 \rho$. Then there are valuations $\nu',\rho':\V\rightarrow\bV$ with
\plat{$\nu \equiv_{\bR}^1 \nu' \mathrel{(\sim^{1c}_\fL)^1} \rho' \equiv_{\bR}^1 \rho$}.
Since $\equiv_{\bR}$ and $\sim^{1c}_\fL$ are 1-hole congruences, it follows that 
$\denote{E}_\fL(\nu)\equiv_{\bR}\denote{E}_\fL(\nu')\sim^{1c}_\fL\denote{E}_\fL(\rho')\equiv_{\bR}
\denote{E}_\fL(\rho)$, so $\denote{E}_\fL(\nu)\sim^{1c}_{\fL\!,\bR}\denote{E}_\fL(\rho)$. Thus
\plat{$\sim^{1c}_{\fL\!,\bR}$} is a 1-hole congruence for $\fL$ on $\bV$. Since \plat{$\sim^{1c}_\fL$} is the
largest 1-hole congruence for $\fL$ contained in $\sim$, it follows that, restricted to $\bV$,
${\sim^{1c}_{\fL\!,\bR}} \subseteq {\sim^{1c}_\fL}$. By the reflexivity of $\equiv_{\bR}$ one moreover has
${\sim^{1c}_{\fL}} \subseteq {\sim^{1c}_{\fL\!,\bR}}$, so \plat{${\sim^{1c}_{\fL\!,\bR}} = {\sim^{1c}_\fL}$},
i.e.\ (1) holds.

Let $\bW := \{\wal\in\bV'\mid \exists \val\in\bV.~\wal\sim^{1c}_{\fL\!,\bR} \val\}$.
By definition, $\bW = {\bR}(\bV)$. So by \pr{closed} $\bW$ is closed under $\fT(\fL)$,
i.e.\ (2) holds.

By \pr{1-hole congruence for image} \plat{$\sim^{1c}_{\fL\!,\bR}$}
is a 1-hole congruence for $\fT(\fL)$ on $\bW = \mathord{\bR}(\bV)$, contained in $\sim$.
Let $\approx$ be any other 1-hole congruence on $\bW$ contained in $\sim$.
Define the relation $\approx_{\bR}$ on $\bV\cup\bW$ by \plat{$\val_1 \approx_{\bR} \val_2$} iff
\plat{$\val_1 \equiv_{\bR} \wal_1 \approx \wal_2 \equiv_{\bR} \val_2$} for some
$\wal_1,\wal_2\in{\bW}$, and let $\approx_{\bR}^*$ be its transitive closure.
Then $\approx_{\bR}^*$ is an equivalence relation on $\bV\cup\bW$.
Since $\equiv_{\bR}$ and $\approx$ are 1-hole congruences for $\fT(\fL)$ on $\bW$, by the same
reasoning as above also \plat{$\approx_{\bR}^*$} is a 1-hole congruence for $\fT(\fL)$ on $\bW$.
As \plat{$\mathord{\bR} \subseteq \mathord{\equiv_{\bR}} \subseteq {\approx_{\bR}^*}$},
by \pr{1-hole congruence for image}, \plat{$\approx_{\bR}^*$} is a 1-hole congruence for $\fL$.
Since \plat{$\sim^{1c}_\fL$} is the largest 1-hole congruence for $\fL$ contained in $\sim$, it
follows that, restricted to $\bV$, ${\approx_{\bR}^*} \subseteq {\sim^{1c}_\fL}$.

For each $\wal_1,\wal_2\in\bW$ with $\wal_1 \approx \wal_2$ there are $\val_1,\val_2\in\bV$ with
\plat{$\val_1 \equiv_{\bR} \wal_1 \approx \wal_2 \equiv_{\bR} \val_2$}. So $\val_1 \approx_{\bR} \val_2$
and hence $\val_1 \sim^{1c}_\fL \val_2$, and thus \plat{$\wal_1 \sim^{1c}_{\fL\!,\bR} \wal_2$}, by definition.
So ${\approx} \subseteq {\sim^{1c}_{\fL\!,\bR}}$ and (3) holds.
\end{proof}
Note that each equivalence class of $\sim^{1c}_{\fL\!,\bR}$ on $\bV \cup \bW$ melts an equivalence
class of $\sim^{1c}_{\fL\!,\bR}$ on $\bV$ with one of \plat{$\sim^{1c}_{\fL\!,\bR}$} on $\bW$.
Moreover, on $\bV$ the relation is completely determined by $\fL$ and $\sim$.
The following example shows that in general the whole relation \plat{$\sim^{1c}_{\fL\!,\bR}$} is not
completely determined by $\fL$ and $\sim$.

\begin{example}{congruence closure not determined by L and sim}
Let $\fL$ be a language with two constants \textbf{Yes} and \textbf{No} and a binary operator \textbf{same}.
Its semantics is given by $\bV=\{0,1,\top,\bot\}$, $\textbf{Yes}^\bV=1$, $\textbf{No}^\bV=0$ and
$\textbf{same}^\bV(x,y)=1$ if $x=y$, while $\textbf{same}^\bV(x,y)=0$ if $x\neq y$.
The semantic values $\top$ and $\bot$ are not denotable as the interpretation of closed terms.
Let $\fL'$ be an exact copy of $\fL$, except that the semantic values are primed.
Let $\sim$ be the semantic equivalence on $\bV \cup \bV'$ given by
$0 \sim 0' \not\sim 1 \sim 1' \not\sim \top \sim \top' \sim \bot \sim \bot'$.
Then the identity function $\cal I$ is a translation from $\fL$ into $\fL'$ that is valid up to $\sim$.
This is witnessed by the semantic translation ${\bR}^\dagger :=\{(0',0),(1',1),(\top',\bot),(\bot',\top)\} \subseteq {\sim}$,
or, alternatively, by the semantic translation ${\bR} :=\{(0',0),(1',1),(\top',\top),(\bot',\bot)\} \subseteq {\sim}$.
Upon taking the congruence closure $\sim^1_\fL$, the four semantic values of $\bV$ become
inequivalent. Yet, there are two candidates for \plat{$\sim^{1c}_{\fL\!,\bR}$}, namely $\bR^\dagger$ and $\bR$.
\end{example}

\begin{corollary}{congruence closure of translation}
Let $\fT$ be a translation from a language $\fL$, with semantic domain $\bV$,
into a language $\fL'$, with domain $\bV'$, valid up to a semantic equivalence $\sim$,
and suppose the congruence closure $\sim_\fL^1$ of $\sim$ w.r.t.\ $\fL$ is in fact a congruence.
Then $\fT$ is correct up to the equivalence \plat{$\sim^{1c}_{\fL\!,\bR}$} described in
\thm{congruence closure of translation}.
\end{corollary}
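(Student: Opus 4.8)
The plan is to apply \thm{valid correct} with $\sim^{1c}_{\fL\!,\bR}$ in the r\^ole of the equivalence: that theorem reduces correctness up to $\sim^{1c}_{\fL\!,\bR}$ to the conjunction of validity up to $\sim^{1c}_{\fL\!,\bR}$ and $\sim^{1c}_{\fL\!,\bR}$ being a congruence for $\fT(\fL)$. Validity up to $\sim^{1c}_{\fL\!,\bR}$ is handed to us directly by \thm{congruence closure of translation}, so the entire task reduces to establishing the congruence property.

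First I would unpack what \thm{congruence closure of translation} supplies: $\fT$ is correct w.r.t.\ a semantic translation $\mathord{\bR}\subseteq\mathord{\sim^{1c}_{\fL\!,\bR}}$, and by its property~(1) the restriction of $\sim^{1c}_{\fL\!,\bR}$ to $\bV$ is the largest 1-hole congruence for $\fL$ contained in $\sim$, i.e.\ $\sim^1_\fL$. Being a congruence for $\fL$ in the sense of \df{congruence} constrains only valuations ranging over $\bV$, and the resulting values lie in $\bV$, so it is a property of the relation restricted to $\bV$ alone. Hence the hypothesis that $\sim^1_\fL$ is a congruence for $\fL$ lifts verbatim to: $\sim^{1c}_{\fL\!,\bR}$ is a congruence for $\fL$.

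Next I would feed this into \pr{1-hole congruence for image}, taking $\sim^{1c}_{\fL\!,\bR}$ for the $\sim$ of that proposition and the same $\bR$ as witness (legitimate since $\mathord{\bR}\subseteq\mathord{\sim^{1c}_{\fL\!,\bR}}$ and $\fT$ is correct w.r.t.\ $\bR$). In its plain congruence guise---rather than the 1-hole one---that proposition then yields that $\sim^{1c}_{\fL\!,\bR}$ is a congruence for $\fT(\fL)$ on $\mathord{\bR}(\bV)$.

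The one step requiring care, and the place I expect the only friction, is matching this conclusion to the hypothesis demanded by \thm{valid correct}. The latter needs $\sim^{1c}_{\fL\!,\bR}$ to be a congruence for $\fT(\fL)$ in the sense of \df{weak congruence}, whose valuations range over $\bU=\{\val'\in\bV'\mid\exists\val\in\bV.\,\val'\sim^{1c}_{\fL\!,\bR}\val\}$, whereas \pr{1-hole congruence for image} delivers a congruence on $\mathord{\bR}(\bV)$. For the equivalence $\sim^{1c}_{\fL\!,\bR}$, however, this $\bU$ is precisely the set $\bW$ of \thm{congruence closure of translation}, which that theorem identifies with $\mathord{\bR}(\bV)$; so the two formulations coincide. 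With the congruence requirement thus discharged, \thm{valid correct} delivers that $\fT$ is correct up to $\sim^{1c}_{\fL\!,\bR}$, as desired.
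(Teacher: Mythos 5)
Your proposal is correct and follows essentially the same route as the paper's own proof: validity from Theorem~\ref{thm:congruence closure of translation}, the congruence hypothesis on $\sim^1_\fL$ fed through Proposition~\ref{pr:1-hole congruence for image} to get a congruence for $\fT(\fL)$ on $\bR(\bV)$, the identification $\bR(\bV)=\bW=\bU$, and then Theorem~\ref{thm:valid correct}. The only difference is that you spell out more explicitly why the congruence property for $\fL$ depends only on the restriction of the relation to $\bV$, which the paper leaves implicit.
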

\begin{proof}
By the proof of \thm{congruence closure of translation} $\fT$ is correct w.r.t.\ the semantic
translation \plat{${\bR} \subseteq {\sim^{1c}_{\fL\!,\bR}}$}, the restriction of \plat{$\sim^{1c}_{\fL\!,\bR}$} to $\bV$
equals $\sim_\fL^1$, and ${\bR}(\bV)=\bW$. Using that $\sim_\fL^1$ is a congruence for $\fL$, by
\pr{1-hole congruence for image} \plat{$\sim^{1c}_{\fL\!,\bR}$} is a congruence for $\fT(\fL)$ on ${\bR}(\bV)$.
Since ${\bR}(\bV)=\bW$, which for \plat{$\sim^{1c}_{\fL\!,\bR}$} is the $\bU$ used in \df{weak congruence},
\plat{$\sim^{1c}_{\fL\!,\bR}$} is a congruence for $\fT(\fL)$.
So by \thm{valid correct}, $\fT$ is correct up to \plat{$\sim^{1c}_{\fL\!,\bR}$}.
\end{proof}

The languages $\pi$ and $\rm a\pi$ of \sect{asynpi} do not feature operators (or other constructs)
of infinite arity. Hence the congruence closure $\sim_\pi^{1c}$ or $\sim_{\rm a\pi}^{1c}$ of an
equivalence $\sim$ on $\pi$ or $\rm a\pi$ is always a congruence.
So by \cor{congruence closure of translation} Boudol's translation $\fT$ is correct up to an equivalence \plat{$\wbb_{\pi,{\bR}}^c$},
defined on the disjoint union of the domains $\T_\pi$ and $\T_{\rm a\pi}$ on which the two languages are
interpreted. This equivalence is contained in $\wbb$, and on the source domain $\T_\pi$ coincides
with $\cong^c$. By \thm{congruence closure of translation},  the restriction of \plat{$\wbb_{\pi,{\bR}}^c$}
to a subdomain $\bW\subseteq \T_{\rm a\pi}$ is the largest congruence for $\fT(\pi)$ on $\bW$ that is contained in $\wbb$.
As $\cong^c_{\rm a}$ is a congruence for all of $\rm a\pi$ on all of $\T_{\rm a\pi}$, and contained
in $\wbb$, it is certainly a congruence for $\fT(\pi)$ on $\bW$, and thus contained in \plat{$\wbb_{\pi,{\bR}}^c$}.
This inclusion turns out to be strict.
As an illustration of that, note that $\bar x z.\textbf{0} |  \bar x z.\textbf{0} \cong^c \bar xz.\bar x z \textbf{0}$.
(This follows since these processes are strong (early) bisimilar \cite{SW01book} and thus strong
full bisimilar by \cite[Def. 2.2.2]{SW01book}.)
Consequently, their translations must be related by \plat{$\wbb_{\pi,{\bR}}^c$}. So, for distinct $u,v,y,w,x,z \in\N$,
$$(u)(\bar{x}u | u(v).(\bar{v}z | \textbf{0})) \big| (u)(\bar{x}u | u(v).(\bar{v}z | \textbf{0})) \wbb_{\pi,{\bR}}^c
(y)(\bar{x}y | u(w).(\bar{w}z | (u)(\bar{x}u | u(v).(\bar{v}z | \textbf{0})) )). $$
Yet, these processes are not $\cong^c_{\rm a}$-equivalent, as can be seen by putting them in a
context $x(y).x(y).\bar r(s) | X$. In this context, only the left-hand side has a weak barb
$\Downarrow_{\bar r}$.

\section{Integrating language features through translations}\label{sec:integrating}

The results of the previous section show how valid translations are satisfactory for comparing the
expressiveness of languages. If there is a valid translation $\fT$ from $\fL$ to $\fL'$ up to $\sim$,
and (as usual) $\sim^{1c}_\fL$ is a congruence, then all truths that can be expressed in terms of
$\fL$ can be mimicked in $\fL'$. For the congruence classes of $\sim^{1c}_\fL$ translate bijectively
to congruence classes of an induced equivalence relation on the domain of $\fT(\fL)$ (within the
domain of $\fL'$), and all operations on those congruence classes that can be performed by contexts of
$\fL$ have a perfect counterpart in terms of contexts of $\fT(\fL)$. This state of affairs was
illustrated on Boudol's translation from a synchronous to an asynchronous $\pi$-calculus.

There is however one desirable property of translations between languages that has not yet been
achieved, namely to combine the powers of two languages into one unified language.
If both languages $\fL_1$ and $\fL_2$ have valid translations into a language $\fL'$,
then all that can be done with $\fL_1$ can be mimicked in a fragment of $\fL'$, and 
all that can be done with $\fL_2$ can be mimicked in another fragment of $\fL'$.
In order for these two fragments to combine, one would like to employ a single congruence relation
on $\fL'$ that specialises to congruence relations for $\fT_1(\fL_1)$ and $\fT_2(\fL_2)$,
which form the counterparts of relevant congruence relations for the source languages 
$\fL_1$ and $\fL_2$.

In terms of the translation $\fT$ from $\pi$ to $\rm a\pi$, the equivalence $\cong^c_{\rm a}$ on
$\T_{\rm a\pi}$ would be the right congruence relation to consider for $\rm a\pi$.
Ideally, this congruence would extend to an equivalence $\cong^c_{\pi,\rm a\pi}$ on the disjoint
union $\T_\pi \uplus \T_{a\pi}$, such that the restriction of $\cong^c_{\pi,\rm a\pi}$ to $\T_\pi$
is a congruence for $\pi$. Necessarily, this congruence on $\T_\pi$ would have to distinguish the
terms $\bar x z.\textbf{0} |  \bar x z.\textbf{0}$ and $\bar xz.\bar x z \textbf{0}$, since their
translations are distinguished by $\cong^c_{\rm a}$. One therefore expects $\cong^c_{\pi,\rm a\pi}$
on $\T_\pi$ to be strictly finer than $\cong^c$. Here it is important that the union
of $\T_\pi$ and $\T_{a\pi}$ on which this congruence is defined is required to be disjoint.
For if one considers $\T_{a\pi}$ as a subset of $\T_\pi$, then we obtain that the restriction of
$\cong^c_{\pi,\rm a\pi}$ to that subset (1) coincides with $\cong^c_{\rm a}$ and (2) is strictly
finer than $\cong^c$. This contradicts the fact that $\cong^c$ is strictly finer than $\cong^c_{\rm a}$.

In \sect{reflect} I will show that such a congruence $\cong^c_{\pi,\rm a\pi}$ indeed exists.
In fact, under a few very mild conditions this result holds generally, provided that the source
language $\fL$ is a closed-term language. The following example illustrated why that restriction
needs to be imposed.

\begin{example}{reflection}
Let $\fL$ be a language with three constants $\top$, \textbf{Yes} and \textbf{No}.
Its semantics is given by $\bV=\{+,-\}$, $\top^\bV =\textbf{Yes}^\bV=+$ and $\textbf{No}^\bV=-$.
Let $\fL'$ be a language with the same three constants and a unary operator \textbf{next}.
Its semantics is given by $\bV'=\{0,1,2\}$ with $\top^{\bV'}=2$, $\textbf{Yes}^{\bV'}=1$ and
$\textbf{No}^{\bV'}=0$, while $\textbf{next}^{\bV'}(n)=(n+1) \textsf{mod}\,3$.
Let $\sim$ be the semantic equivalence on $\bV' \cup \bV$ given by $- \sim 0 \not\sim 1 \sim 2 \sim +$.
Then the identity function $\cal I$ is a translation from $\fL$ into $\fL'$ that is valid up to $\sim$.
This is witnessed by the semantic translation ${\bR} :=\{(0,-),(1,+),(2,+)\} \subseteq {\sim}$.
The congruence closure of $\sim$ on $\bV'$ is the identity relation. This relation cannot be
extended to $\bV \cup \bV'$ in such a way that $\fT$ remains valid. For $1$ and $2$ need to be
different; yet to make $\fT$ valid both need to be related to $+$.
\end{example}

\section{A unique decomposition of terms}\label{sec:standard heads}

The results of Sections~\ref{sec:compositionality}--\ref{sec:preservation} apply only to languages that satisfy
two postulates, and to preorders $\asim$ that ``respect $\eqa$'' (defined in \sect{invariance}).
Below and in \sect{invariance} I formulate these postulates.

\begin{definition}{alpha}
\emph{$\alpha$-conversion} is the act of renaming all occurrences of a bound variable $X$ within the
scope of its binding into another variable, say $Y$, while avoiding capture of free variables. Here
one speaks of \emph{capture} when a free occurrence of $Y$ turns into a bound one.

Write $E \eqa F$ if expression $E$ can be converted into $F$ by multiple acts of $\alpha$-conversion.
\end{definition}
In languages where there are multiple types of bound variables, $\eqa$ allows conversion of all of them.
In a $\pi$-calculus with recursion, for instance, there could be bound process variables $X\in\V$ as well
as bound names $x\in\N$.
The last two conversions in the right column of \df{structural congruence} define $\alpha$-conversion for $\pi$-calculus names.

The following notation and observation is used below and in \sect{by construction}.

\begin{definition}{composition of substitutions}
Given two substitutions $\sigma,\xi:\V\rightharpoonup\IT_\fL$, their composition $\xi\bullet\sigma$
with $\dom(\xi\bullet\sigma)=\dom(\sigma)$ is given by $(\xi\bullet\sigma)(X)=\sigma(X)[\xi]$.
\end{definition}

\begin{observation}{composition of substitutions}
If $\dom(\sigma)=\fv(E)$ then $E[\sigma][\xi]\eqa E[\xi\bullet\sigma]$.
\end{observation}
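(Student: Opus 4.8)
The plan is to prove $E[\sigma][\xi]\eqa E[\xi\bullet\sigma]$ by structural induction on the $\fL$-expression $E$, following the grammar of \sect{closed-term} (variables, operator applications, binding constructs), and to keep the side condition $\dom(\sigma)=\fv(E)$ aligned with the induction by passing to the restriction of $\sigma$ to the free variables of each subexpression. Two standing facts will be used throughout: that $\eqa$ is a congruence, so that an $\alpha$-equivalence established for the arguments of a construct lifts to the whole expression; and that the value of $E[\tau]$ depends only on $\tau$ restricted to $\fv(E)$, which lets me freely trim or pad the domain of a substitution by identity bindings without changing the result.

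For the base case $E=X$ the hypothesis forces $\dom(\sigma)=\{X\}$, so both sides equal $\sigma(X)[\xi]=(\xi\bullet\sigma)(X)$ outright. For an operator application $E=f(E_1,\dots,E_n)$, let $\sigma_i$ be the substitution with domain $\fv(E_i)$ that agrees with $\sigma$ there; then $\dom(\sigma_i)=\fv(E_i)$ and the induction hypothesis gives $E_i[\sigma_i][\xi]\eqa E_i[\xi\bullet\sigma_i]$. Since substitution distributes over $f$ and $E_i[\sigma]=E_i[\sigma_i]$, the left-hand side is (up to $\eqa$) $f(E_1[\sigma_1][\xi],\dots,E_n[\sigma_n][\xi])$; and a direct check that $(\xi\bullet\sigma)(Y)=\sigma(Y)[\xi]=\sigma_i(Y)[\xi]=(\xi\bullet\sigma_i)(Y)$ for $Y\in\fv(E_i)$ shows $E_i[\xi\bullet\sigma]=E_i[\xi\bullet\sigma_i]$, so the right-hand side is $f(E_1[\xi\bullet\sigma_1],\dots,E_n[\xi\bullet\sigma_n])$. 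The two agree argumentwise up to $\eqa$, and the congruence property of $\eqa$ closes the case.

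The delicate case, which I expect to be the main obstacle, is a binding construct, say $E=\mu X.E'$ with $\fv(E)=\fv(E')\setminus\{X\}$, where capture-avoidance interacts with both substitutions. Here I would first invoke $\alpha$-conversion to choose a representative of $E$ in which the bound variable $X$ is fresh: $X\notin\dom(\sigma)\cup\dom(\xi)$ and $X$ occurs free in no $\sigma(Y)$ and no $\xi(Z)$ (possible because the variable supply $\V$ is sufficiently large). This freshness guarantees that no renaming is triggered when either substitution is pushed through the binder, so that $E[\sigma][\xi]\eqa\mu X.(E'[\sigma][\xi])$ and $E[\xi\bullet\sigma]\eqa\mu X.(E'[\xi\bullet\sigma])$. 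Let $\sigma'$ be the substitution with domain $\fv(E')$ that sends $X$ to $X$ and every other free variable $Y$ of $E'$ to $\sigma(Y)$; then $\dom(\sigma')=\fv(E')$ and $E'[\sigma']=E'[\sigma]$, so the induction hypothesis applies to $E'$ and $\sigma'$, yielding $E'[\sigma][\xi]\eqa E'[\xi\bullet\sigma']$. Since $X\notin\dom(\xi)$ one has $(\xi\bullet\sigma')(X)=X$, while $(\xi\bullet\sigma')$ agrees with $(\xi\bullet\sigma)$ on $\dom(\sigma)$ and $X$ is left untouched by both; hence $E'[\xi\bullet\sigma']=E'[\xi\bullet\sigma]$. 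Chaining these $\alpha$-equivalences through the binder, using again that $\eqa$ is a congruence, gives $E[\sigma][\xi]\eqa E[\xi\bullet\sigma]$.

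The subtlety to be careful about is precisely the bookkeeping behind the freshness assumption: one must confirm that the $\alpha$-renaming chosen for $E$ neutralises exactly the capture that either substitution would otherwise force, which is what freshness of $X$ relative to $\sigma$ and $\xi$ secures, and that padding $\sigma$ to $\sigma'$ is harmless because substitution ignores bindings outside the free variables. Languages with several kinds of bound variable (for instance process variables $X\in\V$ and names $x\in\N$ in the $\pi$-calculus, as noted after \df{substitution}) are treated identically, applying $\eqa$ to rename each kind as needed.
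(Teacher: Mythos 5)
Your proof is correct. Note, however, that the paper offers no proof of \obs{composition of substitutions} at all: it is stated as an unproved observation, regarded as a routine property of capture-avoiding substitution. Your structural induction — trivial base case, congruence of $\eqa$ for operator applications, and $\alpha$-renaming the bound variable to be fresh for both $\sigma$ and $\xi$ (and for the free variables of their ranges) so that neither substitution triggers capture at the binder — is exactly the standard argument one would supply, and the padding of $\sigma$ to $\sigma'$ with the identity on the bound variable correctly maintains the side condition $\dom(\sigma)=\fv(E)$ through the induction.
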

An expression $E$ of the form $f(E_1,\ldots,E_n\hspace{-1pt})$ can be written as $H[\sigma]$, where $H$ is an
expression $f(X_1,\ldots,X_n\hspace{-1pt})$ and $\sigma:\{X_1,\ldots,X_n\}\rightarrow\IT_\fL$ is the substitution
with $\sigma(X_i)=E_i$ for $i=1,\ldots,n$. Here $H$ and $\sigma$ are completely determined by $E$,
except for the choice of the variables $X_1,\ldots,X_n$. The term $H$ is called a \emph{head} of $E$.
In this paper, for the proofs of \sect{compositionality}, I propose a unique decomposition of expressions $E$ into
$H$ and $\sigma$, by making an arbitrary choice for the variables $X_1,\ldots,X_n$. Moreover, I
extend this decomposition to all terms $E$ that are not variables. This requires a postulate that
says, in essence, that such a decomposition is always possible, and restricting attention to
languages satisfying this postulate.

\begin{definitionc}{prefix}{\cite{vG12}}
A term $E\mathbin\in\IT_\fL$ is a \emph{prefix} of a term $F\!$, written $E\mathbin\leq F\!$, if
$F\mathbin{\stackrel{\alpha}=}E[\sigma]$ for some substitution $\sigma$ with $\dom(\sigma)=\fv(E)$.
\end{definitionc}
Note that $E[\textit{id}_E]=E$, where $\textit{id}_E:\fv(E)\rightarrow\IT_\fL$ is the identity.
Moreover, if $\dom(\sigma)=\fv(E)$ then $E[\sigma][\xi]\eqa E[\xi\bullet\sigma]$, by \obs{composition of substitutions}.
It follows that $\leq$ is reflexive and transitive, and hence a preorder.
Write $\equiv$ for the kernel of $\leq$, i.e.\ $E\equiv F$ iff $E\leq F \wedge F\leq E$.
Note that $E\eqa F$ implies $E\equiv F$. If $E\equiv F$ then $E$ can be
converted into $F$ by means of an injective renaming of its variables.

\begin{definitionc}{head}{\cite{vG12}}
A term $H\in\IT_\fL$ is a \emph{head} if $H$ is not a single variable and $E\leq H$
implies that $E$ is a single variable or $E\equiv H$.
It is a \emph{head of} another term $F$ if it is a head, as well as a prefix of $F$.
\end{definitionc}

\begin{example}{head}
Let $c$ be a constant, $g$ a unary operator, $f$ a binary operator,
and $\mu X.E$ the recursion construct of CCS\@.
Then $f(X,Y)$ is a head of the term $f(c,g(c))$, and
$\mu X.f(Y,g(g(X)))$ is a head of $\mu X.f(g(c),g(g(X)))$.
See \cite{vG12} for further detail.
\end{example}

\begin{postulatec}{head}{\cite{vG12}}
Each expression $E$, if not a variable, has a head, which is unique up to $\equiv$.
\end{postulatec}
This is easy to show for each common type of system description language, and I am not
aware of any counterexamples.  However, while striving for maximal generality, I consider
languages with (recursion-like) constructs that are yet to be invented, and in view of
those, this principle has to be postulated rather than derived.
This means that in this section I consider only languages that satisfy this postulate.

Pick a representative from each $\equiv$-equivalence class of
heads, and call the chosen representatives \emph{standard heads}.
Likewise, pick a representative from each $\eqa$-class of terms,
and call a substitution $\sigma$ where $\sigma(X)$ is such a
representative for each $X\in\dom(\sigma)$ a \emph{standard substitution}.
Now each expression $E\notin\V$ can uniquely be written as $E\eqa
H[\sigma]$, with $H$ a standard head and $\sigma$ a standard substitution
with $\dom(\sigma)=\fv(E)$. I will refer to the pair $H,\sigma$ as the
\emph{standard decomposition} of $E$.

\section[Invariance of meaning under alpha-conversion]
        {Invariance of meaning under $\alpha$-conversion}\label{sec:invariance}

Write $\val \eqa_{\fL} \wal$, with $\val,\wal\in \bV$, iff there are terms $E,F\in\IT_\fL$ with $E\eqa F$, and a valuation
$\zeta:\V\rightarrow\bV$ such that $\denote{E}_\fL(\zeta)=\val$ and $\denote{F}_\fL(\zeta)=\wal$.
This relation is reflexive and symmetric.

In \cite{vG12} I limited attention to languages satisfying
\begin{equation}\label{alpha}\textit{
if $E\eqa F$ then $\denote{E}_\fL=\denote{F}_\fL$.}
\end{equation}
This postulate says that the meaning of an expression is invariant under $\alpha$-conversion.
It can be reformulated as the requirement that $\eqa_{\fL}$ is the identity relation.
This postulate is satisfied by all my intended applications, except for the important class
of closed-term languages.
Languages like CCS and the $\pi$-calculus can be regarded as falling
in this class (although it is also possible to declare the meaning of a term under a valuation to be
an $\eqa$-equivalence class of closed terms).
To bring this type of application within the scope of my theory, here
I weaken this postulate by requiring merely that $\eqa_{\hspace{-2pt}\fL}$ is an equivalence.%
\begin{postulate}{alpha}
$\eqa_\fL$ is an equivalence relation.
\end{postulate}
This postulate is needed for the results of Sections~\ref{sec:compositionality}--\ref{sec:preservation}.
I also need to restrict attention to preorders $\asim$ with $\mathord{\eqa_\fL} \subseteq \mathord{\asim}$.
When that holds I say that the preorder $\asim$ \emph{respects} $\eqa_\fL$.
If (\ref{alpha}) holds---which strengthens of \pos{alpha}---then \emph{any} preorder respects $\eqa_\fL$.

\section{Compositionality}\label{sec:compositionality}

An important property of translations, defined below, is \emph{compositionality}.
In this section show I that any valid translation up to a preorder $\asim$ can be modified into such
a translation that moreover is compositional, provided one restricts attention to languages
that satisfy Postulates~\ref{post:head} and~\ref{post:alpha}, and preorders $\asim$ that respect $\eqa$.

\begin{definition}{compositionality}
A translation $\fT$ from $\fL$ into $\fL'$ is \emph{compositional} if
\begin{enumerate}[(1)]
\item\label{comp1}
$\fT(E[\sigma])\eqa \fT(E)[\fT\circ\sigma]$ for each $E\in\IT_\fL$ and $\sigma:\fv(E)\rightarrow\IT_\fL$,
\item\label{comp2}
$E\eqa F$ implies $\fT(E) \eqa \fT(F)$ for all $E,F\in\IT_\fL$,
\item\label{comp3}
and moreover $\fT(X)=X$ for each $X\in\V$.
\end{enumerate}
\end{definition}
In case $E=f(t_1,\ldots,t_n)$ for certain $t_i\in\IT_\fL$ this amounts to
$\fT(f(t_1,\ldots,t_n)) \eqa  E_f(\fT(t_1),\ldots,\fT(t_n))$,
where $E_f:=\fT(f(X_1,\ldots,X_n))$ and $E_f(u_1,\ldots,u_n)$ denotes the result of the simultaneous
substitution in this expression of the terms $u_i\in\IT_{\fL'}$ for the free variables $X_i$,
for $i=1,\ldots,n$. The first requirement of \df{compositionality} is more general and covers
language constructs other than functions, such as recursion. Requiring equality rather than $\eqa$
is too demanding, as the following example illustrates.
\begin{example}{alpha compositionality}
Take a source language $\fL$ that features an unary replication operator $!$, as in the
$\pi$-calculus, and a target language $\fL'$ that instead has a recursion construct $\mu X.E$,
as in CCS; both languages have a constant $0$. A suitable translation $\fT$ satisfies
$\fT(!X_1) = \mu X. (X|X_1)$ and $\fT(0)\mathbin=0$.
Applying \df{compositionality} with $\sigma(X_1)=0$ gives $\fT(!0) \eqa \mu X. (X|0)$, whereas
applying it with $\sigma(X_1)=X$ gives $\fT(!X) \eqa \mu Y. (Y|X)$.
Here the bound variable $X$ needed to be renamed (into $Y$) to avoid capture of the free
variable $X$ that is substituted for $X_1$. Furthermore, applying \df{compositionality} on
$\fT(!X) \eqa \mu Y. (Y|X)$ with $\sigma(X)=0$ gives $\fT(!0) \eqa \mu Y. (Y|0)$. Since $X\neq Y$,
this shows that $\eqa$ cannot consistently be replaced by $=$.
\end{example}

\begin{lemma}{composition is compositional}
If $\fT_1:\IT_{\fL_1} \rightarrow \IT_{\fL_2}$ and $\fT_2:\IT_{\fL_2} \rightarrow \IT_{\fL_3}$ are
compositional translations, then so is their composition
$\fT_2\circ\fT_1:\IT_{\fL_1}\rightarrow\IT_{\fL_3}$, defined by
$\fT_2\circ\fT_1(E):=\fT_2(\fT_1(E))$ for all $E\in\fL_1$.
\end{lemma}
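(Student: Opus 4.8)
The plan is to check the three clauses of \df{compositionality} for $\fT:=\fT_2\circ\fT_1$ separately; clauses~(\ref{comp3}) and~(\ref{comp2}) are routine, and clause~(\ref{comp1}) carries all the content. Throughout, fix $E\in\IT_{\fL_1}$ and $\sigma:\fv(E)\rightarrow\IT_{\fL_1}$.

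For clause~(\ref{comp3}): $\fT(X)=\fT_2(\fT_1(X))=\fT_2(X)=X$, using~(\ref{comp3}) for $\fT_1$ and then for $\fT_2$. For clause~(\ref{comp2}): if $E\eqa F$ then $\fT_1(E)\eqa\fT_1(F)$ by~(\ref{comp2}) for $\fT_1$, whence $\fT(E)=\fT_2(\fT_1(E))\eqa\fT_2(\fT_1(F))=\fT(F)$ by~(\ref{comp2}) for $\fT_2$.

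For clause~(\ref{comp1}) I would peel the two translations off in turn. Applying~(\ref{comp1}) for $\fT_1$ gives $\fT_1(E[\sigma])\eqa\fT_1(E)[\fT_1\circ\sigma]$, and feeding this through $\fT_2$ with the help of~(\ref{comp2}) for $\fT_2$ yields $\fT(E[\sigma])=\fT_2(\fT_1(E[\sigma]))\eqa\fT_2\big(\fT_1(E)[\fT_1\circ\sigma]\big)$. I would then apply~(\ref{comp1}) for $\fT_2$, to the $\fL_2$-expression $\fT_1(E)$ and the substitution $\fT_1\circ\sigma$, to obtain $\fT_2\big(\fT_1(E)[\fT_1\circ\sigma]\big)\eqa\fT_2(\fT_1(E))[\fT_2\circ(\fT_1\circ\sigma)]$. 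Since function composition is associative, $\fT_2\circ(\fT_1\circ\sigma)=\fT\circ\sigma$ and $\fT_2(\fT_1(E))=\fT(E)$, so the right-hand side is precisely $\fT(E)[\fT\circ\sigma]$, completing the chain.

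The one delicate step is the invocation of clause~(\ref{comp1}) for $\fT_2$: as stated, that clause concerns a substitution whose domain is the free-variable set of the translated expression, namely $\fv(\fT_1(E))$, whereas $\fT_1\circ\sigma$ has domain $\dom(\sigma)=\fv(E)$. I expect reconciling these two domains to be the main obstacle. They coincide as soon as compositional translations are known to preserve free variables, $\fv(\fT(E))=\fv(E)$; this holds for the translations relevant to the theory, where $\fT(X)=X$ and no operator is translated by a term carrying extra free variables, so I would establish it first, after which $\dom(\fT_1\circ\sigma)=\fv(E)=\fv(\fT_1(E))$ and the step goes through verbatim. Where one does not wish to assume it, the domains must instead be aligned by hand, extending $\fT_1\circ\sigma$ by the identity on $\fv(\fT_1(E))\setminus\fv(E)$ and checking, via $\fT_2(X)=X$, that the resulting substitution agrees with $\fT_2\circ(\fT_1\circ\sigma)$ on every free variable actually occurring in $\fT_2(\fT_1(E))$; this free-variable bookkeeping is the only part that demands real care.
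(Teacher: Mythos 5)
Your proof is correct and essentially identical to the paper's own: the paper checks the same three clauses, obtaining clause~(1) via the chain $\fT_2(\fT_1(E[\sigma]))\eqa\fT_2(\fT_1(E)[\fT_1\circ\sigma])\eqa\fT_2(\fT_1(E))[\fT_2\circ\fT_1\circ\sigma]$ (clause~(2) of $\fT_2$ for the first step, clause~(1) of $\fT_2$ for the second), and it simply does not engage with the domain mismatch you flag---it states the chain for arbitrary partial substitutions $\sigma:\V\rightharpoonup\IT_{\fL_1}$. Note, however, that your preferred remedy is not available at the paper's level of generality: compositional translations there need not preserve free variables (e.g.\ \ex{preserving but not valid} has the compositional $\fT(\textbf{0})=f(X_0)$, and a compositional $\fT_1$ may also discard free variables), so the explicit by-hand alignment you sketch at the end---rather than the assumption $\fv(\fT(E))=\fv(E)$---is the route consistent with the paper, and your instinct that this is where the real care is needed is sound.
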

\begin{proof}
(1) $\fT_2(\fT_1(E[\sigma]))\eqa \fT_2(\fT_1(E)[\fT_1\circ\sigma]) \eqa
\fT_2(\fT_1(E))[\fT_2\circ\fT_1\circ\sigma])$ for each $\sigma:\V\rightharpoonup\IT_{\fL_1}$ and $E\in\IT_{\fL_1}$.
Here the derivation of the first $\eqa$ uses Property (\ref{comp2}) of \df{compositionality}---and
this is the reason for requiring that property.

(2) $E\eqa F$ implies $\fT_1(E) \eqa \fT_1(F)$ and hence $\fT_2(\fT_1(E)) \eqa \fT_2(\fT_1(F))$ for
all $E,F\in\IT_\fL$.

(3) $\fT_2(\fT_1(X)) = \fT_2(X)=X$ for each $X\in\V$.
\end{proof}

\subsection{Translations that are compositional by construction}\label{sec:by construction}

The following proposition shows that when verifying that a translation is compositional it suffices
to check requirement (\ref{comp1}) of \df{compositionality} for the case that the term $E$ is a
(standard) head only. In most applications (including the one of \sect{asynpi}) a translation is
defined inductively, in such a way that (\ref{comp1}) for $E$ a head, as well as (\ref{comp2}) and
(\ref{comp3}), hold by definition; in such cases compositionality follows.

\begin{proposition}{compositionality}
Let $\fL$ be a language satisfying \pos{head}.
Any translation $\fT$ from $\fL$ into $\fL'$ satisfying
\begin{center}
\begin{tabular}{ll}
$\fT(X) = X$ & for $X\in \V$, and\\
$\fT(E) \eqa \fT(H)[\fT\circ\sigma]$ & when $E\eqa H[\sigma]$ with $H,\sigma$ the standard
  decomposition of $E$.
\end{tabular}
\end{center}
is compositional.
\end{proposition}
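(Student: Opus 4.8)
The plan is to verify the three clauses of \df{compositionality} in turn, treating the substitution named $\sigma$ in clause~(\ref{comp1}) as $\xi$ throughout, so that $\sigma$ is free to denote the substitution of a standard decomposition. Clause~(\ref{comp3}) is literally the first displayed hypothesis. For clause~(\ref{comp2}), if $E\in\V$ then $E\eqa F$ forces $F=E$, since $\alpha$-conversion cannot rename a free variable, and $\fT(E)=\fT(F)$ follows. If $E\notin\V$ then neither is $F$; by \pos{head} each has a unique standard decomposition, and as $E\eqa F$ they share it, say $E\eqa H[\sigma]\eqa F$ with $H,\sigma$ standard. The second hypothesis then gives $\fT(E)\eqa\fT(H)[\fT\circ\sigma]\eqa\fT(F)$, so $\fT(E)\eqa\fT(F)$.

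The substance is clause~(\ref{comp1}), namely $\fT(E[\xi])\eqa\fT(E)[\fT\circ\xi]$, which I would prove by well-founded induction on the structure of $E$ (measuring, say, the number of operator and construct occurrences). For $E=X$ both sides reduce to $\fT(\xi(X))$. For $E\notin\V$, take the standard decomposition $E\eqa H[\sigma]$ with $\dom(\sigma)=\fv(H)$. Since substitution preserves $\eqa$, $E[\xi]\eqa H[\sigma][\xi]$, and \obs{composition of substitutions} rewrites this as $H[\xi\bullet\sigma]$, where $(\xi\bullet\sigma)(X)=\sigma(X)[\xi]$. As $H$ is a standard head that is a prefix of $E[\xi]$ (by \df{prefix}, since $\dom(\xi\bullet\sigma)=\fv(H)$), it is the standard head of $E[\xi]$; writing $\tau$ for the standard substitution with $\tau(X)\eqa\sigma(X)[\xi]$, the pair $H,\tau$ is the standard decomposition of $E[\xi]$. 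Applying the second hypothesis both to $E[\xi]$ and to $E$ reduces the goal to $\fT(H)[\fT\circ\tau]\eqa\fT(H)[\fT\circ\sigma][\fT\circ\xi]$.

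To finish I compare the two substitutions on their common domain $\fv(H)$. A second use of \obs{composition of substitutions} rewrites the right-hand side as $\fT(H)[(\fT\circ\xi)\bullet(\fT\circ\sigma)]$, whose value at $X$ is $\fT(\sigma(X))[\fT\circ\xi]$. On the left the value at $X$ is $\fT(\tau(X))$, and since $\tau(X)\eqa\sigma(X)[\xi]$, the already-proved clause~(\ref{comp2}) gives $\fT(\tau(X))\eqa\fT(\sigma(X)[\xi])$. Each $\sigma(X)$ is structurally smaller than $E$, so the induction hypothesis (with $\xi$ restricted to $\fv(\sigma(X))\subseteq\fv(E)$) yields $\fT(\sigma(X)[\xi])\eqa\fT(\sigma(X))[\fT\circ\xi]$. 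Hence the two substitutions agree up to $\eqa$ at every variable, and substituting $\eqa$-equivalent terms into $\fT(H)$ produces $\eqa$-equivalent results, which closes the induction.

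I expect the main obstacle to be not the algebra but the bookkeeping of free variables and substitution domains. Both appeals to \obs{composition of substitutions} require the inner substitution's domain to cover the free variables of the term it rewrites, and the concluding step needs the outer substitution $[\fT\circ\xi]$ to touch only the intended variables of $\fT(H)$. This is sound precisely because the argument variables $\fv(H)$ may be chosen fresh, hence disjoint from $\fv(E)=\dom(\xi)$, and because $\fT$ introduces no extraneous free variables, i.e.\ $\fv(\fT(H))\subseteq\fv(H)$ --- a property enjoyed by every translation defined by induction on heads, such as Boudol's of \sect{asynpi}. Checking that these freshness and containment conditions are maintained throughout the induction is the delicate part of the argument.
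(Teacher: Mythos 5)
Your proof follows the paper's own argument essentially step for step: the same induction on $E$ up to $\eqa$, the same use of \obs{composition of substitutions} to pass from $H[\sigma][\xi]$ to $H[\xi\bullet\sigma]$, and the same comparison of the substitutions $\fT\circ(\xi\bullet\sigma)$ and $(\fT\circ\xi)\bullet(\fT\circ\sigma)$ via the induction hypothesis applied to the proper subterms $\sigma(X)$ (your $\tau$ is the paper's standard substitution $\upsilon$). The free-variable bookkeeping you flag at the end is handled no more explicitly in the paper, which simply asserts $F[\fT\circ(\xi\bullet\sigma)]\eqa(F[\fT\circ\sigma])[\fT\circ\xi]$ for all $F\in\IT_{\fL'}$, so your version is, if anything, more candid about that delicate point.
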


\begin{proof}
By assumption, this translation satisfies Properties (\ref{comp2}) and (\ref{comp3}) of \df{compositionality}.
Next, I show that $\fT$ also satisfies Property (\ref{comp1}), using induction on $E/_{\!\!\stackrel{\alpha}=}$.
So let $E\mathbin\in\IT_\fL$ and $\xi\!:\fv(E)\rightarrow\IT_\fL$.
I have to show that $\fT(E[\xi])\eqa \fT(E)[\fT\circ\xi]$.
The case $E\in\V$ is trivial, so let $E\eqa H[\sigma]$, with $H$ a standard head and $\sigma$ a
standard substitution with $\dom(\sigma)=\fv(H)$.
For each free variable $X$ of $H$, $\sigma(X)$ is a proper subterm of $E$ up to $\eqa$,
so by the induction hypothesis $\fT(\sigma(X)[\xi])\eqa \fT(\sigma(X))[\fT\circ\xi]$.
Thus, for $X\in\fv(H)$,\\
$\begin{array}[t]{@{}l@{~}ll@{}}
(\fT\!\circ(\xi\bullet\sigma))(X)
&= \fT((\xi\bullet\sigma)(X))
& \mbox{by definition of functional composition $\circ$}
\\
&= \fT(\sigma(X)[\xi])
& \mbox{by definition of the operation $\bullet$ on substitutions}
\\
&\eqa \fT(\sigma(X))[\fT\circ\xi]
& \mbox{by induction, derived above}
\\
&= ((\fT\!\circ\xi)\bullet(\fT\!\circ\sigma))(X)
& \mbox{by definition of the operations $\circ$ and $\bullet$.}
\end{array}$\\
This shows that the substitutions $\fT\!\circ(\xi\bullet\sigma)$ and
$(\fT\!\circ\xi)\bullet(\fT\!\circ\sigma)$---both with domain $\fv(H)$---are equal up to $\alpha$-conversion,
from which it follows that $F[\fT\circ(\xi\bullet\sigma)] \eqa (F[\fT\circ\sigma])[\fT\circ\xi]$
for all terms $F\in\IT_{\fL'}$.
\\
$\begin{array}[b]{@{}l@{~}ll@{}}
\mbox{Hence}~ \fT(E[\xi])
&\eqa \fT(H[\sigma][\xi])
& \mbox{since $E\eqa H[\sigma]$, and thus $E[\xi]\eqa H[\sigma][\xi]$, using (\ref{comp2}) }
\\
&\eqa \fT(H[\xi\bullet\sigma])
& \mbox{by \obs{composition of substitutions}}
\\
&\eqa \fT(H[\upsilon])
& \mbox{for a standard substitution $\upsilon \eqa \xi\bullet\sigma$}
\\
&\eqa \fT(H)[\fT\circ\upsilon]
& \mbox{by assumption}
\\
&\eqa \fT(H)[\fT\circ(\xi\bullet\sigma)]
& \mbox{replacing $\upsilon$ by $\xi\bullet\sigma$, using (\ref{comp2})}
\\
&\eqa (\fT(H)[\fT\circ\sigma])[\fT\circ\xi]
& \mbox{derived above}
\\
&\eqa \fT(E)[\fT\circ\xi]
& \mbox{by assumption.}
\end{array}$
\end{proof}

\subsection{The denotation of a substitution}

To obtain the compositionality of valid translations, I need the concept of the \emph{denotation} of
a substitution as a transformation of valuations.
The semantic mapping $\denote{\ \ }_{\fL}:\IT_{\fL} \rightarrow ((\V\!\rightarrow\bV)\rightarrow\bV)$
extends to substitutions $\sigma$ by $\denote{\sigma}_\fL(\rho)(X):=\denote{\sigma(X)}_\fL(\rho)$
for all $X\mathbin\in\V$ and $\rho:\V\!\!\rightarrow\bV$---here $\sigma$ is extended to a total function by
$\sigma(Y):=Y$ for all $Y\not\in\dom(\sigma)$. Thus $\denote{\sigma}_\fL$ is of type
$(\V\rightarrow\bV)\rightarrow(\V\rightarrow\bV)$, i.e.\ a map from valuations to valuations.
The inductive nature of the semantic mapping $\denote{\ \ }_\fL$ ensures that
for each expression $E\in\IT_\fL$, substitution $\sigma:\V\rightharpoonup\IT_\fL$ and
valuation $\rho:\V\rightarrow\bV$ there exists a term $F$ with $E\eqa F$ such that
$
\denote{E[\sigma]}_\fL(\rho) = \denote{F[\sigma]}_\fL(\rho) = \denote{F}_\fL(\denote{\sigma}_\fL(\rho))
$,
and hence
\begin{equation}\label{inductive meaning}
\denote{E[\sigma]}_\fL(\rho) \eqa_{\fL} \denote{E}_\fL(\denote{\sigma}_\fL(\rho)).
\end{equation}
In case $E$ is $f(X_1,\ldots,X_n)$ this amounts to
$\denote{f(E_1,\ldots,E_n)}_\fL(\rho) = f^\bV(\denote{E_1}_\fL(\rho),\ldots,\denote{E_n}_\fL(\rho))$,
but the above is more general and anticipates language constructs other than functions, such as recursion.

\subsection[Closing a semantic translation under alpha-conversion]
           {Closing a semantic translation under $\alpha$-conversion}

The following lemma (assuming \pos{alpha}) says that $\eqa_\fL$ is a congruence.

\begin{lemma}{alpha congruence}
Let $E\in\IT_\fL$ and $\nu,\rho:\V\rightarrow\bV$.
If $\nu \eqa_\fL \rho$ then $\denote{E}_\fL(\nu) \eqa_\fL \denote{E}_\fL(\rho)$.
\end{lemma}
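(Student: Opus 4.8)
The plan is to prove $\denote{E}_\fL(\nu) \eqa_\fL \denote{E}_\fL(\rho)$ by exhibiting a single pair of $\alpha$-equivalent terms whose denotations under one common valuation are exactly $\denote{E}_\fL(\nu)$ and $\denote{E}_\fL(\rho)$, and then bridging the remaining gaps with (\ref{inductive meaning}) and the transitivity supplied by \pos{alpha}. First I would unfold the hypothesis $\nu \eqa_\fL \rho$, which under the pointwise convention means $\nu(X) \eqa_\fL \rho(X)$ for every $X$. Since only the restriction of a valuation to $\fv(E)$ affects $\denote{E}_\fL$, it suffices to treat $X\in\fv(E)$. For each such $X$ the definition of $\eqa_\fL$ supplies $\alpha$-equivalent terms $G_X \eqa H_X$ and a valuation $\zeta_X$ with $\denote{G_X}_\fL(\zeta_X) = \nu(X)$ and $\denote{H_X}_\fL(\zeta_X) = \rho(X)$; note that $\fv(G_X)=\fv(H_X)$, as $\eqa$ preserves free variables.

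Next I would make these local witnesses compatible with one global valuation. The idea is to rename, for each $X$, the free variables of $G_X$ and $H_X$ along an injection $\iota_X$ whose ranges are pairwise disjoint and disjoint from $\fv(E)$; consistently renaming free variables preserves both $\eqa$ and the denotation once the valuation is adjusted, so I obtain $G_X^{\iota} \eqa H_X^{\iota}$ with $\denote{G_X^{\iota}}_\fL(\xi) = \nu(X)$ and $\denote{H_X^{\iota}}_\fL(\xi) = \rho(X)$ for the valuation $\xi$ that reproduces $\zeta_X$ on $\range(\iota_X)$ and is arbitrary elsewhere. Setting $\sigma(X) = G_X^{\iota}$ and $\sigma'(X) = H_X^{\iota}$ (identity off $\fv(E)$), the substitution congruence of $\alpha$-conversion gives $E[\sigma] \eqa E[\sigma']$, whence $\denote{E[\sigma]}_\fL(\xi) \eqa_\fL \denote{E[\sigma']}_\fL(\xi)$ directly from the definition of $\eqa_\fL$ (same terms up to $\eqa$, same valuation). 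Meanwhile $\denote{\sigma}_\fL(\xi)$ agrees with $\nu$ on $\fv(E)$ and $\denote{\sigma'}_\fL(\xi)$ agrees with $\rho$ on $\fv(E)$, so (\ref{inductive meaning}) yields $\denote{E[\sigma]}_\fL(\xi) \eqa_\fL \denote{E}_\fL(\nu)$ and $\denote{E[\sigma']}_\fL(\xi) \eqa_\fL \denote{E}_\fL(\rho)$. Chaining these three $\eqa_\fL$-links by transitivity (\pos{alpha}) delivers the claim.

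The main obstacle I anticipate is the bookkeeping in the middle step rather than any conceptual difficulty: I must guarantee that the free variables of the witnessing terms for different $X$ can be separated into pairwise disjoint fresh blocks, which is exactly where the standing assumption that $\V$ is ``sufficiently large'' is consumed. Note that the naive one-variable-at-a-time argument closed under transitivity fails here, because $\fv(E)$ may be infinite for constructs of infinite arity, so all variables must be handled simultaneously. The two auxiliary facts I lean on—that $\eqa$ is a congruence for substitution (so $G_X \eqa H_X$ for all $X$ implies $E[\sigma] \eqa E[\sigma']$) and that a consistent renaming of free variables preserves $\eqa$ and the denotation under the correspondingly renamed valuation—are standard properties of $\alpha$-conversion, and I would invoke them without detailed proof.
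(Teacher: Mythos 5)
Your proof is correct and follows essentially the same route as the paper's: choose witnesses $E_X\eqa F_X$ with a common valuation per variable, rename their free variables apart so a single global valuation $\zeta$ works, form the two substitutions, use $E[\sigma]\eqa E[\xi]$ together with (\ref{inductive meaning}), and close under the transitivity granted by \pos{alpha}. Your remark that all variables must be handled simultaneously (rather than one at a time) because $\fv(E)$ may be infinite matches the structure of the paper's argument.
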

\begin{proof}
Suppose $\nu \eqa_\fL \rho$. Then, for each $X\in\V$, $\nu(X) \eqa_\fL \rho(X)$, so there are terms
$E_X,F_X\in\IT_\fL$ with $E_X\eqa F_X$ and a valuation $\zeta_X:\V\rightarrow\bV$ such that
$\denote{E_X}_\fL(\zeta_X)=\nu(X)$ and $\denote{F_X}_\fL(\zeta_X)=\rho(X)$.
By renaming of variables one can assure that $\fv(E_X) \cap \fv(E_Y) = \emptyset$ for any different
$X,Y\in\fv(E)$. Here I assume that the set $\V$ of variables is sufficiently large.
Note that $\fv(F_X)=\fv(E_X)$ for all $X\in\fv(E)$.
Let $\zeta:\V\rightarrow\bV$ be a valuation satisfying $\zeta(Z)=\zeta_X(Z)$ for any $Z\in\fv(E_X)$
with $X\in\fv(E)$. 
Define the substitutions
$\sigma,\xi:\fv(E)\rightarrow\IT_\fL$ by $\sigma(X)=E_X$ and $\xi(X)=F_X$ for all $X\in\fv(E)$.
Then $\denote{\sigma(X)}_\fL(\zeta)=\nu(X)$ and $\denote{\xi(X)}_\fL(\zeta)=\rho(X)$ for all $X\in\fv(E)$.
Hence, using (\ref{inductive meaning}),\vspace{1pt}
$\denote{E}_\fL(\nu) = \denote{E}_\fL(\denote{\sigma}_\fL(\zeta)) \eqa_\fL \denote{E[\sigma]}_\fL(\zeta)$
and $\denote{E}_\fL(\rho) = \denote{E}_\fL(\denote{\xi}_\fL(\zeta)) \eqa_\fL \denote{E[\xi]}_\fL(\zeta)$.\vspace{1pt}
As $\sigma(X)\eqa\xi(X)$ for all $X\mathbin\in\V$ one has $E[\sigma]\eqa E[\xi]$, and thus
$\denote{E}_\fL(\nu) \eqa_\fL \denote{E[\sigma]}_\fL(\zeta) \eqa_\fL \denote{E[\xi]}_\fL(\zeta)
\eqa_\fL \denote{E}_\fL(\rho)$. Hence $\denote{E}_\fL(\nu) \eqa_\fL \denote{E}_\fL(\rho)$ with \pos{alpha}.
\end{proof}
Given a relation $\mathord{\bR}\subseteq\bV'\times\bV$, define $\bR^\alpha$ by
$\wal \bR \val$ iff $\exists \wal',\val'.\; \wal \eqa_{\fL'} \wal' \bR \val' \eqa_\fL \val$.

\begin{lemma}{correct up to alpha}
If a translation $\fT$ between languages $\fL$ and $\fL'$ that satisfy \pos{alpha} is correct
w.r.t.\ a semantic translation $\bR$, then it is also correct w.r.t.\ $\bR^\alpha$.
\end{lemma}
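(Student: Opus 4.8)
The plan is to reduce everything to \lem{alpha congruence}, which (under \pos{alpha}) tells us that $\eqa_\fL$ and $\eqa_{\fL'}$ act as congruences for the respective semantic mappings. Two things must be shown: that $\bR^\alpha$ is a semantic translation in the sense of \df{semantic translation}, and that $\fT$ satisfies the correctness condition of \df{correct R} for $\bR^\alpha$. The first is immediate: since $\eqa_{\fL'}$ and $\eqa_\fL$ are reflexive, the chain $\wal \eqa_{\fL'} \wal \bR \val \eqa_\fL \val$ witnesses $\mathord{\bR} \subseteq \mathord{\bR}^\alpha$, so the requirement $\forall \val\in\bV.\,\exists \wal\in\bV'.\, \wal \bR^\alpha \val$ is inherited directly from the corresponding property of $\bR$.

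The substance is the correctness clause. Fix $E\in\IT_\fL$ and valuations $\eta:\V\rightarrow\bV'$, $\rho:\V\rightarrow\bV$ with $\eta \bR^\alpha \rho$. By definition of $\bR^\alpha$, for each $X\in\V$ I can choose witnesses $\eta'(X)\in\bV'$ and $\rho'(X)\in\bV$ with $\eta(X)\eqa_{\fL'}\eta'(X)\bR\rho'(X)\eqa_\fL\rho(X)$; since the hypothesis holds at every variable, this defines total valuations $\eta',\rho'$ with $\eta\eqa_{\fL'}\eta'$, $\eta'\bR\rho'$ and $\rho'\eqa_\fL\rho$ (all componentwise). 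Correctness of $\fT$ w.r.t.\ $\bR$ then gives $\denote{\fT(E)}_{\fL'}(\eta') \bR \denote{E}_\fL(\rho')$. Applying \lem{alpha congruence} to $\fL'$ and the expression $\fT(E)$ turns $\eta\eqa_{\fL'}\eta'$ into $\denote{\fT(E)}_{\fL'}(\eta)\eqa_{\fL'}\denote{\fT(E)}_{\fL'}(\eta')$, and applying it to $\fL$ and $E$ turns $\rho'\eqa_\fL\rho$ into $\denote{E}_\fL(\rho')\eqa_\fL\denote{E}_\fL(\rho)$. Splicing these three links yields
$$\denote{\fT(E)}_{\fL'}(\eta) \eqa_{\fL'} \denote{\fT(E)}_{\fL'}(\eta') \bR \denote{E}_\fL(\rho') \eqa_\fL \denote{E}_\fL(\rho),$$
which is exactly the witness required for $\denote{\fT(E)}_{\fL'}(\eta) \bR^\alpha \denote{E}_\fL(\rho)$.

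There is little genuine difficulty once \lem{alpha congruence} is available: the whole argument is a three-step chase that brackets the $\bR$-correctness of $\fT$ between one application of the $\alpha$-congruence lemma on the target side and one on the source side. The only points requiring care are that \lem{alpha congruence} is legitimately invoked for \emph{both} languages — which is justified because by hypothesis $\fL$ and $\fL'$ both satisfy \pos{alpha} — and that the passage from $\eta\bR^\alpha\rho$ to the intermediate valuations $\eta',\rho'$ selects a witness independently for each variable, which is unproblematic. Note that the outer chase needs no transitivity of $\eqa_\fL$ or $\eqa_{\fL'}$, since $\eqa_{\fL'}\cdot\bR\cdot\eqa_\fL$ is literally the definition of $\bR^\alpha$; transitivity (i.e.\ the full force of \pos{alpha}) is consumed only inside \lem{alpha congruence}.
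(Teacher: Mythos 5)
Your proof is correct and follows essentially the same route as the paper's: pick componentwise witnesses $\eta \eqa_{\fL'} \eta' \bR \rho' \eqa_\fL \rho$, apply correctness w.r.t.\ $\bR$ in the middle, and close the chain with \lem{alpha congruence} on each side. The only addition is your explicit check that $\bR^\alpha$ is again a semantic translation, which the paper leaves implicit but which is the trivial observation $\mathord{\bR}\subseteq\mathord{\bR^\alpha}$ by reflexivity of $\eqa_\fL$ and $\eqa_{\fL'}$.
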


\begin{proof}
Let $\mathord{\bR}\subseteq\bV'\times\bV$ be a semantic translation, and $\fT$ a translation that is
correct w.r.t.\ $\bR$. Let $E\in\IT_\fL$, $\eta:\V\rightarrow\bV'$ and $\rho:\V\rightarrow\bV$, with
$\eta \bR^\alpha \rho$. Then there must be valuations $\theta:\V\rightarrow\bV'$ and $\nu:\V\rightarrow\bV$
with $\eta \eqa_{\fL'} \theta \bR \nu \eqa_\fL \rho$.
Since $\fT$ is correct w.r.t.\ $\bR$ one has $\denote{\fT(\E)}_{\fL'}(\theta) \bR \denote{\E}_\fL(\nu)$.
By \lem{alpha congruence}
$\denote{\fT(\E)}_{\fL'}(\eta) \eqa_{\fL'} \denote{\fT(\E)}_{\fL'}(\theta) \bR \denote{\E}_\fL(\nu) \eqa_\fL \denote{\E}_\fL(\rho)$
and thus $\denote{\fT(\E)}_{\fL'}(\eta) \bR^\alpha \denote{\E}_\fL(\rho)$.
\end{proof}

\subsection{Only the effect on standard heads matters}\label{only heads matter}

\begin{proposition}{heads suffice}
Let $\fL$ and $\fL'$ be languages that satisfy Postulates~\ref{post:head} and~\ref{post:alpha},
and $\mathord{\bR}\subseteq\bV'\times\bV$ a semantic translation.
A compositional translation $\fT:\IT_\fL\rightarrow\IT_{\fL'}$ is correct w.r.t.\ $\bR^\alpha$
iff $\denote{\fT(H)}_{\fL'}(\eta) \bR^\alpha \denote{H}_\fL(\rho)$ for all standard heads $H\in \IT_\fL$
and all valuations $\eta:\V\rightarrow\bV'$ and $\rho:\V\rightarrow\bV$ with $\eta\bR^\alpha\rho$.
\end{proposition}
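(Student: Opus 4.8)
The plan is to prove the two directions separately. The forward (``only if'') direction is immediate: if $\fT$ is correct w.r.t.\ $\bR^\alpha$, then $\denote{\fT(\E)}_{\fL'}(\eta) \bR^\alpha \denote{\E}_\fL(\rho)$ holds for \emph{all} expressions $\E$ and all $\eta \bR^\alpha \rho$, and standard heads $H$ are a special case. So the entire content is in the ``if'' direction. There I would assume the head condition and prove correctness w.r.t.\ $\bR^\alpha$ by induction on the $\eqa$-class of $\E$ --- the same well-founded induction used in \pr{compositionality}, justified by \pos{head}, which supplies the standard decomposition and makes each $\sigma(X)$ a proper subterm of $\E$ up to $\eqa$. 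The base case $\E = X \in \V$ is handled by compositionality clause (\ref{comp3}) of \df{compositionality}: since $\fT(X) = X$, both sides reduce to $\eta(X)$ and $\rho(X)$, which are $\bR^\alpha$-related because $\eta \bR^\alpha \rho$.

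For the inductive step I write $\E \eqa H[\sigma]$ with $H$ a standard head and $\sigma$ a standard substitution, $\dom(\sigma) = \fv(H)$. On the source side, $\E \eqa H[\sigma]$ gives $\denote{\E}_\fL(\rho) \eqa_\fL \denote{H[\sigma]}_\fL(\rho)$ directly from the definition of $\eqa_\fL$, and equation (\ref{inductive meaning}) then yields $\denote{\E}_\fL(\rho) \eqa_\fL \denote{H}_\fL(\denote{\sigma}_\fL(\rho))$. On the target side, compositionality clauses (\ref{comp1}) and (\ref{comp2}) give $\fT(\E) \eqa \fT(H)[\fT\circ\sigma]$, and a second use of (\ref{inductive meaning}) gives $\denote{\fT(\E)}_{\fL'}(\eta) \eqa_{\fL'} \denote{\fT(H)}_{\fL'}(\denote{\fT\circ\sigma}_{\fL'}(\eta))$.

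The key step is to show that the two induced valuations are $\bR^\alpha$-related, i.e.\ $\denote{\fT\circ\sigma}_{\fL'}(\eta) \bR^\alpha \denote{\sigma}_\fL(\rho)$. For $X \in \fv(H)$ the relevant component is $\denote{\fT(\sigma(X))}_{\fL'}(\eta) \bR^\alpha \denote{\sigma(X)}_\fL(\rho)$, which is precisely the induction hypothesis applied to the proper subterm $\sigma(X)$; for $X \notin \fv(H)$ one has $\sigma(X) = X$ and $\fT(X) = X$, so the component reduces to $\eta(X) \bR^\alpha \rho(X)$, supplied by $\eta \bR^\alpha \rho$. Feeding these $\bR^\alpha$-related valuations into the head hypothesis for $H$ yields $\denote{\fT(H)}_{\fL'}(\denote{\fT\circ\sigma}_{\fL'}(\eta)) \bR^\alpha \denote{H}_\fL(\denote{\sigma}_\fL(\rho))$. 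Chaining the three facts produces a chain $\denote{\fT(\E)}_{\fL'}(\eta) \eqa_{\fL'} u \bR^\alpha w \eqa_\fL \denote{\E}_\fL(\rho)$, where $u = \denote{\fT(H)}_{\fL'}(\denote{\fT\circ\sigma}_{\fL'}(\eta))$ and $w = \denote{H}_\fL(\denote{\sigma}_\fL(\rho))$.

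The main obstacle --- and the reason $\bR^\alpha$ rather than $\bR$ appears in the statement --- is collapsing this chain: I must verify that whenever $a \eqa_{\fL'} b$, $b \bR^\alpha c$ and $c \eqa_\fL d$, one has $a \bR^\alpha d$. Unfolding the definition of $\bR^\alpha$ and using that $\eqa_{\fL'}$ and $\eqa_\fL$ are \emph{transitive} (which is exactly \pos{alpha}), the two outer $\eqa$-steps are absorbed into the $\eqa$-closure already built into $\bR^\alpha$, giving $\denote{\fT(\E)}_{\fL'}(\eta) \bR^\alpha \denote{\E}_\fL(\rho)$ and closing the induction. Compositionality only ever delivers $\eqa$, never equality, so this absorption is indispensable; \pos{alpha} is precisely what licenses it.
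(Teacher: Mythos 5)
Your proposal is correct and follows essentially the same route as the paper's own proof: the trivial ``only if'' direction, then structural induction on $E$ up to $\eqa$, with the variable case discharged by $\fT(X)=X$, the inductive step decomposing $E\eqa H[\sigma]$ and establishing $\denote{\fT\circ\sigma}_{\fL'}(\eta) \bR^\alpha \denote{\sigma}_\fL(\rho)$ from the induction hypothesis before applying the head assumption, and the final chain collapsed into $\bR^\alpha$ via (\ref{inductive meaning}), compositionality and \pos{alpha}. Your explicit justification of the absorption step $a \eqa_{\fL'} b \bR^\alpha c \eqa_\fL d \Rightarrow a \bR^\alpha d$ is a point the paper leaves implicit, but it is the same argument.
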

\begin{proof}
``Only if'' follows immediately from \df{correct R}. So
assume $\denote{\fT(H)}_{\fL'}(\eta) \bR^\alpha \denote{H}_\fL(\rho)$ for all standard heads $H\in \IT_\fL$
and all $\eta:\V\rightarrow\bV'$ and $\rho:\V\rightarrow\bV$ with $\eta\bR^\alpha\rho$.
I have to show that $\fT$ is correct w.r.t.\ $\bR^\alpha$, i.e.\
that $\denote{\fT(\E)}_{\fL'}(\eta) \bR^\alpha \denote{\E}_\fL(\rho)$ for all
terms $\E\in \IT_\fL$ and all $\eta:\V\rightarrow\bV'$ and
$\rho:\V\rightarrow\bV$ with $\eta\bR^\alpha\rho$.
Let $\eta$ and $\rho$ be such valuations.
I proceed with structural induction on $E$, up to $\eqa$.
When handling a term $E\eqa H[\sigma]$, $\sigma(X)$ is, up to $\eqa$, a proper subterm of $E$ for
each free variable $X$ of $H$.
So by induction $\denote{\fT(\sigma(X))}_{\fL'}(\eta) \bR^\alpha \denote{\sigma(X)}_\fL(\rho)$.
The valuation $\denote{\sigma}_\fL(\rho)$ is defined such that
$\denote{\sigma}_\fL(\rho)(X)=\denote{\sigma(X)}_\fL(\rho)$ for each $X\in\V$.
Likewise, $\denote{\fT\circ\sigma}_{\fL'}(\eta)(X)=\denote{\fT(\sigma(X))}_{\fL'}(\eta)$ for each $X\in\V$.
Hence $\denote{\fT\circ\sigma}_{\fL'}(\eta) \bR^\alpha \denote{\sigma}_\fL(\rho)$.\hfill(*)
\begin{list}{$\bullet$}{\leftmargin 10pt}
\item $\denote{\fT(X)}_{\fL'}(\eta) = \denote{X}_{\fL'}(\eta)
\begin{array}[t]{@{~}ll}
=\eta(X) & \mbox{by definitions of $\fT$ and $\denote{\ \ }_{\fL'}$} \\
\bR^\alpha\rho(X) & \mbox{since $\eta\bR^\alpha\rho$} \\
=\denote{X}_{\fL}(\rho) & \mbox{by definition of $\denote{\ \ }_{\fL}$.}
\end{array}$
\item $\denote{\fT(E)}_{\fL'}(\eta)
\begin{array}[t]{@{~}l@{~}ll@{}}
\eqa_{\fL'} & \denote{\fT(H)[\fT\circ\sigma]}_{\fL'}(\eta) & \mbox{by the compositionality of $\fT$, since $E\eqa H[\sigma]$} \\
\eqa_{\fL'} & \denote{\fT(H)}_{\fL'}(\denote{\fT\circ\sigma}_{\fL'}(\eta)) & \mbox{by (\ref{inductive meaning})} \\
\bR^\alpha & \denote{H}_\fL(\denote{\sigma}_\fL(\rho)) & \mbox{by assumption, using (*) above\hspace{85pt}} \\
\eqa_\fL & \denote{H[\sigma]}_\fL(\rho)  & \mbox{by (\ref{inductive meaning})}\\
\eqa_\fL  & \denote{E}_\fL(\rho) & \mbox{since $E\eqa H[\sigma]$.}
\hfill\mbox{\qed}
\end{array}$
\end{list}
\end{proof}

\subsection{Correct and valid transitions can be made compositional}\label{sec:compo}

\begin{theorem}{compositionality}
\hspace{-1pt}Let $\fL$ and $\fL'$ be languages satisfying Postulates~\ref{post:head} and~\ref{post:alpha}.
If any correct translation from $\fL\!$ into $\fL'\!$ w.r.t.\ $\bR^\alpha$ exists,
then there is a compositional translation from $\fL$ into $\fL'\!$ that is correct w.r.t.\ $\bR^\alpha\!\!$.
\end{theorem}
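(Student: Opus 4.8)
The plan is to take the given translation, retain exactly its behaviour on \emph{standard heads}, and rebuild it on all other terms by a compositional recursion, so that \pr{compositionality} yields compositionality and \pr{heads suffice} yields correctness with essentially no extra work. So let $\fT_0$ be a translation from $\fL$ into $\fL'$ correct w.r.t.\ $\bR^\alpha$, which exists by hypothesis. Using \pos{head} I would define $\fT:\IT_\fL\rightarrow\IT_{\fL'}$ by well-founded recursion on terms (up to $\eqa$): put $\fT(X)=X$ for $X\in\V$, and $\fT(E)=\fT_0(H)[\fT\circ\sigma]$ for each non-variable $E$ whose standard decomposition is $(H,\sigma)$. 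By \pos{head} this decomposition is unique, and since each $\sigma(X)$ with $X\in\fv(H)=\dom(\sigma)$ is (up to $\eqa$) a variable or a proper subterm of $E$, the recursion bottoms out; as $\eqa$-equivalent terms have the same standard decomposition, $\fT$ is well defined on $\eqa$-classes.

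The pivotal observation is that $\fT$ agrees with $\fT_0$ on standard heads. A standard head $H$ is its own head up to $\equiv$, so its standard decomposition is $(H,\textit{id}_{\fv(H)})$, because $H[\textit{id}_{\fv(H)}]=H$ and the identity substitution on $\fv(H)$ is standard (each variable being its own $\eqa$-representative). Hence $\fT(H)=\fT_0(H)[\fT\circ\textit{id}_{\fv(H)}]=\fT_0(H)$, using $\fT(X)=X$ and that substituting variables for themselves changes nothing. In particular the defining clause rewrites as $\fT(E)=\fT(H)[\fT\circ\sigma]$ whenever $(H,\sigma)$ is the standard decomposition of $E$.

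With this identity, $\fT$ satisfies the hypotheses of \pr{compositionality} ($\fT(X)=X$ together with the head clause just displayed), and is therefore compositional. Being compositional with $\fL,\fL'$ satisfying Postulates~\ref{post:head} and~\ref{post:alpha}, \pr{heads suffice} reduces correctness w.r.t.\ $\bR^\alpha$ to the single requirement $\denote{\fT(H)}_{\fL'}(\eta)\bR^\alpha\denote{H}_\fL(\rho)$ for all standard heads $H$ and all $\eta\bR^\alpha\rho$. Since $\fT(H)=\fT_0(H)$ and $\fT_0$ is correct w.r.t.\ $\bR^\alpha$, this is exactly \df{correct R} instantiated at the expression $H$, and we are done.

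The main obstacle is conceptual: one has to recognise that, for compositional translations, correctness is completely determined by the action on standard heads, which is precisely the content of \pr{heads suffice}; granting that, correctness cannot be lost when the rest of the translation is re-derived compositionally. The one genuinely technical point underpinning this is the fact that a standard head decomposes as $(H,\textit{id})$ --- it is what makes the compositional reconstruction coincide with $\fT_0$ on the heads and, simultaneously, what guarantees the recursion terminates at variables.
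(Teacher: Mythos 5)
Your proof is correct and follows essentially the same route as the paper: define $\fT$ recursively via the standard decomposition, observe $\fT(H)=\fT_0(H)$ on standard heads by instantiating the defining clause at $H[\textit{id}_H]$, and then conclude compositionality from \pr{compositionality} and correctness from \pr{heads suffice}. Your additional remarks on well-foundedness of the recursion and on why the identity substitution is standard only make explicit what the paper leaves implicit.
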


\begin{proof}
Given a translation $\fT_0$ that is correct w.r.t.\ $\bR^\alpha$, define the translation $\fT$ inductively by
\begin{center}
\begin{tabular}{ll}
$\fT(X) := X$ & for $X\in \V$\\
$\fT(E) := \fT_0(H)[\fT\circ\sigma]$ & when $E\eqa H[\sigma]$ with $H,\sigma$ the standard
  decomposition of $E$.
\end{tabular}
\end{center}
By filling in $H[\textit{id}_H]$ for $E$, with $H$ a standard head and $\textit{id}_H$ the identity
substitution on $\fv(H)$, I obtain $\fT(H)=\fT_0(H)$ for each standard head $H$.
Hence, by \pr{compositionality}, $\fT$ is compositional.
Moreover, since $\fT_0$ is correct w.r.t.\ $\bR^\alpha$, for all standard heads $H\in \IT_\fL$
and all $\eta:\V\rightarrow\bV'$ and $\rho:\V\rightarrow\bV$ with $\eta\bR^\alpha\rho$
one has $\denote{\fT(H)}_{\fL'}(\eta) = \denote{\fT_0(H)}_{\fL'}(\eta) \bR^\alpha \denote{H}_\fL(\rho)$.
Thus, by \pr{heads suffice}, $\fT$ is correct w.r.t.\ $\bR^\alpha$.
\end{proof}

\begin{corollary}{compositionality}
Let $\fL$ and $\fL'$ be languages that satisfy Postulates~\ref{post:head} and~\ref{post:alpha},
and $\asim$ a preorder that respects $\eqa_\fL$ and $\eqa_{\fL'}$.
If any valid (or correct) translation from $\fL$ into $\fL'$ up to $\asim$ exists,
then there exists a compositional translation that is valid (or correct) up to $\asim$.
\end{corollary}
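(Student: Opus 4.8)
The plan is to assemble the corollary directly from \thm{compositionality}, the definitions of validity and correctness up to $\asim$, and \lem{correct up to alpha}; the only genuine content is the interplay between a semantic translation $\bR$, its $\alpha$-closure $\bR^\alpha$, and the preorder $\asim$. I would treat the validity case first. Suppose $\fT_0$ is valid up to $\asim$. By \df{valid} there is a semantic translation $\mathord{\bR}\subseteq\mathord{\asim}$ such that $\fT_0$ is correct w.r.t.\ $\bR$. Since $\fL$ and $\fL'$ satisfy \pos{alpha}, \lem{correct up to alpha} upgrades this to: $\fT_0$ is correct w.r.t.\ $\bR^\alpha$.

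Before invoking the main theorem I would check that $\mathord{\bR^\alpha}\subseteq\mathord{\asim}$ and that $\bR^\alpha$ is still a semantic translation. For the containment, chase the definition of $\bR^\alpha$: if $\wal\,\bR^\alpha\,\val$ then $\wal\eqa_{\fL'}\wal'\mathbin{\bR}\val'\eqa_\fL\val$ for some $\wal',\val'$, and since $\asim$ respects $\eqa_{\fL'}$ and $\eqa_\fL$ while $\mathord{\bR}\subseteq\mathord{\asim}$, transitivity of the preorder $\asim$ yields $\wal\asim\val$. That $\bR^\alpha$ remains a semantic translation follows from reflexivity of $\eqa_{\fL'}$ and $\eqa_\fL$, which gives $\mathord{\bR}\subseteq\mathord{\bR^\alpha}$, so every $\val\in\bV$ still has a counterpart in $\bV'$. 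Now \thm{compositionality}, whose hypothesis is met since a translation correct w.r.t.\ $\bR^\alpha$ exists, produces a \emph{compositional} translation $\fT$ correct w.r.t.\ $\bR^\alpha$; as $\mathord{\bR^\alpha}\subseteq\mathord{\asim}$ is a semantic translation, $\fT$ is valid up to $\asim$ by \df{valid}.

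For the correctness case I would start from $\fT_0$ correct up to $\asim$, so by \df{correct} $\asim$ is an equivalence, the restriction $\mathord{\bR}:=\mathord{\asim}\cap(\bV'\times\bV)$ is a semantic translation, and $\fT_0$ is correct w.r.t.\ this $\bR$; again \lem{correct up to alpha} makes $\fT_0$ correct w.r.t.\ $\bR^\alpha$. The key observation now is that $\bR^\alpha=\bR$. Indeed $\mathord{\bR}\subseteq\mathord{\bR^\alpha}$ by reflexivity as before, while the same chase shows $\mathord{\bR^\alpha}\subseteq\mathord{\asim}$; since $\bR^\alpha$ also stays inside $\bV'\times\bV$ (both $\eqa_{\fL'}$ and $\eqa_\fL$ remain within their own domains), it is contained in $\mathord{\asim}\cap(\bV'\times\bV)=\bR$. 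Applying \thm{compositionality} then yields a compositional $\fT$ correct w.r.t.\ $\bR^\alpha=\bR$, i.e.\ correct w.r.t.\ the restriction of the equivalence $\asim$, which is exactly what \df{correct} demands for $\fT$ to be correct up to $\asim$.

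I do not expect a serious obstacle: all the substantive work is already discharged by \thm{compositionality} (together with \pr{compositionality} and \pr{heads suffice} behind it) and by \lem{correct up to alpha}. The only point demanding care is the identity $\bR^\alpha=\bR$ in the correctness case, which hinges on $\asim$ being symmetric and transitive and on $\bR$ being precisely the restriction of $\asim$; this is where the hypothesis that $\asim$ respects both $\eqa_\fL$ and $\eqa_{\fL'}$ is essential, ensuring that closing under $\alpha$-conversion never escapes $\asim$.
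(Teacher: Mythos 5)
Your proof is correct and follows essentially the same route as the paper: extract a semantic translation $\mathord{\bR}\subseteq\mathord{\asim}$, upgrade to $\bR^\alpha$ via \lem{correct up to alpha}, invoke \thm{compositionality}, and use the hypothesis that $\asim$ respects $\eqa_\fL$ and $\eqa_{\fL'}$ to conclude $\mathord{\bR^\alpha}\subseteq\mathord{\asim}$. Your explicit verification that $\bR^\alpha=\bR$ when $\bR$ is the full restriction of $\asim$ to $\bV'\times\bV$ is exactly the detail the paper elides with ``follows in the same way'' for the correctness case, and it checks out.
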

\begin{proof}
Let $\fT: \IT_\fL \rightarrow \IT_{\fL'}$ be valid up to $\asim$. Then
$\fT$ is correct w.r.t.\ some semantic translation $\mathord{\bR}\subseteq\bV'\times\bV$
with $\mathord{\bR} \subseteq \mathord{\asim}$.
By \lem{correct up to alpha} $\fT$ is correct w.r.t.\ $\bR^\alpha$.
By \thm{compositionality} there exists a compositional translation $\fT': \IT_\fL \rightarrow \IT_{\fL'}$
that is correct w.r.t.\ $\bR^\alpha$.
Since $\asim$ respects $\eqa_\fL$ and $\eqa_{\fL'}$, one has
$\mathord{\eqa_\fL},\mathord{\eqa_{\fL'}},\mathord{\bR} \subseteq \mathord{\asim}$, and thus
$\mathord{\bR^\alpha} \subseteq \mathord{\asim}$.
The statement about correct translations up to $\asim$ follows in the same way,
or as a corollary by use of \thm{valid correct}.
\end{proof}
Hence, for the purpose of comparing the expressive power of languages,
valid translations between them can be assumed to be compositional.
For correct translations this was already established in \cite{vG12}, but assuming (\ref{alpha}),
a stronger version of \pos{alpha}.

\section{Translations from closed-term languages reflect target congruences}\label{sec:reflect}

I can now establish the theorem promised in \sect{integrating}.
In view of \cor{compositionality}, no great sacrifices are made by assuming that the translation
$\fT$ is compositional. Additional ``mild conditions'' needed here are
\pos{alpha} for $\fL'$ and $\approx$ respecting $\eqa_{\fL'}$.

\begin{theorem}{pulling back congruences}
Let $\fL$ be a closed-term language and $\fL'$ a language that satisfies \pos{alpha}.
Let $\fT$ be a compositional translation from $\fL$ into $\fL'$ that is valid up to $\sim$.
Let $\approx$ be any congruence for $\fL'$ containing $\eqa_{\fL'}$ and contained in $\sim$.
  Then $\fT$ is correct up to an equivalence $\approx_{\fT}$ on $\bV \cup \bV'$, contained in
  $\sim$, that on $\bV'$ coincides with $\approx$. 
\end{theorem}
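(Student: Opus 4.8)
The plan is to exhibit the equivalence $\approx_\fT$ explicitly as a pullback of $\approx$ and then to verify the three clauses of \df{correct}. First I would fix a semantic translation $\mathord{\bR}\subseteq\mathord{\sim}$ witnessing validity of $\fT$, so that $\fT$ is correct w.r.t.\ $\bR$. Since $\fL$ is a closed-term language, $\denote{E}_\fL(\rho)=E[\rho]$ and $\bV=\T_\fL$; moreover, as $\fT$ is compositional with $\fT(X)=X$, it does not introduce free variables and hence sends closed terms to closed terms, so for $P\in\bV$ the value $\psi(P):=\denote{\fT(P)}_{\fL'}$ is a \emph{single} element of $\bV'$, independent of the valuation. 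I would then define $\approx_\fT$ on $\bV\cup\bV'$ (assuming, as in \sect{congruence closure}, that $\bV\cap\bV'=\emptyset$) as the pullback of $\approx$ along the map $\Phi$ that is the identity on $\bV'$ and $\psi$ on $\bV$: that is, $a\approx_\fT b$ iff $\Phi(a)\approx\Phi(b)$. Being the pullback of an equivalence, $\approx_\fT$ is automatically an equivalence, and it restricts to $\approx$ on $\bV'$.

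The routine clauses come next. To see $\mathord{\approx_\fT}\subseteq\mathord{\sim}$ it suffices, by transitivity of $\sim$ and $\mathord{\approx}\subseteq\mathord{\sim}$, to check $\psi(P)\sim P$ for $P\in\bV$; this holds because $\psi(P)=\denote{\fT(P)}_{\fL'}(\eta)\bR\denote{P}_\fL(\rho)=P$ for any $\eta$ with $\eta\bR\rho$ (such $\eta$ exists since $\bR$ is a semantic translation, and the value is valuation-independent as $\fT(P)$ is closed), together with $\mathord{\bR}\subseteq\mathord{\sim}$. The restriction $\bR^{*}$ of $\approx_\fT$ to $\bV'\times\bV$ is $\{(v',P)\mid v'\approx\psi(P)\}$, and it is a semantic translation because $\psi(P)\bR^{*}P$ for every $P\in\bV$.

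The crux is to show that $\fT$ is correct w.r.t.\ $\bR^{*}$, i.e.\ that $\denote{\fT(E)}_{\fL'}(\eta)\approx\psi(E[\rho])$ whenever $\eta\bR^{*}\rho$, the latter unpacking to $\eta(X)\approx\psi(\rho(X))$ for all $X$. Here I would chain three facts. By compositionality $\fT(E[\rho])\eqa\fT(E)[\fT\circ\rho]$, so since $\eqa$-related $\fL'$-terms have $\eqa_{\fL'}$-related (by \pos{alpha} for $\fL'$), hence $\approx$-related (as $\mathord{\eqa_{\fL'}}\subseteq\mathord{\approx}$), meanings, one gets $\psi(E[\rho])\approx\denote{\fT(E)[\fT\circ\rho]}_{\fL'}$. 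By (\ref{inductive meaning}) for $\fL'$, and noting that $\rho(X)$ closed forces $\denote{\fT\circ\rho}_{\fL'}(\zeta)=\psi\circ\rho$ for every $\zeta$, this meaning equals $\denote{\fT(E)}_{\fL'}(\psi\circ\rho)$ up to $\eqa_{\fL'}$, and hence up to $\approx$. Finally, since $\eta\approx\psi\circ\rho$ as valuations and $\approx$ is a congruence for $\fL'$ applied to the $\fL'$-expression $\fT(E)$, one has $\denote{\fT(E)}_{\fL'}(\eta)\approx\denote{\fT(E)}_{\fL'}(\psi\circ\rho)$. Composing the three relations yields the claim, and \df{correct} then gives that $\fT$ is correct up to $\approx_\fT$.

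I expect the main obstacle to be the very first step: pinning each source value $P$ to a \emph{single} target $\approx$-class. Validity alone relates $P$ only to a whole $\sim$-class of candidates in $\bV'$, and relating $P$ to several $\approx$-inequivalent targets would destroy both transitivity of $\approx_\fT$ and its agreement with $\approx$ on $\bV'$. The canonical representative $\psi(P)=\denote{\fT(P)}_{\fL'}$ resolves this, but only because $\fL$ is a closed-term language and $\fT$ is compositional, so that $\fT(P)$ is closed and $\psi(P)$ is genuinely a single value; this is exactly where the closed-term hypothesis on the source is indispensable (cf.\ \ex{reflection}). The second delicate point is the congruence step for $\approx$, which is what permits replacing the arbitrary valuation $\eta$ by the canonical $\psi\circ\rho$; this is where the hypothesis that $\approx$ is a congruence for $\fL'$ is used essentially, and where merely having validity up to $\sim$ (with $\sim$ not a congruence) would be insufficient.
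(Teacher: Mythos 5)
Your overall strategy---pin each source value $P$ to a canonical representative in $\bV'$ via the translation and pull $\approx$ back along that map---is exactly the strategy of the paper's proof. But there is one genuine error at the foundation of your construction: you claim that because $\fT$ is compositional with $\fT(X)=X$, it ``does not introduce free variables and hence sends closed terms to closed terms'', so that $\psi(P):=\denote{\fT(P)}_{\fL'}$ is a single, valuation-independent element of $\bV'$. This does not follow from compositionality. Clause (1) of \df{compositionality} puts no constraint at all on the free variables of $\fT(H)$ for a head $H$ beyond those inherited from $H$; for a constant $c$ the only substitution with domain $\fv(c)=\emptyset$ is the empty one, so $\fT(c)$ may be any term whatsoever. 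The paper is explicit about this: the remark after \df{respects} says that ``$\fT(\p)$ need not be a closed term, even if $\p$ is'', and \ex{preserving but not valid} exhibits a translation that is ``compositional by construction'' with $\fT(\textbf{0})=f(X_0)$, where $X_0$ is a fresh free variable. Only \emph{fvr}-translations (\sect{fvr}) have the property you assert, and the theorem does not assume $\fT$ is fvr. The error propagates: your later step ``$\rho(X)$ closed forces $\denote{\fT\circ\rho}_{\fL'}(\zeta)=\psi\circ\rho$ for every $\zeta$'' fails for the same reason.

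The repair is exactly the move the paper makes: pick any $\zeta:\V\rightarrow\bV$ and, using that $\bR$ is a semantic translation, a $\theta:\V\rightarrow\bV'$ with $\theta\bR\zeta$, and define $\psi(P):=\denote{\fT(P)}_{\fL'}(\theta)$ for $P\in\T_\fL=\bV$. Correctness of $\fT$ w.r.t.\ $\bR$ still gives $\psi(P)\bR\denote{P}_\fL(\zeta)=P$, so $\psi(P)\sim P$, and $\psi$ is now a genuine function (here the closed-term hypothesis on $\fL$ enters, as you rightly note: each value of $\bV$ is denoted by exactly one closed term, itself). With this $\psi$ the remainder of your argument goes through verbatim---the pullback $\approx_\fT$ along $\Phi$, the verification that its restriction to $\bV'\times\bV$ is a semantic translation, and the three-link chain using compositionality, (\ref{inductive meaning}) and the congruence property of $\approx$---and at that point your proof coincides with the paper's, up to the cosmetic difference that you verify \df{correct} directly for the full restriction of $\approx_\fT$ to $\bV'\times\bV$, whereas the paper establishes validity w.r.t.\ the smaller relation $\bR_\fT$ and then appeals to \thm{valid correct}.
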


\begin{proof}
  If $\bV=\emptyset$, the statement is trivial. Otherwise, pick a valuation $\zeta:\V\rightarrow\bV$.
  By assumption, $\fT$ is correct w.r.t.\ a semantic translation ${\bR} \subseteq {\sim}$.
  Now pick a valuation $\theta:\V\rightarrow \bV'$ with $\theta \bR \zeta$.
  Let $\equiv_{\fT}$ be the smallest equivalence relation on $\bV \cup \bV'$ such that
  $\denote{\fT(\p)}_{\fL'}(\theta) \equiv_{\fT} \denote{\p}_\fL$ for all $\p\in\T_\fL$.
  Since $\fT$ is correct w.r.t.\ $\bR$, one has
  $\denote{\fT(\p)}_{\fL'}(\theta) \bR \denote{\p}_\fL(\zeta) = \denote{\p}_\fL$ for all $\p\in\T_\fL$.
  Hence ${\equiv_{\fT}} \subseteq {\sim}$.

Define $\approx_{\fT}$ on $\bV \cup \bV'$ by \plat{$\val_1 \approx_{\fT} \val_2$} iff
\plat{$\val_1 \equiv_{\fT} \wal_1 \approx \wal_2 \equiv_{\fT} \val_2$} for some $\wal_1,\wal_2\in{\bV'}$.
Then ${\approx_{\fT}} \subseteq {\sim}$.
Since $\fL$ is a closed-term language, for each $\val \in \bV$ there is exactly one $\p\in \T_\fL$
with $\denote{\p}_\fL=\val$, namely $\p=\val$.
Hence, each $\equiv_{\fT}$-equivalence class on $\bV \cup \bV'$ contains exactly one element of $\bV'$.
It follows that $\approx_{\fT}$ is transitive, and hence an equivalence relation. Moreover, on $\bV'$
it coincides with $\approx$. It remains to show that $\fT$ is valid up to $\approx_{\fT}$.
For then \thm{valid correct} implies that $\fT$ is correct up to $\approx_{\fT}$.

Let $\bR_\fT := \{(\wal,\val) \mid \val \in\T_\fL \wedge \wal \eqa_{\fL'} \denote{\fT(\val)}_{\fL'}(\theta)\}$.
Then $\bR_\fT$ is a semantic translation with ${\bR_\fT} \subseteq {\approx_\fT}$,
using that ${\eqa_{\fL'}} \subseteq {\approx}$.
To show validity of $\fT$ up to $\approx_{\fT}$ it suffices to establish that $\fT$ is correct w.r.t.\ $\bR_\fT$.
So let $E\mathbin\in\IT_\fL$ and let $\eta\!:\V\!\rightarrow\bV'$ and $\rho\!:\V\! \rightarrow \bV$
be valuations with $\eta \bR_\fT \rho$.
I have to show that $\denote{\fT(\E)}_{\fL'}(\eta) \bR_\fT \denote{\E}_\fL(\rho)$.
Since $\fL$ is a closed-term language, $\rho$ also is a closed substitution,
and $\denote{\E}_\fL(\rho) := E[\rho] =: \denote{E[\rho]}_\fL$.\vspace{1pt}
Let $\fT\circ\rho:\V\rightarrow\T_{\fL'}$ be the substitution with $(\fT\circ\rho)(X)=\fT(\rho(X))$ for all $X$.
Then---filling in $(\eta(X),\rho(X))$ for $(\wal,\val)$ in the definition of $\bR_\fT$---%
$\eta(X)\eqa_{\fL'}\denote{(\fT\circ\rho)(X)}_{\fL'}(\theta)$ for all $X\in\V$, i.e.,
$\eta\eqa_{\fL'}\denote{\fT\circ\rho}_{\fL'}(\theta)$.
So, using \lem{alpha congruence}, (\ref{inductive meaning}) and compositionality,
\[\denote{\fT(\E)}_{\fL'}(\eta)  \eqa_{\fL'} \denote{\fT(\E)}_{\fL'}(\denote{\fT\circ\rho}_{\fL'}(\theta)) \eqa_{\fL'}
\denote{\fT(\E)[\fT\circ\sigma])}_{\fL'}(\theta) \eqa_{\fL'} {}\vspace{-8pt}\]
\hfill ~~~~$\denote{\fT(\E[\sigma])}_{\fL'}(\theta)\bR_\fT \denote{\E[\sigma]}_\fL = \denote{\E}_\fL(\rho)$.\\[1ex]
It follows that $\denote{\fT(\E)}_{\fL'}(\eta) \bR_\fT \denote{\E}_\fL(\rho)$ using \pos{alpha}.
\end{proof}
Since by \thm{pulling back congruences} $\fT$ is correct up to $\approx_\fT$, by \pr{congruence}
$\approx_\fT$ is a congruence for $\fL$, contained in $\sim$.\linebreak[3] Consequently, on $\bV$, $\approx_\fT$
is contained in $\sim_\fL^c$, the coarsest 1-hole congruence for $\fL$ contained in $\sim$.
As remarked in \sect{integrating}, for the example of \sect{asynpi} this inclusion is strict.

\section[Comparison with the definition of validity from {[vG12]}]
        {Comparison with the definition of validity from \cite{vG12}}\label{sec:respects}

An earlier definition of validity occurs in \cite{vG12}.
A shortcoming of that notion was that it only applied to languages for which all values in the
domain of interpretation are denotable by closed terms.
Here I show that the current notion of validity, which does not suffer from this limitation,
generalises the one of \cite{vG12}.\linebreak[3]\vspace{-13.6pt}

\noindent
Let $\fL$ and $\fL'$ be languages with
$\denote{\ \ }_{\fL}:\IT_{\fL} \rightarrow ((\V\rightarrow\bV)\rightarrow\bV)$
and
$\denote{\ \ }_{\fL'}:\IT_{\fL'} \rightarrow ((\V\rightarrow\bV')\rightarrow\bV')$.
The property of a language that all values in the domain of interpretation are denotable by closed
terms can be stated as\vspace{-1ex}
\begin{equation}\label{denotable}
\forall v\in\bV.~\exists \p\in\T_\fL.~\denote{\p}_\fL=v\;.
\end{equation}

\begin{definitionc}{respects}{\cite{vG12}}
A translation $\fT$ from $\fL$ into $\fL'$ \emph{respects} $\asim$ if (\ref{related}) holds
and $\denote{\fT(\p)}_{\fL'}(\eta) \asim \denote{\p}_\fL$ for all closed $\fL$-expressions $\p\in\T_\fL$
and all valuations $\eta:\V\rightarrow\bU$, with $\bU:=\{\val\in\bV'\mid \exists \val\in \bV.~\val'\asim \val\}$.
\end{definitionc}
In case $\bV=\emptyset$, then $\bU=\emptyset$, so by lack of any
$\eta:\V\rightarrow\bU$ each translation $\fT$ from $\fL$ into $\fL'$ respects $\asim$.\linebreak[3]
When $\bV\neq\emptyset$ the valuation $\eta$ is needed because $\fT(\p)$ need not be a closed term,
even if $\p$ is.

\begin{definition}{valid-vG12}
In \cite{vG12}, a translation from $\fL$ into $\fL'$ is called \emph{valid up to} $\asim$ if
it is compositional and respects $\asim$, while $\fL$ satisfies (\ref{denotable}).
\end{definition}
Whereas (\ref{denotable}) is an unwanted limitation,
Example 2 in \cite{vG12} shows that simply skipping requirement (\ref{denotable}) in \df{valid-vG12}
would yield a criterion that is too weak. The solution proposed by the current paper
is to skip (\ref{denotable}), while simultaneously strengthening or replacing the requirement of
respecting $\asim$ by the one of
being valid up to $\asim$ (\sect{validity}). The following result state that, for languages $\fL$
and $\fL'$ that satisfy \pos{alpha}, and preorders $\asim$ that respects $\eqa_\fL$ and $\eqa_{\fL'}$,
any translation that is valid up to $\asim$ according to \cite{vG12} is certainly valid according to
\df{valid} of the current paper.

\begin{theorem}{vG12-valid}
Let $\fL$ and $\fL'$ satisfy \pos{alpha}, let $\fL$ satisfy (\ref{denotable}), and
let $\fT$ be a compositional translation from $\fL$ into $\fL'$ that respects $\asim$,
where $\asim$ respects $\eqa_\fL$ and $\eqa_{\fL'}$. Then $\fT$ is valid up to $\asim$.
\end{theorem}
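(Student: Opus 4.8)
The plan is to unfold \df{valid}: to show $\fT$ is valid up to $\asim$ it suffices to exhibit a semantic translation $\mathord{\bR}\subseteq\mathord{\asim}$ such that $\fT$ is correct w.r.t.\ $\bR$ in the sense of \df{correct R}. If $\bV=\emptyset$ the empty relation works vacuously, so assume $\bV\neq\emptyset$. By (\ref{related}) (which holds since $\fT$ respects $\asim$) the set $\bU$ is then nonempty, so I fix a valuation $\theta:\V\rightarrow\bU$, e.g.\ a constant one. Guided by the proof of \thm{pulling back congruences}, I would define
$$\mathord{\bR} := \{(\wal,\val)\in\bV'\times\bV \mid \exists \p\in\T_\fL.\ \denote{\p}_\fL \eqa_\fL \val \ \wedge\ \wal \eqa_{\fL'} \denote{\fT(\p)}_{\fL'}(\theta)\},$$
i.e.\ $\wal\bR\val$ holds when some closed term $\p$ denotes $\val$ (up to $\eqa_\fL$) and its translation, evaluated at $\theta$, yields $\wal$ (up to $\eqa_{\fL'}$). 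The two $\eqa$-relaxations make $\bR$ exactly the $\alpha$-closure of the naive relation, and they are what let the later equalities be weakened to $\eqa$-steps.

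The two easy properties come first. For the semantic-translation property, given $\val\in\bV$, denotability (\ref{denotable}) supplies $\p\in\T_\fL$ with $\denote{\p}_\fL=\val$, and then $\denote{\fT(\p)}_{\fL'}(\theta)\bR\val$ by reflexivity of the two $\eqa$'s. For $\mathord{\bR}\subseteq\mathord{\asim}$, suppose $\wal\bR\val$ via $\p$; since $\fT$ respects $\asim$ and $\theta:\V\rightarrow\bU$ with $\p$ closed, \df{respects} gives $\denote{\fT(\p)}_{\fL'}(\theta)\asim\denote{\p}_\fL$, and chaining this with $\wal\eqa_{\fL'}\denote{\fT(\p)}_{\fL'}(\theta)$ and $\denote{\p}_\fL\eqa_\fL\val$, while using $\mathord{\eqa_{\fL'}},\mathord{\eqa_\fL}\subseteq\mathord{\asim}$ and transitivity of $\asim$, yields $\wal\asim\val$.

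The main work is correctness w.r.t.\ $\bR$. Given $E\in\IT_\fL$ and valuations $\eta\bR\rho$, the key manoeuvre is to use denotability to turn $\rho$ into a closed substitution: for every variable $X$ the hypothesis $\eta(X)\bR\rho(X)$ provides a closed $\p_X$ with $\denote{\p_X}_\fL\eqa_\fL\rho(X)$ and $\eta(X)\eqa_{\fL'}\denote{\fT(\p_X)}_{\fL'}(\theta)$; set $\sigma(X):=\p_X$, a total closed substitution. On the source side, $\denote{\sigma}_\fL(\rho')\eqa_\fL\rho$ for any $\rho'$, so (\ref{inductive meaning}) together with \lem{alpha congruence} gives $\denote{E[\sigma]}_\fL\eqa_\fL\denote{E}_\fL(\rho)$ for the closed term $E[\sigma]$. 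On the target side, because $\sigma$ is defined on every variable, the valuations $\eta$ and $\denote{\fT\circ\sigma}_{\fL'}(\theta)$ agree up to $\eqa_{\fL'}$ on all of $\V$; hence \lem{alpha congruence}, (\ref{inductive meaning}) and the compositionality of $\fT$ chain into $\denote{\fT(E)}_{\fL'}(\eta)\eqa_{\fL'}\denote{\fT(E)}_{\fL'}(\denote{\fT\circ\sigma}_{\fL'}(\theta))\eqa_{\fL'}\denote{\fT(E)[\fT\circ\sigma]}_{\fL'}(\theta)\eqa_{\fL'}\denote{\fT(E[\sigma])}_{\fL'}(\theta)$. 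Taking $\p:=E[\sigma]$ as the witness then shows precisely $\denote{\fT(E)}_{\fL'}(\eta)\bR\denote{E}_\fL(\rho)$, so $\fT$ is correct w.r.t.\ $\bR$ and hence valid up to $\asim$.

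The one point that needs care---and the main obstacle---is exactly this total-substitution choice: defining $\sigma$ on \emph{all} variables (not just $\fv(E)$) is what makes $\eta\eqa_{\fL'}\denote{\fT\circ\sigma}_{\fL'}(\theta)$ hold as full valuations, so that \lem{alpha congruence} applies without my having to know whether $\fT$ introduces fresh free variables. This is where both \pos{alpha} (for $\fL$ and $\fL'$) and denotability (\ref{denotable}) are genuinely used, and it is the step that bridges the gap between \df{respects}, which only constrains closed terms under valuations into $\bU$, and \df{correct R}, which constrains arbitrary terms under arbitrary $\bR$-related valuations.
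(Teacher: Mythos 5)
Your proposal is correct and takes essentially the same route as the paper: the paper defines the naive relation ${\bR} = \{(\denote{\fT(\p)}_{\fL'}(\theta),\denote{\p}_\fL) \mid \p\in\T_\fL\}$ for a fixed $\theta:\V\rightarrow\bU$ and establishes correctness w.r.t.\ its $\alpha$-closure $\bR^\alpha$, which is precisely the relation you define directly, and it extracts the same total closed substitution $\sigma$ from $\eta\bR^\alpha\rho$ before running the identical chain of $\eqa$-steps via \lem{alpha congruence}, (\ref{inductive meaning}) and compositionality. The only cosmetic differences are that you fold the $\eqa$-closure into the definition of $\bR$ from the start and spell out explicitly that $\bR$ is a semantic translation, which the paper leaves implicit.
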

\begin{proof}
In case $\bV=\emptyset$, there is a unique semantic transition \mbox{${\bR} \subseteq\bV'\times\bV$},
namely $\emptyset \subseteq {\asim}$.
Since there are no valuations $\rho:\V\rightarrow\bV$, $\fT$ is correct w.r.t.\ $\bR$, and thus valid up to $\asim$.

Otherwise, let $\theta:\V\rightarrow\bU:=\{\val\in\bV'\mid \exists \val\in \bV.~\val'\asim \val\}$---it exists by
(\ref{related})---and define the semantic translation ${\bR}\subseteq\bV'\times\bV$ by
${\bR} := \{(\denote{\fT(\p)}_{\fL'}(\theta),\denote{\p}_\fL) \mid \p\in\T_\fL\}$.
Then ${\bR} \subseteq {\asim}$, since $\fT$ respects $\asim$.
Since $\asim$ respects $\eqa_\fL$ and $\eqa_{\fL'}$, also ${\bR^\alpha} \subseteq {\asim}$.
Hence it suffices to show that $\fT$ is correct w.r.t.\ $\bR^\alpha$.
So let $\E\in \IT_\fL$, $\rho:\V\rightarrow\bV$ and $\eta:\V\rightarrow\bV'$ with $\eta\bR^\alpha \rho$.
I need to show that $\denote{\fT(\E)}_{\fL'}(\eta) \bR^\alpha \denote{\E}_\fL(\rho)$.

Pick $\bar\rho\!:\!\V\!\rightarrow\bV$ and $\bar\eta\!:\!\V\!\rightarrow\bV'$ such that
$\eta \eqa_{\fL'} \bar\eta \bR \bar\rho \eqa_\fL \rho$.
Then there is a closed substitution $\sigma:\V\rightarrow\T_\fL$ such that
$\bar\rho(X)=\denote{\sigma(X)}_\fL$ and $\bar\eta(X)=\denote{\fT(\sigma(X))}_{\fL'}(\theta)$ for
all $X\in X$, i.e., $\bar\rho=\denote{\sigma}_\fL$ and $\bar\eta=\denote{\fT\circ\sigma}_{\fL'}(\theta)$.
Hence,
$\denote{\fT(E)}_{\fL'}(\eta)
\begin{array}[t]{@{~}l@{~}ll@{}}
\eqa_{\fL'} & \denote{\fT(E)}_{\fL'}(\bar\eta) & \mbox{by \lem{alpha congruence}}\\
 =  & \denote{\fT(E)}_{\fL'}(\denote{\fT\circ\sigma}_{\fL'}(\theta)) & \mbox{derived above}\\
\eqa_{\fL'} & \denote{\fT(E)[\fT\circ\sigma]}_{\fL'}(\theta) & \mbox{by (\ref{inductive meaning})}\\
\eqa_{\fL'} & \denote{\fT(E[\sigma])}_{\fL'}(\theta) & \mbox{by compositionality of $\fT$} \\
\bR & \denote{E[\sigma]}_{\fL} & \mbox{by definition of $\bR$} \\
\eqa_\fL & \denote{E}_{\fL}(\denote{\sigma}_\fL) & \mbox{by (\ref{inductive meaning})} \\
 = & \denote{E}_{\fL}(\bar\rho) & \mbox{derived above}\\
 \eqa_\fL & \denote{E}_{\fL}(\rho) & \mbox{by \lem{alpha congruence}.}
\hfill\mbox{\qed}
\end{array}$
\end{proof}

\section{The case where all semantic values are denotable by closed terms}\label{sec:fvr}

Here I show that the notion of validity from the current paper agrees with the one of \cite{vG12}
when applied to languages for which all values are denotable by closed terms.

Usually one employs translations $\fT$ with the property that for any
$E\in\IT_\fL$ any free variable of $\fT(E)$ is also a free variable of $E$---I call these
\emph{free-variable respecting translations}, or \emph{fvr-translations} \cite{vG12}.\linebreak[3]
If $\fT_0$ in the proof of \thm{compositionality} is an fvr-translation, then so is $\fT$.
Hence, replaying the proof of \cor{compositionality}, one obtains:

\begin{observation}{compositionality}
Let $\fL$ and $\fL'$ be languages that satisfy Postulates~\ref{post:head} and~\ref{post:alpha},
and $\asim$ a preorder that respects $\eqa_\fL$ and $\eqa_{\fL'}$.
If any valid (or correct) fvr-translation from $\fL$ into $\fL'$ up to $\asim$ exists,
then there exists a compositional fvr-translation that is valid (or correct) up to $\asim$.
\end{observation}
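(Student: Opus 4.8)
The plan is to replay the proof of \cor{compositionality} essentially verbatim, and to add the one extra observation announced in the text preceding the statement: the compositional translation produced by that construction inherits the free-variable respecting (fvr) property whenever the input translation has it. So, given a valid fvr-translation $\fT$ from $\fL$ into $\fL'$ up to $\asim$, it is correct w.r.t.\ some semantic translation $\mathord{\bR}\subseteq\mathord{\asim}$ by \df{valid}, hence correct w.r.t.\ $\bR^\alpha$ by \lem{correct up to alpha}, and \thm{compositionality} supplies a compositional translation $\fT'$ that is correct w.r.t.\ $\bR^\alpha$. As $\asim$ respects $\eqa_\fL$ and $\eqa_{\fL'}$ one has $\mathord{\bR^\alpha}\subseteq\mathord{\asim}$, so $\fT'$ is valid up to $\asim$; the correct case follows identically, or via \thm{valid correct}. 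Everything so far is unchanged from \cor{compositionality}.

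The only new obligation is to show that this $\fT'$ is fvr, i.e.\ $\fv(\fT'(E))\subseteq\fv(E)$ for all $E\in\IT_\fL$. I would prove this as a small lemma by structural induction on $E$ up to $\eqa$, mirroring the inductions used in \pr{compositionality} and \pr{heads suffice}. Recall $\fT'$ is built from $\fT_0:=\fT$ by $\fT'(X)=X$ and $\fT'(E)=\fT_0(H)[\fT'\circ\sigma]$, where $H,\sigma$ is the standard decomposition of $E$, so $\dom(\sigma)=\fv(H)$. The base case $E=X$ is immediate, since $\fv(\fT'(X))=\{X\}=\fv(X)$.

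For the inductive step take $E\eqa H[\sigma]$. Each $\sigma(X)$ with $X\in\fv(H)$ is, up to $\eqa$, a proper subterm of $E$, so the induction hypothesis yields $\fv(\fT'(\sigma(X)))\subseteq\fv(\sigma(X))$. Because $\fT_0$ is fvr we have $\fv(\fT_0(H))\subseteq\fv(H)=\dom(\fT'\circ\sigma)$, so every free variable of $\fT_0(H)$ is substituted away in $\fT_0(H)[\fT'\circ\sigma]$ and none survives; the free variables of $\fT'(E)$ are therefore exactly those contributed by the substitutes. Hence
$$\fv(\fT'(E))=\bigcup_{X\in\fv(\fT_0(H))}\fv(\fT'(\sigma(X)))\subseteq\bigcup_{X\in\fv(H)}\fv(\sigma(X))=\fv(H[\sigma])=\fv(E),$$
using $\dom(\sigma)=\fv(H)$ for the last equality. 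This closes the induction, so $\fT'$ is fvr, which together with the first paragraph gives the observation.

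The only real point requiring care—the ``main obstacle,'' though a mild one—is the bookkeeping of free variables under substitution into a head: one must use that $\dom(\sigma)=\fv(H)$ so that $\fv(H[\sigma])$ is precisely $\bigcup_{X\in\fv(H)}\fv(\sigma(X))$ (bound variables of $H$ contribute nothing, and capture is avoided by the renaming convention of \df{substitution}), and, crucially, the fvr hypothesis on $\fT_0$ must be invoked to guarantee that $\fT_0(H)$ introduces no free variable outside $\fv(H)$ that could escape the substitution. Everything else is a routine replay, and no separate treatment of the correct (as opposed to merely valid) case is needed.
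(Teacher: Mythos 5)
Your proof is correct and takes essentially the same route as the paper: the paper's own justification is just the remark preceding the observation that the construction in the proof of \thm{compositionality} preserves the fvr property, followed by a replay of \cor{compositionality}, and your structural induction simply fills in the details of that (correct but unproved) remark. No gaps.
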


An fvr-translation $\fT$ from $\fL$ into $\fL'$ respects $\asim$ iff either $\bV=\emptyset$, or
(\ref{related}) holds and $\denote{\fT(\p)}_{\fL'} \asim \denote{\p}_\fL$ for all closed
$\fL$-expressions $\p\in\T_\fL$.

\begin{observation}{fvr respects}
Any fvr-translation that is valid up to $\asim$ respects $\asim$.
\end{observation}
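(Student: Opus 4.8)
The plan is to reduce the claim to the characterisation of ``respects $\asim$'' for fvr-translations stated immediately above the observation: an fvr-translation $\fT$ respects $\asim$ iff either $\bV=\emptyset$, or (\ref{related}) holds and $\denote{\fT(\p)}_{\fL'} \asim \denote{\p}_\fL$ for all closed $\fL$-expressions $\p\in\T_\fL$. So it suffices to derive these conditions from the hypothesis that $\fT$ is valid up to $\asim$. By \df{valid} there is a semantic translation $\mathord{\bR}\subseteq\mathord{\asim}$ such that $\fT$ is correct w.r.t.\ $\bR$. If $\bV=\emptyset$ the observation holds immediately, so I would assume $\bV\neq\emptyset$. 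Since $\bR$ is a semantic translation and $\mathord{\bR}\subseteq\mathord{\asim}$, requirement (\ref{related}) is immediate: every $\val\in\bV$ has some $\val'\in\bV'$ with $\val'\bR\val$, hence $\val'\asim\val$.

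For the remaining condition I would fix $\p\in\T_\fL$ and manufacture a pair of $\bR$-related valuations. Pick any $\rho:\V\rightarrow\bV$ (this uses $\bV\neq\emptyset$), and, using that $\bR$ is a semantic translation, pick for each $X\in\V$ a value $\eta(X)\in\bV'$ with $\eta(X)\bR\rho(X)$; this yields $\eta:\V\rightarrow\bV'$ with $\eta\bR\rho$, and incidentally shows $\bV'\neq\emptyset$, since the nonempty $\bV$ now has $\bR$-counterparts. Correctness of $\fT$ w.r.t.\ $\bR$ (\df{correct R}) then gives $\denote{\fT(\p)}_{\fL'}(\eta)\bR\denote{\p}_\fL(\rho)$. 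Because $\p$ is closed and $\bV\neq\emptyset$, $\denote{\p}_\fL(\rho)=\denote{\p}_\fL$; and because $\fT$ is free-variable respecting with $\fv(\p)=\emptyset$, the term $\fT(\p)$ is itself closed, so (using $\bV'\neq\emptyset$) $\denote{\fT(\p)}_{\fL'}(\eta)=\denote{\fT(\p)}_{\fL'}$. Thus $\denote{\fT(\p)}_{\fL'}\bR\denote{\p}_\fL$, and since $\mathord{\bR}\subseteq\mathord{\asim}$ I conclude $\denote{\fT(\p)}_{\fL'}\asim\denote{\p}_\fL$, which completes the verification.

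This argument is essentially routine, and there is no deep obstacle; the only points needing care are the $\bV=\emptyset$ edge case and confirming $\bV'\neq\emptyset$ so that the valuation-free denotation $\denote{\fT(\p)}_{\fL'}$ is well defined, both dispatched above. The crucial use of the fvr hypothesis is exactly the observation that $\fT(\p)$ is closed whenever $\p$ is: this is what removes the dependence on $\eta$ and lets the $\bR$-relatedness delivered by correctness collapse to the valuation-free inequality required by \df{respects}. (Alternatively, one could verify \df{respects} directly: any $\eta$ constructed as above maps into $\bU$, and since $\denote{\fT(\p)}_{\fL'}(\eta)$ is $\eta$-independent, the single instance obtained from correctness suffices for all $\eta:\V\rightarrow\bU$.)
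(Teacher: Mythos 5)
Your argument is correct and follows exactly the route the paper intends: the paper states this as an unproved Observation immediately after reformulating ``respects $\asim$'' for fvr-translations, and your verification (deriving (\ref{related}) from $\bR$ being a semantic translation contained in $\asim$, and collapsing $\denote{\fT(\p)}_{\fL'}(\eta)$ to the valuation-free denotation because $\fT(\p)$ is closed) is precisely the routine check the paper leaves implicit. The edge cases you handle ($\bV=\emptyset$ and $\bV'\neq\emptyset$) are treated the same way in the paper's surrounding text.
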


\begin{lemma}{fvr}
Let $\fL'$ satisfy (\ref{denotable}) and $\bV\neq\emptyset$.
If there is a translation from $\fL$ into $\fL'$ that is valid up to a preorder $\asim$,
then there is a fvr-translation from $\fL$ into $\fL'$ that is valid up to $\asim$.
\end{lemma}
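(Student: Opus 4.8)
The plan is to take the given valid translation $\fT$ and ``close off'' the spurious free variables that $\fT$ may introduce, replacing each of them by a fixed closed term of $\fL'$ that denotes a suitable value. Denotability of all of $\bV'$ is exactly what makes such a closed term available, and $\bV\neq\emptyset$ is what guarantees there is a value worth aiming at.

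First I would unfold the hypothesis: since $\fT$ is valid up to $\asim$, by \df{valid} it is correct w.r.t.\ some semantic translation $\mathord{\bR}\subseteq\mathord{\asim}$. As $\bV\neq\emptyset$ I can pick some $v^\ast\in\bV$, and since $\bR$ is a semantic translation there is $w^\ast\in\bV'$ with $w^\ast\bR v^\ast$. By (\ref{denotable}) applied to $\fL'$ there is a closed term $Q\in\T_{\fL'}$ with $\denote{Q}_{\fL'}=w^\ast$. I then define the translation $\fT'$ by $\fT'(E):=\fT(E)[\tau_E]$, where $\tau_E$ is the substitution with $\dom(\tau_E)=\fv(\fT(E))\setminus\fv(E)$ sending every such variable to $Q$. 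Because $Q$ is closed, this substitution introduces no free variables, so $\fv(\fT'(E))\subseteq\fv(E)$, and $\fT'$ is indeed free-variable respecting.

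The core of the argument is to show that $\fT'$ is again correct w.r.t.\ the \emph{same} $\bR$, hence valid up to $\asim$. Given $E\in\IT_\fL$ and valuations $\eta\bR\rho$, I would pass to modified valuations $\bar\eta,\bar\rho$ that agree with $\eta,\rho$ on $\fv(E)$ and take the constant values $w^\ast,v^\ast$ off $\fv(E)$. One checks $\bar\eta\bR\bar\rho$ directly: on $\fv(E)$ this is inherited from $\eta\bR\rho$, and off $\fv(E)$ it is the chosen pair $w^\ast\bR v^\ast$. Since $Q$ is closed, substituting it requires no renaming, so the substitution identity holds with exact equality, $\denote{\fT(E)[\tau_E]}_{\fL'}(\eta)=\denote{\fT(E)}_{\fL'}(\denote{\tau_E}_{\fL'}(\eta))$. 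As $\denote{\fT(E)}_{\fL'}$ depends only on values at $\fv(\fT(E))$, and there $\denote{\tau_E}_{\fL'}(\eta)$ coincides with $\bar\eta$, I obtain $\denote{\fT'(E)}_{\fL'}(\eta)=\denote{\fT(E)}_{\fL'}(\bar\eta)$. Correctness of $\fT$ w.r.t.\ $\bR$ then yields $\denote{\fT(E)}_{\fL'}(\bar\eta)\bR\denote{E}_\fL(\bar\rho)$, and since $\denote{E}_\fL$ depends only on $\fv(E)$ we have $\denote{E}_\fL(\bar\rho)=\denote{E}_\fL(\rho)$. Chaining these gives $\denote{\fT'(E)}_{\fL'}(\eta)\bR\denote{E}_\fL(\rho)$, as required.

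The main obstacle is precisely the exactness in the previous paragraph. \lem{fvr} does not assume \pos{alpha}, nor that $\asim$ respects $\eqa_{\fL'}$, so I cannot afford to land in $\mathord{\bR^\alpha}$ or to compose $\bR$ with $\eqa_{\fL'}$: the resulting relation need not be contained in $\asim$. This is why it is essential that the spurious variables are killed by a \emph{closed} term: closed-term substitution avoids capture and hence needs no $\alpha$-renaming, upgrading the general identity (\ref{inductive meaning}) from ``up to $\eqa_{\fL'}$'' to genuine equality, which keeps the witnessing semantic translation equal to $\bR$ itself. Everything else---the verification of $\bar\eta\bR\bar\rho$ and the dependence of the two denotations only on free variables---is routine.
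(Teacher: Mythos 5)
Your proof is correct and follows essentially the same route as the paper's: pick a closed $\fL'$-term denoting a value $\bR$-related to some element of $\bV$ (using $\bV\neq\emptyset$, the semantic-translation property of $\bR$, and (\ref{denotable})), substitute it for the spurious free variables of $\fT(E)$, and verify correctness w.r.t.\ the same $\bR$ via modified valuations $\bar\eta\bR\bar\rho$. Your added remark that closedness of the substituted term avoids $\alpha$-renaming, keeping all identities exact so that no appeal to $\bR^\alpha$ or \pos{alpha} is needed, is a sound elaboration of a step the paper leaves implicit.
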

\begin{proof}
Let $\fT$ be a translation from $\fL$ into $\fL'$ that is valid up to $\asim$.
Let ${\bR} \subseteq {\asim}$ be a semantic translation such that $\fT$ is correct w.r.t.\ $\bR$.
Let $\p\in\T_{\fL'}$ be a closed $\fL'$-term with $\denote{\p}_{\fL'}\bR \val$ for some
$\val\in\bV$---such a $\p$ exists by (\ref{denotable}).\linebreak[3]
Let the translation $\fT'$ be obtained from $\fT$ by defining $\fT'(E):=\fT(E)[\sigma_E]$ for
all $E\in\IT_\fL$---here $\sigma_E$ is the substitution that (only) substitutes $\p$ for all variables
$X$ that occur in $\fT(E)$ but not in $E$. Then $\fT'$ is a fvr-translation.
It remains to show that $\fT'$ is correct w.r.t.\ $\bR$.

\mbox{So let $\E\mathbin\in \IT_\fL$, $\eta\!:\!\V\!\mathbin\rightarrow\bV'$ and
$\rho\!:\!\V\!\mathbin\rightarrow\bV$ with $\eta\mathbin{\bR}\rho$.}
I aim to show that $\denote{\!\fT'\hspace{-.7pt}(\E)}_{\fL'}(\eta) \bR \denote{\E}_\fL(\rho)$.
Let $\bar\eta:\V\rightarrow\bV'$ and $\bar\rho:\V\rightarrow\bV$ be substitutions with
$\bar\eta(X)=\eta(X)$ and $\bar\rho(X)=\rho(X)$ for all $X\in\fv(E)$, such that
$\bar\eta(Y)=\denote{\p}_{\fL'}$ and $\bar\rho(Y)=\val$ for all $Y\in\fv(\fT(E))\setminus\fv(E)$.
Then $\bar\eta\bR\bar\rho$. Consequently,\\[1ex]
\mbox{}\hfill $\denote{\fT'(\E)}_{\fL'}(\eta) = \denote{\fT(\E)[\sigma_E]}_{\fL'}(\eta) =
\denote{\fT(\E)}_{\fL'}(\bar\eta) \bR \denote{\E}_\fL(\bar\rho) = \denote{\E}_\fL(\rho)$.
\end{proof}

\noindent
The following result states that in case both $\fL$ and $\fL'$ satisfy
(\ref{denotable}), as well as Postulates~\ref{post:head} and~\ref{post:alpha}, and $\asim$ respects
$\eqa_\fL$ and $\eqa_{\fL'}$, the validity-based notion of expressiveness from \cite{vG12} coincides
with the one here.

\begin{corollary}{respect}
Let $\fL$ and $\fL'$ be languages satisfying Postulates~\ref{post:head} and~\ref{post:alpha}, as
well as (\ref{denotable}), and $\asim$ a preorder respecting $\eqa_\fL$ and $\eqa_{\fL'}$.
There exists a valid translation from $\fL$ into $\fL'$ up to $\asim$ iff there exists a
compositional translation from $\fL$ into $\fL'$ that respects $\asim$.
\end{corollary}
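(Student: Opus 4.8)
The plan is to prove the two directions separately, in each case assembling results already established rather than redoing any semantic bookkeeping.

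For the ``if'' direction I would simply invoke \thm{vG12-valid}. Its hypotheses are exactly those assumed here: both $\fL$ and $\fL'$ satisfy \pos{alpha}, $\fL$ satisfies (\ref{denotable}), and $\asim$ respects $\eqa_\fL$ and $\eqa_{\fL'}$. Hence any compositional translation from $\fL$ into $\fL'$ that respects $\asim$ is already valid up to $\asim$, which is precisely the direction needed.

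For the ``only if'' direction I would split on whether $\bV$ is empty. If $\bV=\emptyset$, then $\bU=\emptyset$ and, as noted after \df{respects}, every translation respects $\asim$; so it suffices to produce \emph{some} compositional translation, and \cor{compositionality} turns the assumed valid translation into a compositional one valid up to $\asim$, which then respects $\asim$ trivially. If $\bV\neq\emptyset$, I would route through free-variable respecting (fvr) translations: starting from the assumed valid translation, \lem{fvr} (using that $\fL'$ satisfies (\ref{denotable})) yields an fvr-translation valid up to $\asim$; \obs{compositionality} then upgrades this to a \emph{compositional} fvr-translation valid up to $\asim$; and finally \obs{fvr respects} guarantees that any valid fvr-translation respects $\asim$. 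The resulting translation is thus both compositional and respects $\asim$.

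The one genuinely load-bearing point, and the reason for the fvr detour, is that a compositional valid translation need not by itself respect $\asim$: the notion of respecting $\asim$ quantifies over closed source terms $\p\in\T_\fL$ while allowing $\fT(\p)$ to carry spurious free variables ranging over $\bU$, and only the free-variable respecting property rules out such extra variables. The crux is therefore to carry out the chain of upgrades (fvr, then compositional, then respecting) without one property destroying another; each step is supplied by the observations and lemma cited above, whose hypotheses---\pos{head}, \pos{alpha}, and $\asim$ respecting $\eqa_\fL$ and $\eqa_{\fL'}$---are all in force.
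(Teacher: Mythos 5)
Your proof is correct and follows essentially the same route as the paper's own argument: the ``if'' direction via \thm{vG12-valid}, and the ``only if'' direction by case-splitting on $\bV=\emptyset$, using \cor{compositionality} in the empty case and the chain \lem{fvr}, \obs{compositionality}, \obs{fvr respects} otherwise. Your closing remark on why the fvr detour is needed is a correct observation that the paper leaves implicit.
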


\begin{proof}
Suppose a valid translation up to $\asim$ exists.
In case $\bV=\emptyset$, 
by \cor{compositionality} there exists a compositional translation $\fT$ that is valid up to $\sim$,
and by the remark following \df{respects} $\fT$ respects $\asim$.\\
\mbox{}\hspace{\parindent}%
In case $\bV\neq\emptyset$, by \lem{fvr} there exists an fvr-translation that is valid up to $\asim$.
By \obs{compositionality}, there exists a compositional fvr-translation $\fT$ that is valid up to $\sim$.
By \obs{fvr respects}, $\fT$ respects $\asim$.

The converse direction follows from \thm{vG12-valid}.
\end{proof}

\section{A potential generalisation of the concept of a valid translation}\label{sec:preservation}

In this section I consider a potentially more liberal concept of a valid transition up to a preorder $\asim$,
namely a translation that is compositional and \emph{preserves} $\asim$, as in \df{preservation} below.
Like correctness w.r.t.\ a semantic translation (\df{correct R}) and consequently also correctness
and validity up to $\asim$ (Definitions~\ref{df:correct}, \ref{df:valid} and ~\ref{df:correct translation}),
the requirement of \df{preservation} is that the translation preserves the meaning of
expressions: the meaning of the translation of an expression $E$ should be
semantically equivalent to the meaning of $E$\linebreak[1]---see \cite{vG12} for an elaboration.
In fact, this should hold under any valuation of the variables occurring in $E$.
The difference between Definitions~\ref{df:correct translation} and \ref{df:preservation}
is that the former is based on universal quantification of all matching valuations in the target
language, whereas the latter associates with any valuation in the source language a single
matching valuation in the target language. This requires
a translation $\fT:\IT_\fL\rightarrow\IT_{\fL'}$ to have a \emph{semantic counterpart}
$\bT:\bV\rightarrow\bV'$ that maps the possible meanings of $\fL$-expressions into the
possible meanings of $\fL'$-expressions.

\begin{definition}{preservation}
A translation $\fT$ from $\fL$ into $\fL'$ \emph{preserves $\asim$} iff
there exists a mapping $\bT:\bV \rightarrow\bV'$ such that
$\bT(\val)\asim \val$ for all $\val\mathbin\in\bV$ and
$\denote{\fT(\E)}_{\fL'}(\bT\circ\rho) \asim \denote{\E}_\fL(\rho)$ for all
$\E\in \IT_\fL$ and all $\rho:\V\rightarrow\bV$.
\end{definition}
Note that the existence of $\bT$ implies (\ref{related}) in \sect{valid-correct}.

\begin{proposition}{preservation}
Any valid translation up to a preorder $\asim$ preserves $\asim$.
\end{proposition}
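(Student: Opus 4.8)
The plan is to unfold the definition of validity and then manufacture the required semantic counterpart $\bT$ by a single application of choice. Since $\fT$ is valid up to $\asim$, by \df{valid} it is correct with respect to some semantic translation $\mathord{\bR}\subseteq\mathord{\asim}$. By \df{semantic translation}, $\mathord{\bR}\subseteq\bV'\times\bV$ satisfies $\forall \val\in\bV.\,\exists \wal\in\bV'.\,\wal\bR\val$. This is exactly the totality condition needed to build a function: for each $\val\in\bV$ I would pick (using the axiom of choice) a value $\bT(\val)\in\bV'$ with $\bT(\val)\bR\val$, thereby defining a map $\bT:\bV\rightarrow\bV'$.

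First I would verify the pointwise condition of \df{preservation}, namely $\bT(\val)\asim\val$ for all $\val\in\bV$. This is immediate, since $\bT(\val)\bR\val$ by construction and $\mathord{\bR}\subseteq\mathord{\asim}$. Next I would verify the second condition. Fixing $\E\in\IT_\fL$ and $\rho:\V\rightarrow\bV$, the composite $\bT\circ\rho:\V\rightarrow\bV'$ satisfies $(\bT\circ\rho)(X)=\bT(\rho(X))\bR\rho(X)$ for every $X\in\V$, so $(\bT\circ\rho)\,\bR\,\rho$ as valuations. Correctness of $\fT$ w.r.t.\ $\bR$ then yields $\denote{\fT(\E)}_{\fL'}(\bT\circ\rho)\,\bR\,\denote{\E}_\fL(\rho)$, and since $\mathord{\bR}\subseteq\mathord{\asim}$ this gives $\denote{\fT(\E)}_{\fL'}(\bT\circ\rho)\asim\denote{\E}_\fL(\rho)$, as required.

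There is essentially no analytic difficulty here, so the only thing I would flag as the crux is conceptual rather than technical: recognising that the defining property of a semantic translation is precisely a totality condition that licenses choosing a single witness per argument, thereby collapsing the relation $\bR$ into a function $\bT$. The whole argument is a direct passage from the universally quantified matching valuations of correctness to the single, functionally determined matching valuation demanded by preservation; in particular no congruence, compositionality, or reflexivity/transitivity assumptions on $\asim$ are invoked beyond the inclusion $\mathord{\bR}\subseteq\mathord{\asim}$.
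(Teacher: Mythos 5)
Your proof is correct and follows exactly the paper's own argument: the paper likewise takes any function $\bT$ with $\bT\subseteq\mathord{\bR}$ (your choice-based construction), observes $\bT\circ\rho \bR \rho$, and concludes by correctness w.r.t.\ $\bR$ together with $\mathord{\bR}\subseteq\mathord{\asim}$. Your version merely spells out the details the paper leaves implicit.
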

\begin{proof}
Let $\fT$ be correct w.r.t.\ the semantic translation $\mathord{\bR} \subseteq \mathord{\asim}$.
Take any $\bT:\bV\rightarrow\bV'$ with $\bT\subseteq \mathord{\bR}$.
Then $\bT\mathord{\circ}\rho \bR \rho$ for any valuation $\rho:\V\rightarrow\bV$.
Hence $\fT$ preserves $\asim$ using the semantic counterpart $\bT$.
\end{proof}
\cor{compositionality} does not extend (from correct and valid translations up to $\asim$) to
translations that preserve $\asim$.

\begin{example}{preserving not compositional}
  Let $\fL$ be a language with unary operators $S$ and $2$,
  and $\fL'$ a language with unary operators $S$ and $G_k$ for each $k>0$.
Their semantics is given by $\bV = \bV' = \IN$, $S^\bV(n)=S^{\bV'}(n)=n+1$ for each $n\in\IN$,
$2^\bV(n)=2n$ for each $n\in\IN$, and $G_k^{\bV'}(n) = 2^kn+1$ for each $n\in\IN$ and $k>0$.

Let $\sim$ be the equivalence relation on $\IN$ defined by $n\sim m$ iff $n$ and $m$ differ
only by a factor $2^k$ for some $k\in\IN$, i.e., $n=m=0$ or their prime factorisations, when leaving
out the factors of $2$, are the same.

Each term $E \in \IT_\fL$ is obtained from a single variable $X$ by applications of the operators $S$
and $2$. Using the equation $2SF = SS2F$, each such term $E$ can be rewritten into a unique \emph{normal form}
$S^i2^jX$ for some $i,j\in\IN$---notation $E \mapsto S^i2^jX$. If $E \mapsto S^i2^jX$ then
$\denote{S^i2^jX}_\fL(\rho) = i+2^j\rho(X) = \denote{E}_\fL(\rho)$ for each valuation $\rho:\V\rightarrow\IN$,
and thus $\denote{S^i2^jX}_\fL \sim \denote{E}_\fL$.

Let $\fT:\IT_\fL \rightarrow \IT_{\fL'}$ be the translation given by
$\left\{\begin{array}{ll}
  \fT(E)=\fT(S^i2^jX) & \mbox{if $E \mapsto S^i2^jX$}\\
  \fT(2^jX) = X \\
  \fT(S^{i+1}X) = S^{i+1}X \\
  \fT(S^{i+1}2^jX) = S^{i}G_{\!j}X.
\end{array}\right.$\\[2ex]
Then $\fT$ preserves $\sim$, taking $\bT$ to be the identity function on $\IN$.
Moreover, the preconditions of \cor{compositionality} are trivially satisfied.
Yet, there is no compositional translation that preserves $\sim$.

For suppose $\fT'$ is such a compositional translation, with semantic counterpart $\bT$.
Then there is a term $E_2$ such that $\fT'(2X)=E_2$. Using that $\fT'$ preserves $\sim$, one has
$\denote{E_2}_{\fL'}(\bT\circ\rho) \sim \denote{2X}_\fL(\rho)$ for each $\rho:\V\rightarrow\IN$.
Take $\rho(X)=0$ and $\rho(Y)=7$ for each $Y\neq X$. Then $\bT(\rho(Y))\sim 7$, so
$\bT(\rho(Y))\geq 7$, for each $Y\neq X$. 
Since $\denote{2X}_\fL(\rho_0)=0$, also $\denote{E_2}_{\fL'}(\bT\circ\rho_0)=0$.
As the operators $S$ and $G_k$ are strictly increasing, it follows that $E_2$ contains none of these
operators, nor variables $Y\neq X$. So $E_2=X$.

As $\fT'$ is compositional, it follows that $\fT'(S2X) =\fT'(SX)$.
Thus, taking $\rho(X)\mathbin=1$, one obtains
$2 = \denote{SX}_{\fL'}(\rho) \sim \denote{\fT'(SX)}_{\fL'}(\bT\circ\rho) =
\denote{\fT'(S2X)}_{\fL'}(\bT\circ\rho) \sim \denote{S2X}_{\fL}(\rho) = 3$, a contradiction.
\end{example}
For this reason, when using preservation of $\asim$ as a
criterion like validity and correctness up to $\asim$, compositionality has to be required separately.

By \obs{compositionality} and \pr{preservation}, assuming that $\fL$ and $\fL'$
satisfy Postulates~\ref{post:head} and~\ref{post:alpha} and $\asim$ respects $\eqa_\fL$ and
$\eqa_{\fL'}$, if a valid fvr-translation from $\fL$ into $\fL'$ up to $\asim$ exists,
then there exists a compositional fvr-translation that preserves $\asim$.
Below I establish the reverse, thereby showing that my concept of a valid translation is quite general.

\begin{theorem}{preservation is valid}
Let $\fL$ and $\fL'$ be languages that satisfy Postulate~\ref{post:alpha},
and $\asim$ a preorder that respects $\eqa_\fL$ and $\eqa_{\fL'}$.
Any compositional fvr-translation $\fT:\IT_\fL\rightarrow\IT_{\fL'}$ that preserves $\asim$
is valid up to $\asim$.
\end{theorem}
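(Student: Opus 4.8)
The plan is to produce a single semantic translation $\bR\subseteq\asim$ with respect to which $\fT$ is correct; by \df{valid} this is exactly what validity up to $\asim$ demands. Let $\bT:\bV\rightarrow\bV'$ be the semantic counterpart furnished by \df{preservation}, so $\bT(\val)\asim\val$ and $\denote{\fT(\E)}_{\fL'}(\bT\circ\rho)\asim\denote{\E}_\fL(\rho)$. The naive guess, the graph $\{(\bT(\val),\val)\mid\val\in\bV\}$, fails: correctness w.r.t.\ it would force $\denote{\fT(\E)}_{\fL'}(\bT\circ\rho)$ to \emph{equal} $\bT(\denote{\E}_\fL(\rho))$, whereas preservation only yields $\asim$. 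I would therefore take the richer relation
\[
\bR := \{(\denote{\fT(\E)}_{\fL'}(\bT\circ\rho),\,\denote{\E}_\fL(\rho)) \mid \E\in\IT_\fL,\ \rho:\V\rightarrow\bV\}.
\]
Its second components exhaust $\bV$ (take $\E=X$, $\rho(X)=\val$, using $\fT(X)=X$), so $\bR$ is a semantic translation; each of its pairs lies in $\asim$ by preservation, so $\bR\subseteq\asim$. Since $\asim$ respects $\eqa_\fL$ and $\eqa_{\fL'}$, the $\alpha$-closure $\bR^\alpha$ is still contained in $\asim$, and it is still a semantic translation as $\bR\subseteq\bR^\alpha$. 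It then remains only to show that $\fT$ is correct w.r.t.\ $\bR^\alpha$.

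For correctness I would fix $\E\in\IT_\fL$ and valuations $\eta\bR^\alpha\rho$, and unfold the hypothesis variable-by-variable: for each $X\in\fv(\E)$ there are a term $E_X$ and a valuation $\rho_X$ with $\eta(X)\eqa_{\fL'}\denote{\fT(E_X)}_{\fL'}(\bT\circ\rho_X)$ and $\rho(X)\eqa_\fL\denote{E_X}_\fL(\rho_X)$. Using that $\V$ is sufficiently large (as in the proof of \lem{alpha congruence}), I would rename free variables so that the sets $\fv(E_X)$ become pairwise disjoint, assemble the substitution $\sigma:\fv(\E)\rightarrow\IT_\fL$ with $\sigma(X)=E_X$, and a valuation $\rho^*$ agreeing with each $\rho_X$ on $\fv(E_X)$. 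The purpose of this assembly is that the induced valuations $\denote{\sigma}_\fL(\rho^*)$ and $\denote{\fT\circ\sigma}_{\fL'}(\bT\circ\rho^*)$ recover $\rho$ and $\eta$ up to $\eqa$ on the variables that matter; the target case works \emph{because} $\fT$ is an fvr-translation, so $\fv(\fT(E_X))\subseteq\fv(E_X)$ and $\bT\circ\rho^*$ may be replaced by $\bT\circ\rho_X$ there.

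The computation then chains four links. By (\ref{inductive meaning}) and \lem{alpha congruence} I would rewrite $\denote{\fT(\E)}_{\fL'}(\eta)\eqa_{\fL'}\denote{\fT(\E)[\fT\circ\sigma]}_{\fL'}(\bT\circ\rho^*)$; compositionality (clause~(\ref{comp1}) of \df{compositionality}) turns $\fT(\E)[\fT\circ\sigma]$ into $\fT(\E[\sigma])$ up to $\eqa$; the pair $(\denote{\fT(\E[\sigma])}_{\fL'}(\bT\circ\rho^*),\,\denote{\E[\sigma]}_\fL(\rho^*))$ lies in $\bR$ by construction; and finally (\ref{inductive meaning}) with \lem{alpha congruence} gives $\denote{\E[\sigma]}_\fL(\rho^*)\eqa_\fL\denote{\E}_\fL(\rho)$. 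Concatenating these yields an intermediate pair with $\denote{\fT(\E)}_{\fL'}(\eta)\eqa_{\fL'}w'\bR v'\eqa_\fL\denote{\E}_\fL(\rho)$, i.e.\ $\denote{\fT(\E)}_{\fL'}(\eta)\bR^\alpha\denote{\E}_\fL(\rho)$, as required. Throughout I would use that $E\eqa F$ forces $\denote{E}_\fL(\zeta)\eqa_\fL\denote{F}_\fL(\zeta)$, immediate from the definition of $\eqa_\fL$, and \pos{alpha} to know $\eqa_\fL,\eqa_{\fL'}$ are transitive, so the threefold chain collapses to one $\bR^\alpha$-step. (The degenerate case $\bV=\emptyset$ is vacuous, the empty semantic translation witnessing validity.)

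I expect the main obstacle to be precisely the gap the paper flags between \df{preservation} and correctness: preservation constrains only the \emph{single} target valuation $\bT\circ\rho$, whereas correctness quantifies over \emph{every} $\eta$ with $\eta\bR^\alpha\rho$. The substitution-assembly step is what bridges this, re-expressing an arbitrary matching $\eta$ as the $\bT\circ(-)$-image of a suitable valuation on the enlarged term $\E[\sigma]$; it is exactly here that both compositionality and the fvr property are indispensable, and the free-variable bookkeeping (disjoint renaming, and the fact that only free variables affect a denotation) is the fiddly part to discharge carefully.
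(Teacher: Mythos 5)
Your proof is correct, and its technical heart---unfolding $\eta\bR^\alpha\rho$ variable by variable, renaming the $E_X$ apart, assembling the substitution $\sigma$ and the valuation $\rho^*$, and chaining (\ref{inductive meaning}), \lem{alpha congruence}, compositionality and the fvr property---is exactly the computation carried out in the paper's induction step. The difference is organizational, but worth noting. The paper defines $\bR$ as the \emph{smallest} relation containing $\bT^{-1}$ w.r.t.\ which $\fT$ is correct, so correctness holds by construction and all the work goes into proving ${\bR^\alpha}\subseteq{\asim}$, via an induction on the construction of $\bR$ showing that every pair is $\eqa$-related to one of the form $(\denote{\fT(E)}_{\fL'}(\bT\circ\nu),\denote{E}_\fL(\nu))$. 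You instead take that explicit set of pairs as the \emph{definition} of $\bR$, so ${\bR}\subseteq{\asim}$ is immediate from \df{preservation}, and the work goes into verifying correctness w.r.t.\ $\bR^\alpha$---which needs no induction, since membership in your $\bR$ directly hands you the witnesses $E_X,\rho_X$. The two relations coincide up to $\eqa$ (your $\bR$ contains $\bT^{-1}$ via $E=X$ and $\fT(X)=X$, and is closed under the correctness rule by your own argument), so nothing is lost; your version simply trades the paper's fixed-point induction for a one-shot verification. Two small points to be careful about in a full write-up: \lem{alpha congruence} compares valuations on all of $\V$, so you should note explicitly that only the values on $\fv(E)$ (and, by the fvr property, on $\fv(\fT(E))\subseteq\fv(E)$) affect the relevant denotations; and the entries $\sigma(X)$ must be the \emph{renamed} copies $E_X[\sigma_X]$ with $\rho^*$ conjugated by $\sigma_X^{-1}$, exactly as in the paper's proof, so that the per-variable identities survive the merge into a single valuation.
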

\begin{proof}
Let $\fT$ be a compositional fvr-translation that preserves $\asim$, say by means of the semantic
counterpart $\bT\!:\bV\rightarrow\bV'$.
Let $\mathord{\bR}\subseteq\bV'\times\bV$ be the smallest relation containing $\bT^{-1}$ such that $\fT$
is correct w.r.t.\ $\bR$.\linebreak[3] By \lem{correct up to alpha} $\bT$ is correct w.r.t.\ $\bR^\alpha$.
I need to show that $\mathord{\bR^\alpha} \subseteq \mathord{\asim}$.

\emph{Claim:} If $\val'\bR \val$ then
$\val' \eqa_{\fL'} \denote{\fT(\E)}_{\fL'}(\bT\circ\nu)$ and $\denote{\E}_\fL(\nu) \eqa_\fL \val$ 
for some $E\in \IT_\fL$ and $\nu:\V\!\rightarrow\bV\!$.

Establishing the validity of this claim is sufficient, because by \df{preservation} and the
transitivity of $\asim$, using that $\mathord{\eqa_\fL},\mathord{\eqa_{\fL'}}\subseteq \mathord{\asim}$,
it immediately implies that $\mathord{\bR^\alpha} \subseteq \mathord{\asim}$.

I prove the claim with induction on the construction of $\bR$.

\emph{Induction base:} If $\val'\bR \val$ because $\val'=\bT(\val)$ the claim holds by taking $E:=X$
and $\nu(X)=\val$.

\emph{Induction step:} Let $\val' = \denote{\fT(\E)}_{\fL'}(\eta)$ and $\val = \denote{\E}_\fL(\rho)$
for some $\E\in \IT_\fL$, $\eta:\V\rightarrow\bV'$ and $\rho:\V\rightarrow\bV$ with $\eta \bR \rho$.
By induction one may assume that for each $X\in\fv(E)$
there are $E_X\in \IT_\fL$ and $\nu_X:\V\rightarrow\bV$ such that
$\eta(X)\eqa_{\fL'} \denote{\fT(E_X)}_{\fL'}(\bT\circ\nu_X)$ and $\denote{E_X}_\fL(\nu_X) \eqa_\fL \rho(X)$.
Let $(\sigma_X)_{X\in\fv(E)}$ be a family of injective renamings of variables with
$\dom(\sigma_X)=\fv(E_X)$, such that the sets $\range(\sigma_X):=\sigma_X(\fv(E_X))$ for
$X\in\fv(E)$ are pairwise disjoint.
Here I assume the pool of variables to draw from is large enough.
Note that $\denote{\sigma_X}_\fL(\nu_X \circ \sigma_X^{-1}) = \nu_X$
and $\denote{\sigma_X}_{\fL'}(\bT\circ \nu_X \circ \sigma_X^{-1}) = \bT\circ \nu_X$.
Therefore, using (\ref{inductive meaning}),
$\denote{\fT(E_X)}_{\fL'}(\bT\circ\nu_X) \eqa_{\fL'} \denote{\fT(E_X)[\sigma_X]}_{\fL'}(\bT\circ\nu_X\circ\sigma_X^{-1})$ and
$\denote{E_X[\sigma_X]}_\fL(\nu_X\circ\sigma_X^{-1}) \eqa_\fL \denote{E_X}_\fL(\nu_X)$.\vspace{1pt}
By the compositionality of $\fT$, $\fT(E_X)[\sigma_X]\eqa\fT(E_X[\sigma_X])$,
so $\eta(X)\eqa_{\fL'}\denote{\fT(E_X[\sigma_X])}_{\fL'}(\bT\circ\nu_X\circ\sigma_X^{-1})$.

Now let $\nu\!:\!\V\!\rightarrow\bV$ be a valuation such that
$\nu(Y)=\nu_X\circ\sigma_X^{-1}(Y)$ if $Y\in\fv(E_X[\sigma_X])$ for some $X\in\fv(E)$.
Furthermore, define the substitution $\sigma:\fv(E)\rightarrow\IT_\fL$ by $\sigma(X)=E_X[\sigma_X]$.
Then $\rho(X) \eqa_\fL \denote{\sigma(X)}_\fL(\nu)$ for all $X\in\fv(E)$, so
$\val\eqa_\fL \denote{E}_\fL(\denote{\sigma}_\fL(\nu))$ by \lem{alpha congruence}.
Likewise, $\eta(X) \eqa_{\fL'} \denote{\fT(\sigma(X))}_{\fL'}(\bT\circ\nu)$ for all $X\in\fv(E)\supseteq\fv(\fT(E))$,
so $\val'\eqa_{\fL'}\denote{\fT(E)}_{\fL'}(\denote{\fT\circ\sigma}_{\fL'}(\bT\circ\nu))$.
Now (\ref{inductive meaning}) yields $\val\eqa_\fL \denote{E[\sigma]}_\fL(\nu)$ and
$\val'\eqa_{\fL'}\denote{\fT(E)[\fT\circ\sigma]}_{\fL'}(\bT\circ\nu)$.
By compositionality, $\fT(E)[\fT\circ\sigma]\eqa\fT(E[\sigma])$,
from which it follows that $\val'\eqa_{\fL'}\denote{\fT(E[\sigma])}_{\fL'}(\bT\circ\nu)$.
\end{proof}

\begin{corollary}{preservation}
Let $\fL$ and $\fL'$ be languages satisfying Postulates~\ref{post:head} and~\ref{post:alpha},
and $\asim$ a preorder respecting $\eqa_\fL$ and $\eqa_{\fL'}$.
There exists a valid fvr-translation from $\fL$ into $\fL'$ up to $\asim$ iff there exists a
compositional fvr-translation from $\fL$ into $\fL'$ that preserves $\asim$.
\end{corollary}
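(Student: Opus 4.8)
The plan is to read this corollary off directly from three results already in hand: \obs{compositionality}, \pr{preservation}, and \thm{preservation is valid}. The substantive work is entirely contained in those statements---in particular in \thm{preservation is valid}---so what remains is to chain them together in the two directions of the biconditional, taking care only that their hypotheses coincide with the ones assumed here (Postulates~\ref{post:head} and~\ref{post:alpha} for both $\fL$ and $\fL'$, and $\asim$ respecting $\eqa_\fL$ and $\eqa_{\fL'}$).

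For the ``only if'' direction I would start from the existence of a valid fvr-translation up to $\asim$. Since the present hypotheses match those of \obs{compositionality}, that observation yields a \emph{compositional} fvr-translation $\fT$ that is still valid up to $\asim$. Then \pr{preservation}, which holds for any preorder and carries no side conditions, gives that $\fT$ preserves $\asim$. Hence $\fT$ is a compositional fvr-translation preserving $\asim$, exactly the right-hand side of the biconditional.

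For the ``if'' direction I would begin with a compositional fvr-translation $\fT$ that preserves $\asim$. Here \thm{preservation is valid} applies verbatim: its hypotheses are \pos{alpha} for $\fL$ and $\fL'$ together with $\asim$ respecting $\eqa_\fL$ and $\eqa_{\fL'}$, all of which are assumed. It therefore concludes that $\fT$ is valid up to $\asim$, and since $\fT$ is already an fvr-translation this is the required valid fvr-translation.

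I expect no genuine obstacle; the one point to watch is hypothesis matching. Note that \thm{preservation is valid} does not even invoke \pos{head}, so the backward direction uses strictly less than is assumed, whereas the forward direction does need \pos{head}, entering through \obs{compositionality} in order to manufacture compositionality from mere validity. In short, the corollary is a bookkeeping composition of existing results, with all the difficulty already absorbed into \thm{preservation is valid}.
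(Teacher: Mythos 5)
Your proof is correct and follows exactly the paper's own (implicit) argument: the forward direction is the sentence preceding \thm{preservation is valid}, chaining \obs{compositionality} with \pr{preservation}, and the backward direction is \thm{preservation is valid} applied to an fvr-translation. Your remark on hypothesis matching (\pos{head} is needed only for the forward direction) is also accurate.
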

There exist translations that are compositional and preserve $\asim$
but are not fvr-translations and not valid in the sense of \df{valid}.
Examples are hard to find and do not appear very natural. Moreover, major results I establish
about valid translations (Theorems~\ref{thm:congruence closure of translation} and~\ref{thm:pulling back congruences}) do not
generalise to such examples. For these reasons I prefer to exclude them from my definition of a valid translation.

\begin{example}{preserving but not valid}
Let $\fL$ be a language with constants \textbf{0} and \textbf{1}, and a semantics given by
$\bV\mathbin=\{a,b\}$, \mbox{$\textbf{0}^\bV\mathbin=a$} and $\textbf{1}^\bV=b$.
Let $\fL'$ be a language with a unary operator $f$, and a semantics given by
$\bV'=\{1,2,3,4\}$ and $f^{\bV'}(n)=n{+}1({\rm mod}~4)$.
Finally, let $\sim$ be the equivalence on $\bV\cup\bV'$ given by $a\sim 1 \sim 2 \not\sim 3 \sim 4 \sim b$.

The translation $\fT:\IT_\fL\rightarrow\IT_{\fL'}$ given by 
$\fT(\textbf{0})\mathbin=f(X_0)$ and $\fT(\textbf{1})\mathbin=f(f(f(X_0)))$ for some $X_0\in\V$,
and $\fT(X)\mathbin=X$ for all $X\mathbin\in\V$, is compositional by construction, and preserves $\asim$.
This is witnessed by its semantic counterpart $\bT:\bV\rightarrow\bV'$ given by $\bT(a)=1$ and $\bT(b)=4$.

Since $\T_{\fL'}=\emptyset$, there do not exists fvr-translations from $\fL$ to $\fL'$.
Up to symmetry, $\fT$ is in fact the only translation from $\fL$ to $\fL'$ that preserves $\asim$.
For considering that $f^4(n)=n$ and the choice of $X_0$ is immaterial,
any such translation $\fT_{k,\ell}$ must satisfy $\fT(\textbf{0})\mathbin=f^k(X_0)$ and $\fT(\textbf{1})\mathbin=f^\ell(X_0)$,
where $k,\ell\in\{0,1,2,3\}$.
The only choices for $\bT(a)$ are $1$ or $2$, and by symmetry one can pick $\bT(a)=1$.
Hence, to satisfy $\denote{\fT_{k,\ell}(a)}_{\fL'}(\bT\circ\rho) \asim \denote{a}_\fL(\rho)$ when $\rho(X_0)=a$,
one must take $k\in\{0,1\}$. To also satisfy this formula when $\rho(X_0)=b$, one must take $k\mathbin=1$ and $\bT(b)\mathbin=4$.
This forces $\ell \in \{2,3\} \cap \{3,0\}$, so $\fT_{k,\ell}\mathbin=\fT_{1,3}$.

However, there does not exists a translation from $\fL$ to $\fL'$ that is valid up to $\asim$.
For by symmetry, using \pr{preservation}, $\fT$ is the only candidate for such a translation.
Suppose $\bR$ is a semantic translation w.r.t.\ which $\fT$ is correct. Then $1 \bR a$ or $2 \bR a$.
Suppose $1 \bR a$. Take valuations $\rho,\eta$ with $\eta\bR\rho$, such that $\rho(X_0)=a$ and $\eta(X_0)=1$.
As $\fT$ is correct w.r.t.\ $\bR$, one has
$$2 = f^{\bV'}(1) = \denote{f(X_0)}_{\fL'}(\eta) = \denote{\fT(\textbf{0})}_{\fL'}(\eta) \bR \denote{\textbf{0}}_\fL(\rho) = a.$$
So one must have $2 \bR a$. Now take valuations $\rho,\eta$ with $\eta\bR\rho$, such that $\rho(X_0)=a$ and $\eta(X_0)=2$.
As $\fT$ is correct w.r.t.\ $\bR$, one has
$$3 = f^{\bV'}(2) = \denote{f(X_0)}_{\fL'}(\eta) = \denote{\fT(\textbf{0})}_{\fL'}(\eta) \bR \denote{\textbf{0}}_\fL(\rho) = a.$$
This contradicts the requirement that ${\bR}\subseteq{\sim}$.

\thm{pulling back congruences} does not extend to this translation.
For $\fL$ is a closed-term language and $\fL'$ trivially satisfies \pos{alpha} (for by lack of bound
variables $\eqa_{\fL'}$ is the identity). Any congruence $\approx$ for $\fL'$ contained in $\sim$
must distinguish all four semantic values. Suppose $\fT$ would preserve an equivalence $\approx_\fT$
on $\bV\cup\bV'$, contained in $\approx$, that on $\bV'$ coincides with $\approx$.
Let $\bT'$ be a semantic counterpart of $\fT$.
Then $\bT'(a)=1$ or $\bT'(a)=2$.
Suppose $\bT'(a)=1$. Take a valuation $\rho$ with $\rho(X_0)=a$.
Then
$$2 = f^{\bV'}(1) = \denote{f(X_0)}_{\fL'}(\bT'\circ\rho) = \denote{\fT(\textbf{0})}_{\fL'}(\bT'\circ\rho) \approx_\fT \denote{\textbf{0}}_\fL(\rho) = a.$$
So one has $2 \approx_\fT a \approx_\fT 1$, a contradiction. The case $\bT'(a)=2$ leads to a contradiction
in the same way.
Hence $\fT$ does not preserve such an equivalence $\approx_\fT$.
\end{example}

\section{Validity up to barbed bisimilarity for process calculi}\label{sec:barbed}

In this paper I defined a notion of a valid translation op to a semantic equivalence $\sim$
between general system description languages, dealing in \sect{closed-term} with the special case of
closed-term languages.
In this section I zoom in further on process calculi such as CCS \cite{Mi90ccs} and the
$\pi$-calculus \cite{SW01book},
and ask which equivalences $\sim$ to use in studying their relative expressiveness.
These languages have in common that their semantics can be given in terms of labelled transition
systems, whose states, called \emph{processes}, are the closed expressions in the language, and
whose transitions are given by an operational semantics in the style of Plotkin~\cite{Pl04}.

For a wide class of process calculi without name-binding, the finest equivalence in regular employ
is strong bisimilarity \cite{Mi90ccs}. In proving that process calculus $\fL'$ is at least as
expressive as process calculus $\fL$ up to a semantic equivalence $\sim$, the choice of 
strong bisimilarity for $\sim$ thus yields the strongest result.
Accordingly, Robert de Simone \cite{dS85} showed that a wide class of process calculi, including
CCS \cite{Mi90ccs}, CSP \cite{BHR84}, ACP \cite{BK86a} and SCCS~\cite{Mi83}, are expressible up to
strong bisimilarity in {\sc Meije}~\cite{AB84}.

However, strong bisimilarity is not suitable for comparing any two process calculi.
As an example, consider the $\pi$-calculus with the early operational semantics versus the 
$\pi$-calculus with the late operational semantics \cite{SW01book}.
Both operational semantics are meant to convey the same idea on how $\pi$-calculus processes
interact. Consequently, one would hope and expect that the identity function between these versions
of the $\pi$-calculus is a valid translation. However, it is not valid up to strong bisimilarity.
The late semantics has transitions labelled $x(y)$, where the name $y$ is a variable in which a
value received on channel $x$ will be stored, whereas the early semantics has transitions labelled
$xy$, where the name $y$ is a particular value received on channel $x$. Since these labels have a
different shape, and strong bisimilarity requires matching labels, validity up to strong
bisimilarity fails. Moreover, the problem cannot be resolved by simply renaming the labels, for the
entire meaning of input transitions is different.

In the $\pi$-calculus, \emph{strong early bisimilarity} \cite{SW01book} is a much more canonical
semantic equivalence than strong bisimilarity, and the identity function \emph{is} a valid
transition between the early and late $\pi$-calculi up to strong early bisimilarity. However, the
definition of early bisimilarity for the $\pi$-calculus with the late operational semantics
\cite{SW01book} has a very different form than the definition of early bisimilarity for the early
$\pi$-calculus \cite{SW01book}.  The reason is that the same idea needs to be phrased in terms of
rather different transition relations.  Consequently, taking strong early bisimilarity as a unifying
equivalence for comparing the late and early $\pi$-calculus appears to be somewhat ad hoc.
Moreover, neither strong bisimilarity nor strong early bisimilarity would be suitable to compare,
say, the $\pi$-calculus with CCS\@.

A canonical semantic equivalence that can be defined in a uniform way on CCS as well as on the late
and early $\pi$-calculi is \emph{strong barbed bisimilarity}, originally proposed by Milner and
Sangiorgi in~\cite{MilS92}.
It is based on the idea that a $\tau$-transition of a process $P$ describes an actual reaction
of the process $P$, whereas a translation labelled $a \neq \tau$ merely indicates a
potential reaction of $P$ when synchronising with a communication partner
that is willing to engage in the complementary transition $\bar a$. (Likewise, in CSP \cite{BHR84} a
transition $P\stackrel{a}\longrightarrow P'$ can be regarded as a potential; one that is not
realised when putting the process in a parallel composition involving synchronisation on the action
$a$ when the other component cannot partake in such a synchronisation. The only way to be sure
that the CSP action $a$ cannot be inhibited by any synchronisation context is to \emph{conceal} it
from the environment, using the CSP concealment operator $\_\!\_\, \backslash a$, renaming the
action $a$ into $\tau$.) From this perspective, it appears natural to formulate semantic
equivalences entirely in terms of the $\tau$-transitions processes can do, and capture the
communication potentials (manifested by the other transitions) merely by studying the behaviour of
processes in contexts.
Accordingly, \emph{(strong) reduction bisimilarity} was defined in \cite{MilS92} as the version of
strong bisimilarity that requires the transfer property for $\tau$-transitions only, i.e., strong
barbed bisimilarity as defined below, without the first clause, and reading
\plat{$\stackrel\tau\longrightarrow$} for $\rightarrow$.
Naturally, reduction bisimilarity is not a congruence: the CCS processes $a.\textbf{0}$ and
$\textbf{0}$ are reduction equivalent (for neither can do a $\tau$-transition), yet
$a.\textbf{0} | \bar a.\textbf{0}$ and $\textbf{0} | \bar a.\textbf{0}$ are not.
The purpose of the relation is to define a reasonable semantic equivalence as its congruence
closure. Indeed, for divergence-free CCS processes strong bisimilarity turned out to be the
congruence closure of reduction bisimilarity \cite{MilS92}.
Unfortunately, this failed for processes with divergences (infinite-sequences of
$\tau$-transitions) \cite{MilS92}, and when lifted to the weak case, weak reduction congruence turned
out to be the universal relation, and thus useless \cite{MilS92}.
For this reason, reduction bisimilarity needed to be strengthened.
The main idea is that processes like $a.\textbf{0}$ and $\textbf{0}$ can be told apart by placing
them in a context $\_\!\_ | \bar a.\omega$, where $\omega$ denotes a success state that might be reached
by the environment in which a process is placed. The process $a.\textbf{0} | \bar a.\omega$
can reach this success state by performing a $\tau$-transition, whereas $\textbf{0} | \bar a.\omega$ can not.
Writing $\sbarb{P}{\omega}$ to indicate that a process $P$ has reached this success state,
reduction bisimilarity can be strengthened by requiring that if $\sbarb{P}{\omega}$
then any process $Q$ equivalent to $P$ should also satisfy $\sbarb{Q}{\omega}$.
This yields the notion of strong barbed bisimilarity. The predicate $\sbarb{\_\!\_\,}{\omega}$ is
called a \emph{barb}. In general, one can use a collection of barbs.

Barbed bisimilarity is defined on closed-term languages $\fL$ that are equipped with a binary
\emph{reduction relation} ${\rightarrow}\subseteq\T_\fL\times\T_\fL$ between processes
and with a collection $\{ \sbarb{}{\omega} \subseteq \T_\fL \mid \omega \in\Omega\}$ of unary predicates on
processes, called \emph{barbs}.

\begin{definition}{sbb}
\emph{Strong barbed bisimilarity} is the largest symmetric relation ${\sbb}\subseteq \T_\fL \times \T_\fL$
such that
\begin{itemize}
\item $P \sbb Q$ and $\sbarb{P}{\omega}$ with $b\in\Omega$ implies $\sbarb{Q}{\omega}$, and
\item $P \sbb Q$ and $P \rightarrow P'$ implies $Q \rightarrow Q'$ for some $Q'$ with $P' \sbb Q'$.
\end{itemize}
\emph{Strong barbed congruence}, $\sbb^{1c}_\fL$, is the congruence closure of $\sbb$ w.r.t.\ the
language $\fL$.
\end{definition}
For process calculi equipped with a labelled transition system semantics, the reductions
$P \rightarrow P'$ are simply the $\tau$-transitions. However, for many languages it is possible to
give a \emph{reduction semantics} that generates the reductions directly, without first constricting
a labelled transition system. Some languages, such as the $\lambda$-calculus, come with a natural
reduction semantics, even when they have no labelled transition semantics at all.

The definition of the barbs on various process calculi is somewhat ad hoc.
In \cite{MilS92} only one barb was used, and a process has this barb when it can perform an
observable action, i.e., $P\sbarb{}{}$ iff \plat{$P \stackrel{a}\longrightarrow P'$} for some process $P'$
and action $a\neq \tau$. An alternative is to introduce a barb $a$ for each action $a \neq\tau$,
taking $P\sbarb{}{a}$ iff $P \stackrel{a}\longrightarrow P'$ for some $P'$.
Both forms of barbs lead to the same notion of strong barbed congruence.\vspace{2pt}
In the $\pi$-calculus, $x$ and $\bar x$ for $x\in\N$ are taken to be barbs (or just $\bar x$), with 
$P\sbarb{}{\omega}$ iff \plat{$P \stackrel{\omega(y)}\longrightarrow P'$} or
 \plat{$P \stackrel{\omega y}\longrightarrow P'$} for some $P'$---cf.\ \sect{asynpi}.
Again, including barbs $x\in\N\!$ or not makes no difference in the resulting notion of strong barbed congruence.
In fact, it does not matter at all how the barbs are defined, but only whether there are enough barbs, 
and how they propagate upwards through a context (that is, how the validity of $\sbarb{C[P]}{\omega}$ is
determined by the validity of $\sbarb{P}{\omega}$). This is because in determining whether two processes
$P$ and $Q$ are strongly barbed congruent, the barbs that help in this determination appear in
contexts in which the processes $P$ and $Q$ are placed, rather than in $P$ and $Q$ themselves.
For this reason, aiming for a fairly language-independent definition of barbed congruence, I propose
the use of \emph{external barbs} instead, defined as follows.
Postulate a sufficiently large collection $\Omega$ of barbs and add each of them as a fresh
constant to the language under consideration.  Require them to propagate upwards through a context
in the same way as $\tau$-transitions or reduction steps. Thus, whenever the structural operational
semantics of the language has a rule\vspace{-1.5ex}
$$\frac{P_i \rightarrow Q}{f(P_1,\dots,P_n) \rightarrow R}$$
for some $n$-ary operator $f$, possibly with \plat{$\stackrel\tau\longrightarrow$} in the r\^ole of
$\rightarrow$, then postulate a rule
$$\frac{\sbarb{P_i}{\omega}}{\sbarb{f(P_1,\dots,P_n)}{\omega}}$$
for each barb $\omega\in\Omega$. Note that this agrees exactly with \df{barbs} of (strong) output
barbs in the $\pi$-calculus.
It turns out that for CCS, the $\pi$-calculus, and in fact any language with a definition of
strong barbed congruence that I am aware of, the above use of external barbs yields the same notion
of strong barbed congruence as the original approach. It may be felt as a drawback that in defining
strong barbed congruence on a language $\fL$, $\fL$ needs to be extended, namely by the added
constants $\omega\in\Omega$. However, as a tool to define strong barbed congruence on $\fL$ this is
not a big problem, as the definition on the extended language naturally restricts to $\fL$.

The treatment of barbs proposed above is strongly inspired by the success action $\omega$ in the
treatment of \emph{testing equivalences} by De Nicola \& Hennessy \cite{DH84}, and by the criterion
of \emph{success sensitiveness} imposed by Gorla \cite{Gorla10a} on translations between process calculi.
Both \cite{DH84} and \cite{Gorla10a} use only a single success predicate (barb).
In \cite{Gorla10a} $\sbarb{P}{\omega}$ is defined to hold iff $P$ has a ``top-level unguarded occurrence''
of $\omega$. Gorla defines the latter concept only for languages that are equipped with a notion of
\emph{structural congruence} $\equiv$ as well as a parallel composition $|$. In that case
$P$ has a top-level unguarded occurrence of $\omega$ iff $P\equiv Q|\omega$, for some $Q$~\cite{Gorla10a}.
  Specialised to the $\pi$-calculus, a \emph{(top-level) unguarded} occurrence is one that not lays
  strictly within a subterm $\alpha.Q$, where $\alpha$ is $\tau$, $\bar xy$ or $x(z)$.
As far as I know, for all languages where Gorla's definition of $\sbarb{P}{\omega}$ as well as mine
apply, they yield the same result.

On CCS strong barbed congruence coincides with strong bisimilarity \cite{MilS92}.
On the $\pi$-calculus strong barbed congruence coincides with strong early congruence, the
congruence closure of strong early bisimilarity \cite{SW01book}.
This testifies to the success of the barbed bisimilarity approach to defining canonical semantic
equivalences for different process calculi in a uniform way. For many other calculi, strong barbed
congruence is the primary \emph{proposal} for a canonical semantic equivalence, and its explicit
characterisation in a manner that does not involve quantification over all context is a task that
merely follows.

This makes strong barbed bisimilarity an attractive candidate for the equivalence relation $\sim$ up
to which translations between process calculi are proven valid. \thm{congruence closure of translation}
says that if a translation between two process calculi is valid up to strong barbed bisimilarity,
then it is also valid up an equivalence that on the source language coincides with strong barbed
congruence, and on the image of the source language within the target language is also the
congruence closure of strong barbed bisimilarity.

For ``nonprompt encodings'' \cite{NestmannP00}, that ``allow administrative (or book-keeping) steps
to precede a committing step'', strong (barbed) bisimilarity is too fine an equivalence to relate
source processes and their translations. Here an appealing alternative is
\emph{weak barbed bisimilarity}, cf.\ \df{barbed congruence}.
On CCS weak barbed congruence coincides with weak bisimilarity as defined in \cite{Mi90ccs}.
On the $\pi$-calculus weak barbed congruence coincides with weak early congruence as used in
\cite{SW01book}, at least for image finite processes, or in general when allowing infinite sums in
the $\pi$-calculus \cite{SW01}. \sect{asynpi} mentioned Boudol's encoding of $\pi$ into $a\pi$ as
an example of a transition valid up to weak barbed bisimilarity.

In many situations one would reject translations between process calculi, that introduce (or
eliminate) divergences. This can be captured in my framework by requiring validity op to
\emph{weak (barbed) bisimilarity with explicit divergence} \cite{BKO87}.
It adds to \df{barbed congruence} the clause
\begin{itemize}
\item $P \wbb Q$ and $P{\Uparrow}$ implies $Q{\Uparrow}$.
\end{itemize}
Here $P{\Uparrow}$ denotes that there are $P_i$ for $i\geq 0$ such that $P=P_0$ and
$P_i\rightarrow P_{i+1}$ (or $P_i\stackrel\tau\longrightarrow P_{i+1}$) for all $i\geq 0$.
Using Remark 1, Theorem 2, Observation 3 and Corollary 3 in \cite{vG18a}, the proof of
Theorem 2 in \cite{LMGG18} shows that Boudol's encoding of $\pi$ into $a\pi$ is even valid up to
weak barbed bisimilarity with explicit divergence.

In most industrial applications of process calculi the r\^ole once played by weak bisimilarity has
largely been taken over by branching bisimilarity \cite{GW96}---cf.\ \cite{GLMS11,CranenEtAl13}---of
which both a default version and two versions with explicit divergence are in use \cite{GLT09b,FGL17a}.
The following proposal captures these through the barbed bisimilarity methodology:

\begin{definition}{bbb}
\emph{Divergence-preserving branching barbed bisimilarity} is the largest symmetric relation ${\bbbis}\subseteq \T_\fL \times \T_\fL$
such that
\begin{itemize}
\item $P \bbbis Q$ and $\sbarb{P}{\omega}$ with $\omega\in\Omega$ implies $Q \Longrightarrow Q^\dagger$ for
  some $Q^\dagger$ with $P \bbbis Q^\dagger$ and $\sbarb{Q^\dagger}{\omega}$,
\item $P \bbbis Q$ and $P \rightarrow P'$ implies $Q \Longrightarrow Q^\dagger \mathrel{({\rightarrow})} Q'$
  for some $Q^\dagger,Q'$ with $P \bbbis Q^\dagger$ and $P' \bbbis Q'$, and
\item $P \bbbis Q$ and $P \rightarrow P_1 \rightarrow P_2 \rightarrow \dots$ implies
  $Q \rightarrow Q'$ for some $Q'$ with $P_k \bbbis Q'$ for some $k\geq 0$.
\end{itemize}
\end{definition}
Here $P \mathrel{({\rightarrow})} Q$ abbreviates $(P=Q) \vee (P \rightarrow Q)$.
See \cite{GLT09b} for a number of equivalent versions of the last clause,
i.e.\ with $Q \Longrightarrow\rightarrow Q'$.
For \emph{weakly divergence-preserving branching barbed bisimilarity} the last clause is weakened to
``$P \bbbis Q$ and $P{\Uparrow}$ implies $Q{\Uparrow}$'', just as for weak barbed bisimilarity with
explicit divergence above. Skipping the last clause altogether yields \emph{branching barbed bisimilarity}.
On CCS, the congruence closures of these equivalences yield
\emph{((weakly) divergence-preserving) branching bisimilarity}, as defined in the literature \cite{FGL17a};
the proofs are not essentially different from the ones for strong and weak barbed bisimilarity.
Using the remark below Observation 3 in \cite{vG18a}, the proof of Theorem 2 in \cite{LMGG18}
can be adapted to show that Boudol's encoding of $\pi$ into $a\pi$ is even valid up to
divergence-preserving branching barbed bisimilarity.

\section{Related work: full abstraction}\label{sec:full abstraction}

The concept of \emph{full abstraction} stems from Milner \cite{Mi75}.
It indicates a particularly nice connection between a denotational and an operational semantics of a
language $\fL$. Here a denotational semantics is a function $\denote{\ \ }_\fL$ as introduced in
\sect{validity}, whereas an operational semantics is given by an \emph{evaluation function}
$\mathcal{E}:\T_\fL \rightarrow\IO$ from the closed terms, there called \emph{programs}, to a set
$\IO$ of \emph{observations}. Evaluation determines an equivalence relation on programs:
$\p \sim_{\mathcal{E}} Q$ iff $\mathcal{E}(\p) = \mathcal{E}(\p)$. Let $\sim_{\mathcal{E}}^{1c}$
be the congruence closure of $\sim_{\mathcal{E}}$ for the language $\fL$, as defined in
\sect{congruence closure}.

\begin{definitionc}{full abstraction}{\cite{Mi75}}
The semantic function $\denote{\ \ }_\fL$ for $\fL$ is \emph{fully abstract} w.r.t.\ $\mathcal{E}$
iff for all $\p,Q\in\T_\fL$
\[\denote{\p}_\fL = \denote{Q}_\fL ~~~\Leftrightarrow~~~ \p \sim_{\mathcal{E}}^{1c} Q\,.\]
\end{definitionc}
A semantic function $\denote{\ \ }_\fL$ always induces an equivalence relation on $\T_\fL$ by
$\p \approx_\fL Q$ iff $\denote{\p}_\fL = \denote{Q}_\fL$. When this semantics is defined inductively
as indicated in \sect{closed-term}, $\approx_\fL$ must be a congruence. Now $\denote{\ \ }_\fL$ is
fully abstract w.r.t.\ $\mathcal{E}$ iff ${\approx_\fL} = {\sim_{\mathcal{E}}^{1c}}$.

Note that any equivalence relation $\sim_{\mathcal{E}}$ can be extracted from an evaluation
function $\mathcal{E}:\T_\fL \rightarrow\IO$, namely by taking $\IO$ to be the set of 
$\sim_{\mathcal{E}}$-equivalence classes of closed terms, with $\mathcal{E}$ mapping each
$p\in\T_\fL$ to its own equivalence class. Likewise, each congruence relation $\approx$ on $\T_\fL$
can be obtained from a semantics $\denote{\ \ }_\fL$: take $\bV$ to be the set of
$\approx$-equivalence classes of closed terms, and for $E\in\IT_\fL$ and $\rho:\fL\rightarrow\bV$ define
$\denote{E}_\fL(\rho)$ by to be the $\approx$-equivalence class of $E[\sigma]$,
where $\sigma:\V\rightarrow\T_\fL$ is a closed substitution that maps each variable $X$ to a member of the
$\approx$-equivalence class of closed terms $\rho(X)$. Since $\approx$ is a congruence, this
definition is independent of the choice of $\sigma$.

Consequently, full abstraction can equally well be stated as a relation between two equivalence
relations $\approx$ and $\sim$ on $\T_\fL$:\vspace{-2ex}
\begin{center}
$\approx$ is \emph{fully abstract} w.r.t.\ $\fL$ and $\sim$ ~~~iff~~~ ${\approx} = {\sim_{\fL}^{1c}}$.
\end{center}
It is in this spirit that full abstraction has been employed in \cite{vG93d} and subsequent papers.

\newcommand{\So}{{\rm S}}
\newcommand{\Ta}{{\rm T}}
Riecke \cite{Rie91} and Shapiro \cite{Sha91} extend the notion of full abstraction to translations
between languages.
Riecke \cite{Rie91} compares languages $\fL_\So$ and $\fL_\Ta$ with a shared evaluation function
$\mathcal{E}:\T_{\fL_\So} \cup \T_{\fL_\Ta} \rightarrow 2^{\IOs}$, associating with each closed expression
a set of observations.
Write $\p \sqsubseteq^\mathcal{E} Q$, for $\p,Q\in\T_{\fL_\So} \cup \T_{\fL_\Ta}$, if
$\mathcal{E}(\p) \subseteq \mathcal{E}(Q)$, and let $\sqsubseteq^\mathcal{E}_{\fL_i}$ be the
congruence closure of $\sqsubseteq^\mathcal{E}$ w.r.t.\ $\fL_i$, for $i=\So,\Ta$ (source and target).
He calls a translation $\fT : \IT_{\fL_{\rm S}} \rightarrow \IT_{\fL_{\rm T}}$ fully abstract iff,
for all $\p, Q \in \T_{\fL_\So}$,
\[ \p \sqsubseteq^\mathcal{E}_{\fL_\So} Q ~~~\Leftrightarrow~~~ \fT(\p) \sqsubseteq^\mathcal{E}_{\fL_\Ta} \fT(Q) \,.\]
The same notion occurs earlier in Mitchell \cite{Mitchell93}, although not under the name ``full abstraction'',
and using equivalence relations instead of preorders---taking $\p \sim^\mathcal{E} Q$ iff
$\mathcal{E}(\p) = \mathcal{E}(Q)$. He compares the expressive power of programming languages in
terms of \emph{abstraction preserving reductions} between them. These are translations that are
compositional as well as fully abstract in the above sense.
Later work abstracts from the evaluation function $\mathcal{E}$, and casts full abstraction directly
in terms of equivalences $\sim^\So$ and $\sim^\Ta$ on the source and target languages \cite{San93}.

Felleisen \cite{Fel91} compares the expressive power of programming languages through a notion of
\emph{eliminability} of language constructs. With some effort, this approach can be seen to have
significant similarities with the approach of Mitchell \cite{Mitchell93}, although it allows certain
degenerate reductions \cite{Mitchell93}.

Whereas most work on expressiveness deals with closed-term languages, allowing a focus on syntax
over semantics, Shapiro \cite{Sha91} works entirely on the semantic side, leaving the syntax largely implicit.
His languages are triples $(\bV,\fL,\sim)$, consisting of a semantic domain $\bV$, a collection
$\fL$ of partial operators on $\bV$, and a semantic equivalence $\sim$ on $\bV$.
A \emph{language embedding} of one such language $(\bV_\So,\fL_\So,\sim^\So)$ into another $(\bV_\Ta,\fL_\Ta,\sim^\Ta)$
is defined to be a homomorphism of the partial algebra $(\bV_\So,\fL_\So)$ into the partial algebra $(\bV_\Ta,\fL_\Ta)$.
Recasting this definition in terms of my framework, this is a function $\bR:\bV_\So\rightarrow\bV_\Ta$
such that there exists a compositional translation $\fT:\fL_1\rightarrow\fL_2$ correct w.r.t.\ $\bR$
(as in \df{correct R}). For $i=\So,\Ta$ let $\sim^i_{\fL_i}$ be the congruence closure of
$\sim^i$ w.r.t.\ $\fL_i$---Shapiro calls this the \emph{fully-abstract} congruence of $\fL_i$. Then
a language embedding is deemed fully abstract iff, for all $\val,\wal\in\bV_\So$,
$$ \val \sim^\So_{\fL_\So} \wal ~~~\Leftrightarrow~~~ \fT(\val) \sim^\Ta_{\fL_\Ta} \fT(\wal) \,.$$
In \cite{Sha92} these fully abstract language embeddings are used to classify a number of concurrent
programming languages on expressive power.

In Nestmann \& Pierce \cite{NestmannP00} the notion of full abstraction was generalised by dropping the
requirement that the source and target equivalences being compared need to be congruence closures.

\begin{definition}{Full Abstraction}
A translation $ \fT \!: \IT_{\fL_\So} \rightarrow \IT_{\fL_\Ta} $ is \emph{fully abstract} w.r.t.\ the equivalences
$ {\sim_\So} \subseteq \T_{\fL_\So}^2 $ and $ {\sim_\Ta} \subseteq \T_{\fL_\Ta}^2 $ if, for all
$ \p, Q \in \T_{\fL_\So} $,\vspace{-2ex}
$$ \p \sim_\So Q ~~~\Leftrightarrow~~~ \fT(\p) \sim_\Ta \fT(Q) \,.$$
\end{definition}
In this form, full abstraction has found widespread applications \cite{Nestmann00,BPV05}.
Fu~\cite{Fu16}, for instance, bases a theory of expressiveness on full abstraction, with
divergence-preserving branching barbed bisimilarity in the r\^ole of $\sim_\So$ and $\sim_\Ta$.

As stressed in \cite{GN16,Parrow16}, the notion of full abstraction is meaningful only in relation
to a well-chosen pair of source and target equivalences, and only in combination with a criterion
like compositionality.
In particular, for each encoding $\fT$ and each target equivalence $ \sim_\Ta $ there
exits a source term equivalence $ \sim_\So $, namely $\{(\p, Q) \mid \fT(\p) \sim_\Ta \fT(Q)\}$,
such that $\fT$ is fully abstract w.r.t.\ $\sim_\So$ and $\sim_\Ta $.
For each injective encoding $\fT$ and each source term relation $ \sim_\So$, there exits
$ {\sim_\Ta} \subseteq \T_{\fL_\Ta}^2 $, namely $\{ (\fT(\p),\fT(Q)) \mid \p\sim_\So Q\}$,
such that $\fT$ is fully abstract w.r.t.\ $\sim_\So$ and $\sim_\Ta $.
Finally, for each pair $ \sim_\So $ and $\sim_\Ta $ such that the cardinality of
$\T_{\fL_\Ta}/_{\sim_\Ta}$ greater than or equal to the cardinality of $\T_{\fL_\So}/_{\sim_\So}$
there exists a translation from $\fL_\So$ to $\fL_\Ta$ that is fully abstract w.r.t.\ $\sim_\So$ and $\sim_\Ta $.

Naturally, any translation that is valid up to an equivalence $\sim$ as in \df{valid} of the current
paper is also fully abstract, namely w.r.t.\ the same equivalence $\sim$ in the r\^ole of both
$ \sim_\So $ and $\sim_\Ta $. Furthermore, validity entails compositionality through \cor{compositionality}.
Both full abstraction and validity imply an injective translation from the 
$\sim_\So$-equivalence classes of closed source terms to the $\sim_\Ta$-equivalence classes of
closed target terms. However, validity demands that this link between source and target terms is
again governed by $\sim$, whereas full abstraction implies no counterpart to this crucial requirement.

A typical example of a valid transition is the encoding of the synchronous into the asynchronous
$\pi$-calculus described in \sect{asynpi}. This encoding is valid up to weak barbed bisimilarity $\wbb$.
Consequently is it also fully abstract w.r.t.\ $\wbb$ and $\wbb$ according to \df{Full Abstraction}.
However, it is not fully abstract w.r.t.\ $\wbb$ and $\wbb$ in the more original sense of
Shapiro~\cite{Sha91} and others, due to the fact that $\wbb$ is not a congruence for either the
source or the target language. By \thm{congruence closure of translation} the same translation is
also fully abstract w.r.t\ the congruence closure of $\wbb$ on the source language, and a somewhat artificial
equivalence on the target language that is the congruence closure of $\wbb$ under translated source contexts.
Although closer, this is still not a full-abstraction result according to Shapiro, as the latter
equivalence fails to be a congruence for all of the target language. Finally,
by \thm{pulling back congruences}, the same encoding is furthermore fully abstract w.r.t.\ the
congruence closure of $\wbb$ on the target language, and an artificial congruence on the source
language that is strictly finer than the congruence closure of $\wbb$ on the source language.
This is a full abstraction result in the framework of \cite{Sha91}. However, in line with the
observations of \cite{GN16,Parrow16}, the latter two full abstraction results should not be regarded
as lending additional credibility to this particular encoding of the synchronous into the asynchronous
$\pi$-calculus.
Since Theorems~\ref{thm:congruence closure of translation} and~\ref{thm:pulling back congruences}
hold in great generality, these full abstraction results are merely consequences of the validity of
the encoding up to $\wbb$.

It is well known that Boudol's encoding---described in \sect{asynpi}---of the synchronous into the
asynchronous $\pi$-calculus fails to be fully abstract w.r.t.\ $\cong^c$ and $\cong^c_a$.
Here $\cong^c$ is the congruence closure of $\wbb$ on the source language, and
$\cong^c_a$ the congruence closure of $\wbb$ on the target language.
A counterexample was given at the end of \sect{ccproperty}.
The same analysis applies to a similar encoding proposed by Honda \& Tokoro \cite{HT91}.
In \cite{DYZ18} this problem is addressed by proposing a strict subcalculus $SA\pi$ of the 
target language that contains the image of the source language under of a version Honda
\& Tokoro's encoding, such that this encoding is fully abstract w.r.t.\ $\cong^c$ and
the congruence closure of $\wbb$ w.r.t.\ $SA\pi$.
In \cite{QW00} a similar solution to the same problem was found earlier, but for a variant of
Boudol's encoding from the \emph{polyadic} $\pi$-calculus to the (monadic) asynchronous
$\pi$-calculus.  They defined a class of \emph{well-typed} expressions in the asynchronous
$\pi$-calculus, such that the well-typed expressions constitute a subcalculus of the target language
that contains the image of the source language under the encoding. Again, the encoding is fully
abstract w.r.t.\ $\cong^c$ and the congruence closure of $\wbb$ w.r.t.\ that sublanguage.
By \thm{congruence closure of translation} such results can always be achieved, namely by taking
as target language exactly the image of the source language under the encoding.
What the results of \cite{QW00,DYZ18} add is that the sublanguage may be strictly larger than the
image of the source language, and that its definition is not phrased in terms of the encoding.

\section{Related work: validity of encodings according to Gorla}\label{sec:related}

In the last twenty years, a great number of encodability and separation results have
appeared, comparing CCS, Mobile Ambients, and several versions of the $\pi$-calculus (with
and without recursion; with mixed choice, separated choice or asynchronous)
\cite{
San93,           
Jen94,           
Boreale98,        
Parrow00,       
NestmannP00,      
Palamidessi03,  
Nestmann00,       
CardelliG00,    
CardelliGG02,   
BusiGZ09,       
CarboneM03,       
BPV04,            
BPV05,            
Palamidessi05,  
PalamidessiSVV06, 
PV06,           
CCP07,          
VPP07,          
CCAV08,         
HMP08,          
PV08,           
VBG09,          
EPTCS160.2,      
EPTCS190.5,      
PN16,            
PP16
};     
see \cite{Gorla10b,Gorla10a} for an overview. Many of these results employ different and
somewhat ad hoc criteria on what constitutes a valid encoding, and thus are hard to
compare with each other. Several of these criteria are discussed and compared in \cite{Nestmann06},
\cite{Parrow08} and \cite{Peters12}. Gorla \cite{Gorla10a} collected some essential features of
these approaches and integrated them in a proposal for a valid encoding that justifies
most encodings and some separation results from the literature.
Since then, several authors have used Gorla's framework as a basis for
establishing new valid encodings and separation results
\cite{Gorla10b,LPSS10,PSN11,PN12,PNG13,GW14,EPTCS160.4,EPTCS189.9,GWL16}.

Like Boudol \cite{Bo85} and the present paper, Gorla requires a compositionality condition
for encodings. However, his criterion differs on two counts from mine.
It is stronger in that the expression $E_f$ encoding an operator $f$ may use each variable
corresponding to an argument of $f$ only once. A translation defined (in part) by
$\fT(f(P)) = g(\fT(P)\|\fT(P))$ for instance can be compositional according to
\df{compositionality}, but not according to Gorla's definition. It is weaker than mine in
that the expression $E_f$ encoding an operator $f$ may be dependent on the set of names
occurring freely in the expressions given as arguments of $f$. This issue is further discussed in \cite{vG12}.
It is an interesting topic for future research to see if there are any valid encodability
results \`a la \cite{Gorla10a} that suffer from my proposed strengthening of compositionality.

The second criterion of \cite{Gorla10a} is a form of invariance under name-substitution.
It serves to partially undo the effect of making the compositionality requirement
name-dependent. In my setting I have not yet found the need for such a condition.
In \cite{vG12} I argue that this criterion as formalised in \cite{Gorla10a} is too restrictive.

The remaining three requirements of Gorla (the `semantic' requirements) are very close to an
instantiation of mine with a particular preorder $\asim$.
If one takes $\asim$ to be weak barbed bisimilarity with explicit divergence
(i.e.\ relating divergent states with divergent states only---see \sect{barbed}), using external barbs
as defined in \sect{barbed}, then any valid translation in my sense satisfies Gorla's semantic
criteria, provided that the equivalence $\asymp$ on the target language that acts as a parameter in
Gorla's third criterion is also taken to be weak barbed bisimilarity with explicit divergence.
The precise relationships between the proposals of \cite{vG12} and \cite{Gorla10a} are further
discussed in \cite{EPTCS190.4}.

Further work is needed to sort out to what extent the two approaches have
relevant differences when evaluating encoding and separation results from the literature.
Another topic for future work is to sort out how dependent known encoding and separation
results are on the chosen equivalence or preorder. 

\bibliographystyle{eptcs}
\bibliography{$HOME/Stanford/lib/abbreviations,$HOME/Stanford/lib/dbase,glabbeek,$HOME/Stanford/lib/new,pi}

\end{document}

Notation:
\IT        set of open terms
\T         set of closed terms
\fL        language
\fT        translation
\D         set of denotable objects
E,F,G      expression or open term
t_i,u_i     more open terms
f          operator in a language
n          arity of f
i          typical index
\val,\wal  value in domain of interpretation
X,Y        variables
\V         set of variables
C          collection of congruences
P          (closed) process expression
\bR        semantic translation
\bT        semantic counterpart of a translation
\bU        common subdomain
\bV        domain of values
\bW        subdomain
\bZ        unifying domain
\rho,\nu,\zeta:   \V->\bV
\eta,\theta:      \V->\bU,\bV'
\sigma,\xi(,\nu): \V->\T_\fL
________From encodings________
\bD        (quotient) domain
\bC        (quotient) domain
\bW        domain of values
G          process graph
G_P        process graph of P
S          set of states
Act        alphabet of actions